\newif\ifIEEE
\newif\ifSIAM
\theoremstyle{plain}
\newtheorem{theorem}{Theorem}
\newtheorem{lemma}{Lemma}
\newtheorem{corollary}{Corollary}
\newtheorem{conjecture}{Conjecture}
\newtheorem{claim}{Claim}
\theoremstyle{definition}
\newtheorem{definition}{Definition}
\author{ 
	Arsalan Sharifnassab$^\dagger{} ^*$
	\thanks{$\dagger$ A. Sharifnassab is with the Department of Computing Science, University of Alberta, Edmonton, Canada; email: sharifna@ualberta.ca 
	}
	and John N. Tsitsiklis$^\ddagger$
	\thanks{$\ddagger$ J. N. Tsitsiklis is with the Laboratory for Information and Decision Systems, Dept.\ of Electrical Engineering and Computer Science, MIT, Cambridge MA, 02139, USA; email: jnt@mit.edu}
	\thanks{$*$ This work was partially carried out while A.\ Sharifnassab was a visitor at the Laboratory for Information and Decision Systems, MIT, Cambridge MA, 02139, USA; and partially when he was a postdoctoral associate at the Department of Electrical Engineering, Sharif University of Technology, Tehran, Iran.}
}
\author{
	Arsalan Sharifnassab\thanks{Department of Computing Science, University of Alberta, Edmonton, Canada (\email{sharifna@ualberta.ca}) 
}
	\and John N. Tsitsiklis\thanks{Laboratory for Information and Decision Systems, Electrical Engineering and Computer Science Department, MIT, Cambridge MA, 02139, USA 
		(\email{jnt@mit.edu}).}
	}
\title{Jumping Fluid Models and Delay Stability of Max-Weight Dynamics under Heavy-Tailed Traffic}
\renewenvironment{proof}{\emph{Proof}:\,}{$\quad\square$}
\newif\ifCommentsOn
\definecolor{fxwarning}{rgb}{0.8000,0.0000,0.0000} 
\renewcommand{\comment}[1]{ \textcolor{red}{\bf{[[}}\fxwarning{#1}\textcolor{red}{\bf{]]}} }
\renewcommand{\comment}[1]{}
\newcommand{\citep}[1]{\cite{#1}}
\newcommand{\argmax}[1]{\underset{#1}{\operatorname{argmax}}}
\newcommand{\argmin}[1]{\underset{#1}{\operatorname{argmin}}}
\renewcommand{\argmax}[1]{\underset{#1}{\operatorname{argmax}}}
\renewcommand{\argmin}[1]{\underset{#1}{\operatorname{argmin}}}
\newcommand{\hide}[1]{}
\renewcommand{\hat}[1]{\widehat{#1}}
\renewcommand{\bar}[1]{\overline{#1}}
\renewcommand{\tilde}[1]{\widetilde{#1}}
\newcommand{\E}[1]{\mathbb{E}\left[#1\right]} 
\newcommand{\Exp}{\mathbb{E}}
\newcommand{\one}{\mathbbm{1}}
\newcommand{\R}{\mathbb{R}}
\newcommand{\Z}{\mathbb{Z}}
\newcommand{\nn}{\ell}
\newcommand{\gammast}{\gamma}
\newcommand{\Au}{\bar{A}}
\newcommand{\lamax}{\|\lamvec^*\|+\epsilon}
\newcommand{\mumax}{\bar\mu}
\newcommand{\lambarmax}{\bar{\lambda}}
\newcommand{\Mthree}{M_3}
\newcommand{\Mb}{\bar{M}}
\newcommand{\Drift}{\bar{\mathcal{D}}}  
\newcommand{\conv}{\mathrm{conv}}
\newcommand{\event}{\mathcal{E}}
\renewcommand{\S}{{\mathcal{M}}}
\newcommand{\Sbar}{\bar{\mathcal{M}}}  
\def\N{{\mathcal{N}}}
\newcommand{\vvec}{\mathbf{v}}
\newcommand{\tet}{\theta}
\newcommand{\Tet}{\Theta}
\newcommand{\Bvec}{\mathbf{B}}
\newcommand{\bvec}{\mathbf{b}}
\newcommand{\Avec}{\mathbf{A}}
\newcommand{\Qvec}{\mathbf{Q}}
\newcommand{\qvec}{\mathbf{q}}
\newcommand{\xvec}{\mathbf{x}}
\newcommand{\nvec}{\mathbf{n}}
\newcommand{\Jvec}{\mathbf{J}}
\newcommand{\yvec}{\mathbf{y}}
\newcommand{\uvec}{\mathbf{u}}
\newcommand{\Uvec}{\mathbf{U}}
\newcommand{\pvec}{\mathbf{p}}
\newcommand{\Nvec}{\mathbf{N}}
\newcommand{\lamvec}{{\boldsymbol{\lambda}}}
\newcommand{\muvec}{{\boldsymbol{\mu}}}
\newcommand{\nuvec}{{\boldsymbol{\nu}}}
\newcommand{\evec}{{\boldsymbol{e}}}
\newcommand{\xivec}{{\boldsymbol{\xi}}}
\newcommand{\gamvec}{{\boldsymbol{\gamma}}}
\newcommand{\equals}{=}
\newcommand{\dd}{d}
\newcommand{\Bev}{\mathcal{B}}
\newcommand{\notimply}{\makebox[3.7pt][l]{$\implies$}\not \phantom{aA}}
\newcommand{\F}{{\rm F}}
\newcommand{\rr}{r}
\def\Aclass{{\mathcal{A}}}
\renewcommand{\kappa}{k}
\newcommand\redsout{\bgroup\markoverwith{\textcolor{red}{\rule[0.5ex]{2pt}{.5pt}}}\ULon}
\newcommand\bluesout{\bgroup\markoverwith{\textcolor{blue}{\rule[0.5ex]{2pt}{.5pt}}}\ULon}
\def\jP{\mathbb{P}}
\def\vn{{\boldsymbol n}}
\def\vN{{\boldsymbol N}}
\def\vg{{\boldsymbol \gamma}}
\renewcommand{\Pr}{\mathbb{P}}
\begin{document}

\ifSIAM
\vspace{5pt}

\maketitle

\vspace{10pt}
\noindent
{\bf Abstract.}\ 
We say that a random variable is  \emph{light-tailed} if moments of order $2+\epsilon$ are finite for some $\epsilon>0$; otherwise, we say that it is \emph{heavy-tailed}. We study queueing networks that operate under the Max-Weight scheduling policy, for the case where some queues receive heavy-tailed and some receive light-tailed traffic.
Queues with light-tailed arrivals are  often delay stable (that is, expected queue sizes are uniformly bounded over time) but can also become delay unstable because of resource-sharing with other queues that receive heavy-tailed arrivals. 

Within this context, and for any given ``tail exponents'' of the input traffic, we develop a necessary and sufficient condition under which a queue is robustly delay stable, in terms of \emph{jumping fluid}  models---an extension of  traditional fluid models that allows for jumps along coordinates associated with heavy-tailed flows. Our result elucidates the precise mechanism that leads to delay instability, through a coordination of multiple abnormally large arrivals at possibly different times and queues and settles an earlier open question on the sufficiency of a particular fluid-based criterion.  Finally, we explore the power of Lyapunov functions in the study of delay stability.

\vspace{20pt}
\else 

\TITLE{Jumping Fluid Models and Delay Stability of Max-Weight Dynamics under Heavy-Tailed Traffic}

\ARTICLEAUTHORS{%
	\AUTHOR{Arsalan Sharifnassab}
	\AFF{Department of Electrical Engineering, Sharif University of Technology, Tehran, Iran, \EMAIL{a.sharifnassab@gmail.com}} 
\AUTHOR{John N. Tsitsiklis}
\AFF{Laboratory for Information and Decision Systems, Electrical Engineering and Computer Science Department, MIT, Cambridge MA, 02139, USA , \EMAIL{jnt@mit.edu}}
} 

\ABSTRACT{
	We say that a random variable is  \emph{light-tailed} if moments of order $2+\epsilon$ are finite for some $\epsilon>0$; otherwise, we say that it is \emph{heavy-tailed}. We study queueing networks that operate under the Max-Weight scheduling policy, for the case where some queues receive heavy-tailed and some receive light-tailed traffic.
	Queues with light-tailed arrivals are  often delay stable (that is, expected queue sizes are uniformly bounded over time) but can also become delay unstable because of resource-sharing with other queues that receive heavy-tailed arrivals. 
	
	Within this context, and for any given ``tail exponents'' of the input traffic, we develop a necessary and sufficient condition under which a queue is robustly delay stable, in terms of \emph{jumping fluid}  models---an extension of  traditional fluid models that allows for jumps along coordinates associated with heavy-tailed flows. Our result elucidates the precise mechanism that leads to delay instability, through a coordination of multiple abnormally large arrivals at possibly different times and queues and settles an earlier open question on the sufficiency of a particular fluid-based criterion.  Finally, we explore the power of Lyapunov functions in the study of delay stability.
	}


\maketitle

\fi

\section{Introduction} 
We say that a random variable is  \emph{light-tailed} if moments of order $2+\epsilon$ are finite for some $\epsilon>0$; otherwise, we say that it is
\emph{heavy-tailed}.
 We study queueing networks that  operate under the Max-Weight scheduling policy, for the case where some queues receive heavy-tailed traffic, while some other queues receive light-tailed traffic; our motivation stems from  the fact that heavy-tailed processes are often natural models of the inputs to computer and communications networks \citep{FossKZ11}. Queues that receive heavy-tailed traffic are naturally \emph{delay unstable,} that is, they incur infinite expected delay, as an immediate consequence of the Pollaczek–-Khintchine formula.  
 However, it is also known that due to the relatively complex Max-Weight dynamics, some of the queues that receive light-tailed traffic may also end up delay unstable\footnote{This phenomenon can arise under other scheduling policies as well, e.g., the generalized processor sharing policy \citep{BorsBJ03}.}. 
Our aim is to develop conditions that 
determine whether any particular queue  is delay stable or not.

This problem has been  studied extensively  \citep{Mark13, MarkMT14,  MarkMT18, NairJW15}, and 
a necessary condition for delay stability was given in \citep{MarkMT14, MarkMT18}. In particular,
 \cite{MarkMT18} considered the associated fluid model, initialized at zero, except for a positive initial condition at some queue that receives heavy-tailed arrivals.  If another queue happens to eventually become positive under that fluid model, one can then conclude  that the latter queue is delay unstable. 
This result led to the natural question whether this condition is also sufficient, i.e., whether
delay stability is guaranteed when any such fluid trajectory (with a positive initialization at 
any single one of the queues that receive heavy-tailed traffic) keeps the queue of interest at zero level.
Such a sufficiency result might appear plausible because
in models involving heavy-tailed random variables,  large fluctuations are often the consequence of a \emph{single} abnormally large value  in the underlying heavy-tailed random variables \citep{Pake75, Vera77, Anan9l, Asmu96, Durr80, FossKZ11}.

\subsection{{Our Contributions}}\label{ss:results}
We start in Section~\ref{sec:example} by showing that the above mentioned 
possible sufficiency result does not hold. We accomplish this by providing a fairly simple example in which a large arrival at any single heavy-tailed queue does not cause a certain queue of interest to grow, but a combination of two large arrivals, at two different heavy-tailed queues, can result in large backlogs at the queue of interest. 
We also provide necessary and sufficient conditions for delay stability in that particular example, which provide intuition for a possible more general result. Interestingly, delay instability manifests itself only when the
 tail exponents (to be defined precisely in Section~\ref{subsec:Heavy-tailed and exponential type}) 
 of the heavy-tailed  arrival processes lie in a specific range. 
 As a consequence,  we need to take these exponents explicitly into account, something that traditional fluid models cannot do. 
 
We then generalize, by developing general and tight (necessary and sufficient) conditions for delay stability, in terms of deterministic fluid-like models  in which there can be multiple jumps (in different queues and possibly at different times) at the heavy-tailed queues. 

Our conditions are not easy to test computationally, but this seems unavoidable: since the conditions are necessary and sufficient, the complexity of testing  them 
reflects the intrinsic complexity of testing delay stability. On the positive side, our conditions:
\begin{itemize}
\item[(a)]
provide a  conceptual understanding of the  mechanism that results in delay instability; 
\item[(b)]  can be checked in special cases, e.g., for the example in Section~\ref{sec:example}; see Section~\ref{ss:revisit}.
\end{itemize}

On the technical side,  our general conditions 
involve a small, but technically crucial, reformulation
of the delay stability problem. To understand the underlying issue, note that fluid models do not always lead to definite conclusions when the underlying system is marginally stable, but are generally 
effective when used to analyze ``robust'' properties. For this reason, we consider a network with given nominal arrival rates  and focus on \emph{robust} stability. Namely, we ask whether a certain queue is delay stable 
for all arrival processes with given tail exponents and for all
(possibly time-varying) arrival rates that lie in some open ball around the nominal ones.
When the problem is framed that way, 
definitive necessary and sufficient conditions for (robust) delay stability become possible.
Furthermore, with this formulation, it is only the nominal arrival rates and the tail exponents that matter,
as opposed to the details of the distribution of the input traffic.

Finally, earlier works \citep{Mark13} and \citep{MarkMT18} had shown that Lyapunov functions with certain structural properties can be used to certify delay stability.
But it was not known whether this methodology is ``complete,'' that is, whether delay stability can  always be established through a suitable Lyapunov function. Our results make progress in this direction, for the case of ``very heavy'' tails, 
that is, when there exists some $\gamma>0$  for which arriving traffic moments of 
order $1+\gamma$  are finite, but $\gamma$ can be arbitrarily small. For this regime, we derive some necessary and some sufficient conditions for delay stability: we show that a Lyapunov function of a special kind can be used to certify delay stability, together with a partial converse.

\subsection{Related Works on  Multiple Big Jumps }
As already mentioned, 
in several systems that involve heavy-tailed random variables,  large fluctuations are  the consequence of a single large jump  in the underlying heavy-tailed  variables. 
However, this is not always the case. There are known systems that have small fluctuations under a single large jump, and large deviations can only arise as a consequence of multiple jumps in heavy-tailed  random variables with suitable timing.
Early examples were studied in  \citep{JeleM03} and \citep{ZwarBM04} for single buffer system with multiple on/off input processes with  heavy-tailed periods. 
Furthermore, \cite{FossK12} provided conditions on the finiteness of moments of queue sizes in a multiserver G/G/s queue and 
showed the special effects caused by multiple jumps. 
Perhaps the closest work to ours that demonstrates the necessity of multiple big jumps is  \citep{ChenBRZ19},  which proposed a rare-event simulation technique to estimate the probability of rare events in stochastic systems that receive heavy-tailed inputs. As an application, they considered a queuing network with fixed fluid services and fixed routing, in which each queue receives independent heavy or light tailed exogenous arrivals. 
They characterized the tail-exponent of the queues in terms of a  knapsack-type constraint that involves the sum of numbers of large jumps in the different inputs that is required to make the queue of interest large, where the sum is weighted by the tail-exponents of the corresponding inputs. 
Such a knapsack-type constraint also appears in our bounds, and the tail-exponent fully determines the delay stability/instability of the queue of interest. 
Thus, our results parallel those in \citep{ChenBRZ19}, but for a different setting.
The main differences are that our network  operates under the more complex Max-Weight scheduler, and  our assumptions allow for  non i.i.d.  arrival processes. 
Furthermore, our 
proof techniques  are very different from those in \citep{ChenBRZ19}; it would be interesting to explore whether the techniques in \citep{ChenBRZ19} can be used to obtain alternative proofs for our setting (Theorem~\ref{th:main}).

\subsection{Outline}
The rest of the paper is organized as follows. 
We start in the next section with the details of our model and some definitions.  
In Section~\ref{sec:example}, we discuss an example that provides insights on the ways that arrival rates and tail exponents 
affect delay stability, and demonstrate that criteria based on traditional fluid models
are inadequate for the purpose of deciding delay stability. 
In Section~\ref{sec:JF}, we introduce a fluid-like model, which we call   \emph{jumping fluid model} and underlies our main result, the necessary and sufficient conditions for (robust) delay stability  that we present in Section~\ref{sec:main}. 
In Section~\ref{sec:lyap}, we study the power of Lyapunov functions for our problem.
In Sections~\ref{sec:proof JF=>RDS},~\ref{sec:proof RDS=>JF}, and~\ref{sec:proof lyap},
we provide the proofs of our  results. As the proofs are quite involved, they are presented  as a sequence of lemmas, with the proofs of the lemmas provided in Appendices~\ref{app:proof JF=>RDS} and~\ref{app:proof RDS=>JF}.
We discuss the results and directions for future research in Section~\ref{sec:discuss}.
Finally, in  Appendix~\ref{app:new}, we explore  alternative definitions of robust stability, and corresponding variants of our jumping fluid  conditions, and 
explain why they are unlikely to yield sharp necessary and sufficient conditions. 

\subsection{Notation}\label{ss:notation}
We collect here some notational conventions to be used throughout the paper.
We use boldface symbols to denote vectors, and ordinary font to denote scalars. 
For any vector $\vvec$, we use $v_i$ to denote its $i$th component, and $|\vvec|$ to denote 
the sum $|v_1|+\cdots+|v_n|$. 
 We also use the notation $[\vvec]^+$ to denote the vector with components $\max\{0,v_i\}$. Finally, we let ${\bf e}_j$ stand for the $j$th unit vector.

We use $\R_+$ and $\Z_+$ to denote the sets of nonnegative reals and nonnegative integers, respectively. 
Furthermore, for a vector $\vvec$, we write $\vvec\succeq {\bf 0}$ (respectively, $\vvec\succ {\bf 0}$) to indicate that all components are nonnegative  (respectively, positive). 
For any set $S$, we denote its convex hull by $\conv(S)$.

Throughout, $\|\cdot\|$ will stand for the Euclidean norm. 
Sometimes, we use the alternative notation $d(\xvec,\yvec)$ in place of $\|\xvec-\yvec\|.$
We also let $d({\bf x},S)$ be the distance of a vector ${\bf x}$ from a set $S$, i.e., $d({\bf x},S)=\inf_{{\bf y}\in S} \| {\bf x} - {\bf y}\|$.
We finally  use $\one(\cdot)$ to denote the indicator function, and $\log$ to denote the natural logarithm.

For any time function ${\bf x}(\cdot)$ which is right-continuous with left limits,  $\big(d {\bf x}/dt \big) (t)$ or
$\dot{\bf x}(t)$ stand for the \emph{right} derivative of ${\bf x}(t)$, with the implicit assumption that it exists, and $\xvec(t^-)$ stands for $\lim_{\tau\uparrow t} \xvec(\tau)$.

\medskip

\section{The model}\label{sec:model}
\subsection{Network model and the Max-Weight policy}
We consider a switched network that operates in discrete time. For simplicity and ease of presentation, we  restrict ourselves to single-hop networks. However,  our results are easily generalized to multi-hop networks of the type considered in \citep{AlTG19ssc}.
 
The network consists of  $\nn$ queues that buffer incoming packets (or jobs). For any $t\in\Z_+$, 
we let $\Qvec(t)$ be a nonnegative vector whose $j$th component is the length of the $j$th queue at time $t$. 
Packets  arrive to the queues according to a nonnegative stochastic vector arrival process, $\Avec(\cdot)$.
In particular,  $A_{j}(t)$ stands for the exogenous arrival to the $j$th queue at time $t$.
We assume that  the 
random variables $A_{j}(t)$, for different $j$ and $t$, are independent.
We refer to $\E{\Avec(t)}$ as the arrival rate vector at time $t$.

At each time $t$, the amount of service received by the queues is a nonnegative vector $\muvec(t)$, which is chosen by a scheduler from a finite set $\S$  of possible service vectors. 
The queue lengths then evolve according to 
\begin{equation} \label{eq:evolution MW}
\Qvec(t+1) = \big[\Qvec(t) - \muvec(t)\big]^+ + \Avec(t).
\end{equation}
As in \citep{AlTG19ssc}, we assume throughout the paper that for any $\muvec\in\S$, the set $\S$ also contains all vectors that result from setting some entries of $\muvec$ to zero. This assumption is naturally valid in most contexts.

We focus  exclusively on the popular
 Max-Weight (MW) scheduling policy, 
 \begin{equation}\label{eq:mu}
\muvec(t) \in \argmax{\nuvec \in \S}\ \nuvec^T \Qvec(t),
\end{equation} 
which is known to have favorable stability properties \citep{TassE92}: whenever there exists a policy under which the queues remain stable, MW will result in stable queues.
More specifically,  let us consider the set $\Sbar$, defined as 
the convex hull, $\conv(\S)$, of the set of all possible service vectors, which is the so-called  \emph{capacity region} of the network. 
For the case of i.i.d.\ arrivals, and 
under common stochastic assumptions, it is known that if the arrival rate vector lies in the interior of the capacity region, then MW will result in stable queues; conversely, if  the arrival rate vector lies outside the capacity region, the queues will be unstable under every scheduling policy~\citep{TassE92}.

\subsection{Light-tailed and heavy-tailed arrivals} \label{subsec:Heavy-tailed and exponential type}
In this subsection, we present some definitions related to the tails of the arrival process distributions.

To any nonnegative random variable $X$, we associate a \emph{tail exponent}, 
defined as the value of $\gamma$ at which $\E{X^{1+\gamma}}$ switches from finite to infinite:
\begin{equation} \label{eq:def gamma}
\gamma^*  \equals \sup \left\{\gamma: \ \E{X^{1+\gamma}}<\infty\right\}.
\end{equation}
As an  example, consider a continuous random variable whose  probability density function $f(\cdot)$ satisfies
\begin{equation} \label{eq:heavy example distrib}
c \cdot x^{-(2+\gamma)} \leq f(x) \leq \log^k x \cdot  x^{-(2+\gamma)}, \qquad \forall \ x\geq x_0,
\end{equation}
where $c$, $\gamma$, $k$, and $x_0$ are positive constants.
Such a distribution has a tail exponent equal to $\gamma$.

For an i.i.d.~arrival process, a tail exponent is unambiguously defined as the tail exponent of the marginal distribution at an arbitrary time.
However, once we bring robustness into the picture, we are led to consider arrival processes with non-identically distributed $\Avec(t)$. 
To any arrival process $A_j(\cdot)$, we  associate a tail exponent, $\gamma_j$,
defined as the largest value of $\gamma$ such that $A_j(t)$ is dominated by some nonnegative random variable $X$ with tail exponent $\gamma$, for all times $t$;
\begin{equation} \label{eq:def gamma process}
	\begin{split}
	 \gamma_j  \equals \sup \Big\{\gamma: \, \mbox{there exists a r.v. }X\geq 0 \quad\mbox{ s.t.}\quad   &X \mbox{ dominates } A_j(t) \mbox{ for all }t\ge0,  \\
	&\mbox{and }\E{X^{1+\gamma}}<\infty\Big\}.
	\end{split}
\end{equation}
Here, the term ``dominates'' refers to \emph{stochastic dominance}: 
a random variables $X$ dominates a random variable $Y$ if $\Pr(X>a) \ge \Pr(Y>a)$, for all $a\in\R$.
We say that $A_j(\cdot)$ is \emph{heavy-tailed} if $\gamma_j\le 1$, and \emph{light-tailed} otherwise.
Note that the above definition of a heavy-tailed process is aimed to capture the boundedness of  variance of the input distribution, and  differs from the conventional definition of heavy-tailed random variables, which  requires subexponential decay of  tail probabilities \citep{NairWZ20}.
The tail behavior of the different arrival processes is summarized by the vector
$\gamvec=(\gamma_1,\ldots,\gamma_\nn)$.

We note that as long as  $\E{A_j(t)}\le \mumax$,  for  some constant $\mumax$ and for all times $t$, 
the tail exponent $\gamma_j$ is well defined and lies in the range $[0,\infty]$. 
Indeed, suppose that $\lambda_j(t)=\E{A_j(t)}\le \mumax$, for some finite constant $\mumax$ and for all $t$.
Consider a random variable $X$ with probability density function $f_X(x)=(\mumax/x^2)\,\one(x\ge\mumax)$.  
The Markov inequality implies that 
$$\Pr\big(A_j(t)>a\big)\le \min\big\{1,\E{A_j(t)}/a\big\}\le \min\big\{1,\mumax/a\big\} = \Pr\big(X>a\big),\qquad \forall \ a>0.$$ 
In particular, $X$ dominates $A_j(t)$, for all $t$. Furthermore, $\E{X^{1+\gamma}}<\infty$, for every $\gamma<0$. Thus,  $\gamma_j$ is at least as large as any negative number, which implies that $\gamma_j\geq 0$. 

Conversely, if $\gamma_j>0$, then $\E{A_j(t)}$ is finite, and bounded as a function of $t$. We finally note that $\gamma_j$ may be infinite; this will be the case for bounded, or more generally, exponential-type, distributions.

\subsection{Robust delay stability} \label{subsec:delay stability}
As already mentioned, we are interested in the question of  \emph{delay stability} under the MW policy, in the presence of heavy-tailed arrivals.
Given a set of arrival processes, we say that queue $m$ is  \emph{delay stable} if, starting from $\Qvec(0)=0$, 
we have $\sup_t  \E{Q_{m}(t)}<\infty$.

The  arrival rates and the tail exponents 
do not provide enough information to decide whether we have delay stability, or even stability. For example, there is sometimes indeterminacy on the boundary of the capacity region. 
Furthermore, a tail exponent of $\gamma_j=1$ is compatible with  $\E{A_j^2(t)}$ being either finite or infinite, and this may be critical as far as delay stability is concerned (cf.~the Pollaczek-–Khintchine formula).
These difficulties, all related to indeterminacy at certain boundaries, can be circumvented by focusing on
a robust version of delay stability that incorporates two distinct elements.
\begin{itemize}
\item[(a)] We  require delay stability for \emph{all arrival process distributions} with given tail exponents. This allows us to focus on conditions that involve the tail exponents, while ignoring other details of these distributions. 
\item[(b)]  We require delay stability for 
\emph{all arrival rates}, possibly time-varying, in the vicinity of a given nominal rate. This allows us to avoid issues of indeterminacy when the arrival rate lies at a threshold between delay stability and delay instability. 
\end{itemize}

\begin{definition}[Robust Delay Stability] \label{def:RDS}
Let us fix a network with $\nn$ nodes,  and a set $\S$ of possible service vectors. 
We consider a tail  exponent vector $\gamvec$ with components in $[0,\infty]$, 
and a nominal arrival rate vector $\lamvec^*{\succeq {\bf 0}}$.
\begin{itemize}
\item[(a)] Given some  $\delta\geq 0$, an arrival process $\Avec(\cdot)$ belongs to the class $\Aclass_{\delta}(\gamvec; \lamvec^*)$ if:
\begin{itemize}
\item[(i)] The random variables $A_j(t)$, for different $j$ and $t$, 
are independent and have finite means. 
\item[(ii)] For every $j$,  the tail exponent of the process $A_j(\cdot)$ is $\gamma_j$. 
\item[(iii)] For every $t\geq 0$, we have $\big\|\E{\Avec(t)}-\lamvec^*\big\| \leq \delta$.
\end{itemize}
\item[(b)] For $m\in\{1,\ldots,\nn\}$, we say that  queue $m$ is \emph{Robustly Delay Stable (RDS)} if there exists some $\delta>0$ such that queue $m$ is delay stable for all arrival processes in the class $\Aclass_{\delta}(\gamvec; \lamvec^*)$.
\end{itemize}
\end{definition}

\section{A counterexample and the insufficiency of fluid models}\label{sec:example}
In this section, we discuss a simple example with the following  features:
\begin{itemize}
\item[(a)] Similar to existing examples, heavy-tailed arrivals to some queues can cause delay instability at a queue that receives light-tailed traffic.
\item[(b)] Tight criteria for delay instability need to take into account the values of the tail exponents. In particular, criteria that are based on traditional fluid models cannot be conclusive, because they do not involve the tail exponents.
\item[(c)]  Delay instability may emerge from coordinated large arrivals at multiple heavy-tailed queues.
\end{itemize}

Consider the three-queue network  in Fig.~\ref{fig:3 node example}. 
The set of possible service vectors $\S$ is such that at any time slot, up to two queues can be served, each with rate 1, but not all three queues can be served simultaneously. Thus,
at each time step, the MW policy chooses two queues with largest backlogs and serves each one  of them with rate~1. 

 The third queue 
receives deterministic arrivals, 
with $A_3(t)=\lambda<1$, for all $t\ge 0$, so that $\gamma_3=\infty$. 
The other two queues receive heavy-tailed traffic, with a density of the form \eqref{eq:heavy example distrib} 
and tail exponent $\gamma\in (0,1)$, 
and with  
rate $0.5$, i.e., $\E{A_1(t)} = \E{A_2(t)} = 0.5$.

\begin{figure} 
	\begin{center}
		{\includegraphics[width = .5\textwidth]{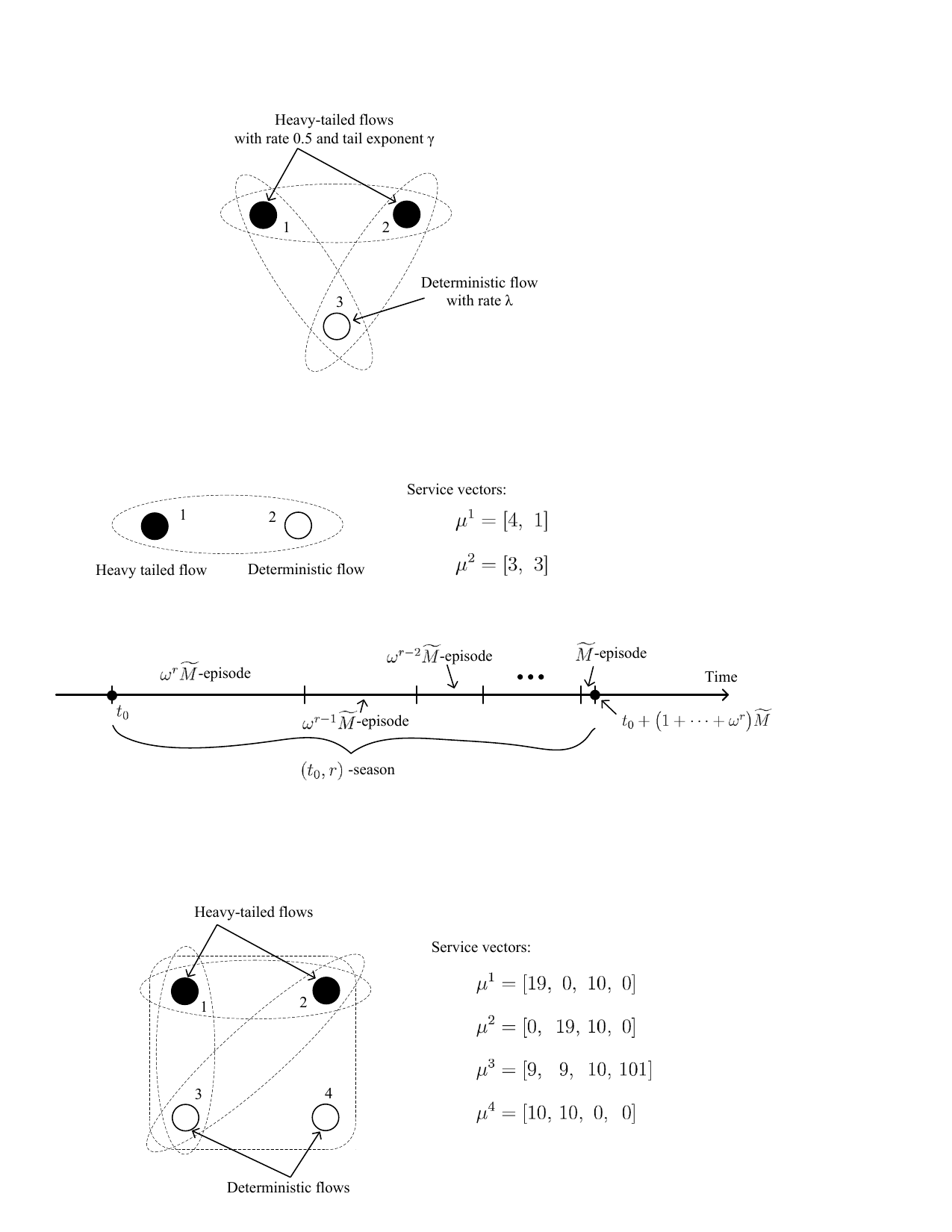}}
		\vspace{-.1cm}
	\end{center}
	\caption{A single-hop network 
	with three queues, two of which receive heavy-tailed traffic, while the third one receives deterministic traffic. 
	The dotted ellipses illustrate the different possible service vectors. 
	The third queue will be delay unstable for certain ranges of $\lambda$ and $\gamma$.}
	\label{fig:3 node example}
\end{figure}

 The total arrival rate is $1+\lambda$, which is less than 2, and the network is stable, in the conventional sense. 
The first two queues are automatically delay unstable because they receive heavy-tailed arrivals. However, the third queue can be either delay stable or delay unstable, depending on the values of $\lambda$ and $\gamma$. 
In what follows, we 
provide an informal discussion of the different cases.

\vspace{5pt}\noindent
{\bf Case 1.} ($0.5<\lambda<1$). 
In this case, queue 3 is delay unstable, through a scenario similar to those considered in earlier works \citep{MarkMT14}.
Intuitively, because of its heavy-tailed arrivals, $Q_1$  will occasionally receive large inputs. When that happens, with  $A_1(t_0)$ being large at some time $t_0>0$, 
$Q_1$ becomes and stays largest for some time. During that time,  the MW policy keeps serving $Q_1$, while the remaining service capacity is split between $Q_2$ and $Q_3$. Since $\lambda>0.5$, the sum of the arrival rates to $Q_2$ and $Q_3$ will exceed the aggregate service rate to these two queues over a time interval of duration $\Omega\big(A_1(t_0)\big)$. Thus, $Q_2$ and $Q_3$  build up to size $\Omega\big(A_1(t_0)\big)$. Since $\gamma<1$, we have  $\E{A_1^2(t)}=\infty$, 
and using this property, 
it can  be shown that $\E{Q_3(t)}$ grows unbounded.

The intuition behind the above argument is captured by an
existing criterion from \citep{MarkMT14} that examines certain trajectories $\qvec(\cdot)$ of a corresponding fluid model (also called fluid trajectories). 
In that fluid model, the arrival processes are replaced by deterministic flows with the same rates.  The initial conditions are    
$q_h(0)=1$ for some heavy-tailed queue (in our example, $h=1$ or $h=2$), and $q_j(0)=0$ for   $j\ne h$. (Because of symmetry, we only need to consider the case where $h=1$.) 
Let us say that  the ``zero fluid'' condition holds (ZF, for short) if the
the solution to the fluid model (known to be unique for the MW policy) keeps $q_3(\cdot)$ at zero. According to the criterion in  \citep{MarkMT14}, the 
failure of the ZF
condition certifies the delay instability of queue 3. This is indeed the case here: starting with the initial conditions $\qvec(0)=(1,0,0)$, it can be verified that for small positive times $t$, we have $q_3(t)=(\lambda-0.5)t/2>0$. 

In summary, for this particular case, we have delay instability, and this is correctly predicted by the failure of the ZF condition and available results.


\vspace{5pt}\noindent
{\bf Case 2.} ($0<\lambda<0.5$ and $\gamma>0.5$).
In this case,  queue 3 turns out to be delay stable (in fact, robustly delay stable). 
This is a consequence of
our general result in Section~\ref{sec:main}.

For this case, the fluid model initialized at $\qvec(0)=(1,0,0)$, satisfies $q_3(t)=0$ for all positive times, and the ZF condition holds. In particular, the ZF condition is ``aligned'' with delay stability. Observations of this nature,   led to the question whether   the ZF condition can be used as a certificate of delay stability~\citep{MarkMT18}. However,  this is not the case, as we discuss next.

\vspace{5pt}\noindent
{\bf Case 3.} ($0<\lambda<0.5$ and $\gamma\le 0.5$). 
In this case, the ZF condition holds, similar to Case 2. However, queue 3 turns out to be
delay unstable; this  follows from the proof\footnote{Strictly speaking, the results in Section~\ref{sec:main} only establish the absence of \emph{robust} delay stability,  not delay instability for the specific arrival process distributions of our example. However, a slight modification of the proof in Section~\ref{sec:proof RDS=>JF}   shows that queue 3 is indeed delay unstable.} of our main result, Theorem~\ref{th:main}.

We summarize here the underlying intuition. 
When $\gamma\le 0.5$, there is considerable probability that both queues 1 and 2 receive large inputs within a certain time interval. More concretely, for large values of $M$, 
there is probability  $\Omega(1/M)$ that both of $Q_1$ and $Q_2$  receive aggregate arrivals of size at least $3M$ within the time interval $[0,M]$.
 If this happens, both $Q_1$ and $Q_2$   become large enough  so that $Q_3$ receives no service during the interval $[M,2M]$. As a result, $Q_3(2M)\ge \lambda M$, with probability $\Omega(1/M)$. 
One can then use this fact to show that as $t$ increases,
$\E{Q_3(t)}$  grows unbounded, and queue 3 is delay unstable.

\vspace{5pt}
In summary, the presence or absence of delay stability depends on the tail exponents in a nontrivial manner. Furthermore, the ZF condition cannot discriminate between Cases 2 and 3, 
and thus cannot account for the different outcomes (delay stability in Case 2, delay instability in Case 3). 
In fact, the same obstacle arises with any other criterion that relies on  traditional fluid models, because fluid models do not take the tail exponents into account.  In order to make progress, we need to consider 
the probability that large inputs (or jumps) may arrive  within a certain time interval, as a function of the tail exponents (the probability is larger when the tail exponents are smaller).
Furthermore, as illustrated by Case 3, we may have to consider the effect of ``coordinated'' large jumps at more than one queue, within the same time interval.
This is accomplished by the model in the next section: it is still in the spirit of  traditional fluid models, except that it allows for jumps along the heavy-tailed flows, subject to a ``budget'' on allowed jumps, as determined by the tail exponents.


\section{Jumping fluid models} \label{sec:JF}
In this section,  we introduce a generalization of the fluid model, which allows for jumps along certain coordinates. We proceed by first defining a  traditional fluid model, and then modifying it.

\subsection{The fluid model}
A fluid model is a deterministic continuous-time dynamical system  that replaces the arrival process with a fluid stream of arrivals and  updates queue lengths along Max-Weight drifts. 
The literature provides a few, somewhat different but  equivalent, definitions of the fluid model \citep{ShahW12, MarkMT18}, which typically involve differential equations 
with boundary conditions. 
Here, we adopt an equivalent but somewhat simpler definition,\footnote{Recall our assumption in Section~\ref{sec:model} that for any $\muvec\in\S$, the set $\S$ also contains all vectors that result from setting some entries of $\muvec$ equal to zero. It was shown in Proposition~2 of~\citep{AlTG19ssc} that  under this assumption, the fluid model of  Definition~\ref{def:fluid model}, is equivalent to the more standard, albeit more complicated, definitions of  fluid models, based on differential equations 
with boundary conditions.}  from \citep{AlTG19ssc}.  

Recall the definition of $\S$ as the set of all possible service vectors. For any $\xvec\in\R_+^\nn$, we define 
	\begin{equation}\label{eq:def Sx}
		\S(\xvec)=\big\{ \muvec\in \S \mid\, \muvec^T \xvec \ge \nuvec^T \xvec, \ \forall \ \nuvec\in\S   \big\}
		=\argmax{\nuvec\in\S}\ {\nuvec^T \xvec},
	\end{equation}
which is the set of all possible service vectors that, for the given $\xvec$, attain the maximum in the definition of the MW policy; see~\eqref{eq:mu}.  We also let
	\begin{equation} \label{eq:def S(x)}
		\Sbar(\xvec)\equals \conv\big(	\S(\xvec)\big). 
	\end{equation}	
	Furthermore, given an arrival rate vector $\lamvec \succeq {\bf 0}$, and any $\xvec\in\R_+^\nn$, we let 
	\begin{equation}\label{eq:def S lambda}
		\Drift_\lamvec(\xvec)\,\equals\,  \lamvec-\Sbar(\xvec) \,=\, \conv\left(\left\{ \lamvec -\muvec\mid \muvec \in \argmax{\nuvec \in \S}\, \nuvec^T \xvec  \right\} \right),
\end{equation}
which is the set of candidate drifts when the queue length vector is $\xvec$.

For the definitions that follow, recall our convention that $\dot{\qvec}(t)$ denotes the \emph{right}  derivative of $\qvec(\cdot)$ with respect to time, at time $t$. 

\begin{definition}[Fluid Trajectories]\label{def:fluid model}
Let us fix a network with $\nn$ nodes,  a set $\S$ of possible service vectors, and  an arrival rate vector $\lamvec \succeq {\bf 0}$. 
	 A \emph{fluid trajectory} corresponding to $\lamvec$ is
	 a nonnegative, continuous,  
	 and right-differentiable $\ell$-dimensional function $\qvec(\cdot)$,  
that satisfies
	the  differential inclusion 
	\begin{equation}\label{eq:fluid evolution}
		\dot{\qvec}(t)\,\in\, \Drift_\lamvec\big(\qvec(t)\big), \qquad \forall\ t\geq 0.
	\end{equation}
\end{definition}

Given some $\lamvec\succeq {\bf 0}$ and $\qvec(0)\succeq {\bf 0}$, there always exists a unique (and  nonnegative) fluid trajectory $\qvec(\cdot)$ corresponding to $\lamvec$ and initialized at $\qvec(0)$ \citep{MarkMT18, AlTG19ssc}.

\subsection{Adding the jumps}
We now introduce jumping fluid (JF, for short) trajectories. 
Given a vector $\nvec=(n_1,\ldots,n_\nn)$ of nonnegative integers,
an $\epsilon$-JF$(\nvec)$ trajectory is a ``fluid trajectory with jumps,'' with $n_j$ \emph{positive} jumps in its $j$th component, 
while allowing for $\epsilon$-changes in the arrival rate $\lamvec$. More concretely:

\begin{definition}[$\epsilon$-Jumping Fluid Trajectories]\label{def:eJF}
Let us fix a network with $\nn$ nodes, and a set $\S$ of possible service vectors.  
We are given
an arrival rate vector $\lamvec^*\succeq {\bf 0}$, a nonnegative integer vector $\nvec$, 
and some $\epsilon\geq 0$.  An \emph{$\epsilon$-JF$(\nvec)$ trajectory\/} corresponding to $\lamvec^*$ is a nonnegative $\ell$-dimensional function $\qvec(\cdot)$, which is right-continuous with left limits, and right-differentiable, initialized with 
$\qvec(t)={\bf 0}$ for all $t<0$, 
and with the following properties:
\begin{itemize}
\item[(i)] Each component $q_j(\cdot)$ has $n_j$ points of discontinuity. \item[(ii)] If  $q_j(\cdot)$ has a discontinuity at some time $t$, then
$q_j(t)>q_j(t^-)$.
\item[(ii)] For every $t\geq 0$, 
\begin{equation}
\label{eq:eps jfm dyn} 
	\dot{\qvec}(t) \,\in\, \Drift_{\lamvec(t)}\big(\qvec(t)\big), 
	\end{equation}
for some nonnegative function $\lamvec(\cdot)$ which is right-continuous, piecewise constant, with finitely many points of discontinuity, and 
satisfies $\| \lamvec(t)- \lamvec^*\|\le\epsilon$, for all $t$.
\end{itemize}
Finally, an $\epsilon$-JF($\nvec$) trajectory with\footnote{If $\gamma_j=\infty$ and $n_j=0$, we use the convention $\infty\cdot 0=0$.} 
$\gamvec^T\nvec=\sum_{j=1}^{\nn}\gamma_j n_j \le 1$, is called   an $\epsilon$-JF($\gamvec$) trajectory. 

\end{definition}

Note that we allow jumps at time zero, in which case $\qvec(0)\neq {\bf 0}$. Note also that if $\epsilon=0$ and if jumps can only happen at time zero, then 
an $\epsilon$-JF trajectory is just a fluid trajectory, with the jumps of the JF trajectory determining the initial conditions of the fluid trajectory.

More generally, an $\epsilon$-JF trajectory consists of a concatenation of fluid trajectories, over the intervals where $\lamvec(\cdot)$ stays constant, together with a finite number of jumps. For this reason, once the jump times, jump sizes, and the function $\lamvec(\cdot)$ are specified, the results for fluid models extend and establish existence and uniqueness of the $\epsilon$-JF($\nvec$) trajectory.



Our next definition formalizes the requirement that a certain queue must stay at zero under all $\epsilon$-jumping fluid trajectories.

\begin{definition}[JF conditions] \label{def:eps jf cond} 
Let us fix a network with $\nn$ nodes, and a set $\S$ of possible service vectors.
We are	given an arrival rate vector $\lamvec^*\succeq {\bf 0}$, and a particular queue, $m$, of interest. 
\begin{itemize}

\item[(a)] Given a vector $\gamvec$ with components in $[0,\infty]$,   and some $\epsilon\ge0$,
 we say that the $\epsilon$-JF$(\gamvec)$ condition holds for queue $m$ and $\lamvec^*$  if for every  $\epsilon$-JF($\gamvec$) trajectory corresponding to $\lamvec^*$, and every $t\geq 0$, we have $q_m(t)=0$.

\item[(b)] Given a vector $\gamvec$ with components in $[0,\infty]$, we say that  the \emph{Robust Jumping Fluid} condition (RJF($\gamvec$), for short) holds for queue $m$ and 
$\lamvec^*$
 if there exists some $\epsilon>0$ 
such that  the $\epsilon$-JF$(\gamvec)$ condition holds for queue $m$ and $\lamvec^*$.

\end{itemize}
\end{definition}

Note the restriction on the number of jumps in terms of the tail exponents:
 the heavier the arrival processes (i.e., the smaller the tail exponents $\gamma_j$), the larger the number $n_j$ of jumps that we allow. As an example, if 
$\gamma_1\leq 1$ and
queue 1 is the only heavy-tailed queue, 
then we allow up to 
$\lfloor1/\gamma_1\rfloor$ jumps at queue 1, and no jumps at the other queues.


\section{Main result}\label{sec:main}

Our main result provides a necessary and sufficient condition for robust delay stability in terms of  JF trajectories. The proof is given in Sections~\ref{sec:proof JF=>RDS} and~\ref{sec:proof RDS=>JF}. 
 
\begin{theorem}\label{th:main}
Let us fix a network with $\nn$ nodes, a set $\S$ of possible service vectors, 
 an arrival rate vector $\lamvec^*\succeq {\bf 0}$, a particular queue, $m$, of interest, 
and a vector  $\gamvec$ of tail exponents with components in $(0,\infty]$. 
The queue $m$ 
 is robustly delay stable (RDS) if and only if the RJF$(\gamvec)$ condition holds for queue $m$ and $\lamvec^*$. \end{theorem}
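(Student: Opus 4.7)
The plan is to prove the two implications separately, both by contrapositive.

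\textbf{Sufficiency (RJF$(\gamvec)$ implies RDS).} Suppose queue $m$ is not RDS. Then there exist $\delta_k\to 0^+$, arrival processes $\Avec^{(k)}(\cdot)\in\Aclass_{\delta_k}(\gamvec;\lamvec^*)$, and times $t_k$ with $\E{Q_m^{(k)}(t_k)}\to\infty$. I would pick scales $M_k\to\infty$ (roughly $M_k\sim\E{Q_m^{(k)}(t_k)}$), rescale space and time by $M_k$ to define $\bar\qvec^{(k)}(s)=\Qvec^{(k)}(\lfloor M_k s\rfloor)/M_k$, and extract a subsequential limit $\qvec(\cdot)$ that is nonzero in coordinate $m$. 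The crux is to show that this limit is an $\epsilon$-JF$(\gamvec)$ trajectory (with $\epsilon\to 0$), contradicting RJF$(\gamvec)$. This relies on a single-big-jump principle tailored to Max-Weight: since every $A_j$ has finite moment of order $1+\gamma_j-\eta$ for small $\eta>0$, any $\Theta(M)$-contribution to $\Qvec$ must come from a small collection of atypically large arrivals (jumps), whose multiplicities $n_j$ satisfy $\gamvec^T\nvec\le 1$; otherwise the coincidence has probability $O(M^{-\gamvec^T\nvec})=o(1/M)$, which cannot support an $\E{Q_m}$ of order $M$. Between the atypical arrivals, the rescaled trajectory converges to a fluid trajectory (with possibly time-varying rates in the $\epsilon$-ball around $\lamvec^*$) by standard functional law-of-large-numbers arguments, yielding the differential inclusion~\eqref{eq:fluid evolution}.

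\textbf{Necessity (RDS implies RJF$(\gamvec)$).} Suppose the RJF$(\gamvec)$ condition fails: for every $\epsilon>0$ there is a nonnegative integer vector $\nvec$ with $\gamvec^T\nvec\le 1$ and an $\epsilon$-JF$(\nvec)$ trajectory $\qvec^{\epsilon}(\cdot)$ for $\lamvec^*$, together with $t^*>0$ such that $q_m^{\epsilon}(t^*)>0$. For any prescribed $\delta>0$, I would construct an arrival process in $\Aclass_{\delta}(\gamvec;\lamvec^*)$ that, at every scale $M$, mimics the $M$-dilation of $\qvec^{\epsilon}$ on a carefully designed event: namely, on $n_j$ specific slots within $[0,t^*M]$, queue $j$ receives an arrival of size proportional to $M$, placed so as to reproduce the jump times and sizes of $q^\epsilon_j$. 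The heavy-tailed marginals (satisfying the prescribed tail exponents $\gamma_j$) assign each such coordinated event probability on the order of $\prod_j M^{-\gamma_j n_j}=M^{-\gamvec^T\nvec}$, and small time-dependent adjustments of the marginals realize the required $\epsilon$-perturbation of $\lamvec^*$ while staying in $\Aclass_\delta$. Conditional on this event, a coupling with the (uniquely determined, by the extension of the existence/uniqueness results from~\cite{MarkMT18, AlTG19ssc}) $\epsilon$-JF trajectory yields $Q_m(t^*M)\gtrsim M$. Summing $M\cdot M^{-\gamvec^T\nvec}$ over a geometric sequence of scales $M$ diverges, so $\sup_t\E{Q_m(t)}=\infty$, violating RDS.

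\textbf{Main obstacle.} The sufficiency direction is the delicate one. It requires establishing (i) uniform tightness of the rescaled trajectories and their candidate jumps across all admissible arrival processes with tail exponents $\gamvec$; (ii) a quantitative ``big-jump'' decomposition showing that the total weighted jump count $\gamvec^T\nvec$ in the limit cannot exceed $1$; and (iii) that, after removing the identified jumps, the residual converges to a fluid trajectory obeying the Max-Weight differential inclusion~\eqref{eq:fluid evolution}, even though the underlying arrival rates may vary in time within the $\epsilon$-ball. The Max-Weight policy's non-smoothness (through the argmax over $\S$) and the need to align discrete jump times in the pre-limit with the continuous-time jump structure of Definition~\ref{def:eJF} are the principal technical hurdles.
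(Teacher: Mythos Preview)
Your necessity direction is essentially the paper's approach: construct heavy-tailed arrivals that mimic the bad $\epsilon$-JF trajectory at scale $T$, show the mimicking event has probability $\gtrsim T^{-\gamvec^T\nvec}\log T\ge T^{-1}\log T$, and conclude $\E{Q_m}$ diverges along a sequence of episodes. The paper's version (Section~\ref{sec:proof RDS=>JF}) fills in the coupling via the sensitivity bound of Theorem~\ref{th:sensitivity}, but your sketch captures the idea.

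The sufficiency direction, however, is not how the paper proceeds, and your compactness approach has a genuine gap. The paper does \emph{not} argue by contrapositive or extract fluid limits. It argues directly: assuming the $\epsilon$-JF$(\gamvec)$ condition holds, it fixes $T$ and $M$, defines a time-dependent jump threshold $\theta_t=(M+T-t)/\big(\eta\log(M+T-t)\big)$, and shows that (i) the event $\{\gamvec^T\Nvec>1\}$ has probability $O(M^{-\beta})$ for some $\beta>1$ (Lemma~\ref{lem:prob event jump if}); (ii) the residual fluctuations stay small with probability $1-O(M^{-2})$ (Lemma~\ref{lem:prob event fluc if}); and (iii) on the intersection of these events, a deterministic argument using the $\epsilon$-attracting property of the reachable set $W(\nvec)$ (Lemma~\ref{lem:absorb}) together with the sensitivity bound (Theorem~\ref{th:sensitivity}) shows $d(\Qvec(T),W(T))\le M\epsilon/2$, whence $Q_m(T)\le M\epsilon/2$. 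Integrating gives a uniform bound on $\E{Q_m(T)}$.

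Your compactness scheme is missing the mechanism that makes this work. First, tightness uniformly over \emph{all} arrival processes in $\Aclass_{\delta_k}(\gamvec;\lamvec^*)$ and over \emph{unbounded} time is not available for free; the paper needs the time-varying threshold $\theta_t$ precisely because with a fixed threshold the number of jumps on $[0,T]$ is unbounded as $T\to\infty$. Second, your big-jump heuristic (``$\gamvec^T\nvec>1$ has probability $o(1/M)$, which cannot support $\E{Q_m}\sim M$'') does not by itself identify a single limiting trajectory: the event $\{Q_m(t_k)>M_k\}$ is a union of infinitely many jump patterns at different times and sizes, and extracting a convergent subsequence that lands on an honest $\epsilon$-JF$(\nvec)$ trajectory with $\gamvec^T\nvec\le1$ requires exactly the kind of quantitative control the paper builds through $W(\nvec)$ and its $\epsilon$-attraction. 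Without that, you cannot rule out that $\E{Q_m}$ diverges via accumulation of many moderate events none of which converge to a bad JF trajectory.
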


\subsection{Some intuition}\label{ss:intuition}

We provide here a high-level explanation of our result. Some more refined intuition is provided by the proof outlines in Sections~\ref{s:outline-1} and~\ref{s:outline-2}. 

Let $M$ be a large constant. We say that the stochastic process $A_j(t)$ has a jump whenever it is larger than (approximately) $M$. Let $N_j$ be the number of jumps of $A_j(t)$ during the interval $[0,M]$,  and let
$\Nvec=(N_1,\ldots,N_{\nn})$. It turns out that for any nonnegative integer vector $\nvec$ the probability of the event $\Nvec=\nvec$ scales (approximately) like $M^{-\gamvec^T \nvec}$. The latter quantity is ``significant'' (in the sense that it makes an unbounded contribution to certain expected values) if and only if  $\gamvec^T \nvec \leq 1$. Thus, over an interval of length $M$, we can focus on sample paths for which the realized vector $\nvec$ of jump counts satisfies
$\gamvec^T\nvec \leq 1$, and examine whether such sample paths can cause the queue of interest to become large. 
We then argue that these sample paths are well-approximated by 
the $\epsilon$-JF($\nvec$) trajectories involved in the $\epsilon$-JF$(\gamvec)$ condition.

\subsection{Remarks} \label{ss:remarks}
We continue with some remarks on the scope of our result. 

\paragraph{Heavy-tailed queues} If queue $m$ is heavy-tailed, i.e., $\gamma_m\leq 1$, then the condition $\gamvec^T \nvec \leq 1$ allows $\nvec$ to be the $m$th unit vector.
With such a vector $\nvec$, we can have an $\epsilon$-JF($\nvec$) trajectory with a positive jump in the $m$th component, resulting in a positive value of $q_m(t)$. Thus, the RJF($\gamvec$) condition does not hold, and queue $m$ is not RDS. This is just a variation of the well-known fact that a queue with heavy-tailed arrivals is not delay stable.

\paragraph{Unstable systems}  Theorem~\ref{th:main} makes no stability assumptions. For unstable (or marginally stable) systems, some components of $\epsilon$-JF trajectories can grow arbitrarily large. On the other hand, these components do not necessarily have a substantial effect on  the queue, $m$, of interest. As long as the $m$th component of all $\epsilon$-JF trajectories stays at zero, queue $m$ will be RDS. 

\paragraph{Comparison with the ZF condition}
If $\gamma_j>1/2$ for all $j$, 
then an  $\epsilon$-JF trajectory can have at most one jump.  For stable systems, the RJF  condition boils down to a robust version of the ZF condition introduced
 in Section~\ref{sec:example}. In other words, for this case, a robust version of the ZF condition is a necessary and sufficient condition for robust delay stability. 
On the other hand, since the (robust version of) the ZF condition  is strictly weaker than the RJF condition, it does not provide necessary and sufficient conditions, for general $\gamvec$.

\paragraph{The light-tailed case}
Suppose that $\gamma_j>1$ for all $j$, so that all arrival processes are light-tailed. In this case, no jumps are allowed, and the RJF condition boils down to considering ordinary fluid trajectories, with  
slightly perturbed arrival rates. We have the following possibilities:
\begin{itemize}
	\item[(a)]
	If $\lamvec^*$ is in the interior of the capacity region $\Sbar$, then $\mathbf{0}$ is in the interior of $\Drift_{\lamvec^*}(\bf 0)=\lamvec^*-
	 \Sbar$. It then 
	turns out that $\mathbf{0}$ is an attracting fixed point of the fluid dynamics, the RJF condition holds, and we have RDS for all queues.  
	This is in line with existing results (e.g., see Theorem~4.5 of \citep{GeorNT06}).
	\item[(b)]
	If $\lamvec^*$ is on the boundary or outside the capacity region, and similar to our earlier discussion of unstable systems,
	queue $m$ could be either RDS or non-RDS, depending on whether (perturbed) fluid trajectories cause $q_m$ to become positive or not.
\end{itemize}

\paragraph{The case of
zero tail exponents}
Our definitions in Sections~\ref{sec:model} and \ref{sec:JF} are formulated for a nonnegative vector   $\gamvec$. However, our result is restricted to the case where this vector is positive. We comment on the reasons for this.

 
When $\gamma_j=0$, we are dealing with an arrival process for which $\E{A_j(t)}$ is finite, while $\E{A_j(t)^{1+\gamma}}$ may be infinite for every $\gamma>0$. Our proofs involve at certain places a division by $\gamma_j$, and break down if $\gamma_j=0$. It is not clear whether a similar result is possible when some of the tail exponents are zero.

\paragraph{Computational issues}
As already hinted in the Introduction, checking the RJF condition algorithmically appears to be a  hard computational problem, amenable only to impractical Tarski-like elimination algorithms. 
In one possible simplification, we might just consider $\epsilon$-JF trajectories with $\epsilon=0$,  so that $\lamvec(t)=\lamvec^*$, for all times $t$. 
 However, this would still leave the indeterminacy of the jump times and the jump sizes to be reckoned with. 
 Even worse, a restriction to this limited class of trajectories  does not seem to lead to necessary and sufficient conditions for any suitably modified notion of stability. 
See Appendix~\ref{app:relations} for further discussion. 
A related question is whether we could, without loss of generality, require all the jumps to occur at the same time, e.g.,  at time zero, thus eliminating the need to consider all possible values of the jump times. Unfortunately, this is not the case: there exist examples in which $\epsilon$-JF trajectories can drive a queue $m$ to a positive value, but this can happen only if we allow the jumps to occur at different times; see Appendix~\ref{app:timing} for an example.

\subsection{Our example, revisited}\label{ss:revisit}
On the positive side, Theorem \ref{th:main} elucidates the precise mechanism that leads to delay instability, through a coordination of multiple abnormally large arrival vectors, at possibly different times and queues. Furthermore, it allows us analyze simple problems, such as the one discussed in Section~\ref{sec:example}, which we do next.

Recall the network of three queues  in Section~\ref{sec:example}, in which $\gamvec=(\gamma,\gamma,\infty)$. 
For $\gamma$ in the range $(0,0.5]$, consider a jumping fluid trajectory $\qvec(\cdot)$, 
with $\nvec=(1,1,0)$, 
in which both $q_1$ and $q_2$ undergo  unit jumps at time $0$. With this trajectory, $q_3$ immediately starts to grow positive. 
Since, $\gamvec^T\nvec= 2\gamma\le1$, it follows that the RJF$(\gamvec)$ condition fails to hold. Theorem~\ref{th:main} then establishes that queue 3 is \emph{not} robustly delay stable. 

On the other hand, when $\gamma$ is in the range 
$(0.5,1]$, the constraint $\gamvec^T\nvec\leq 1$ allows 
 at most one jump, either in $q_1$ or $q_2$. Without loss of generality, we can assume that this jump takes place at time zero.  It turns out that  the fluid trajectories that start from either $\qvec(0)=(1,0,0)$ or $\qvec(0)=(0,1,0)$,  keep $q_3(\cdot)$  at zero if and only if $\lambda\le 0.5$. 
Therefore, for $\gamma$ in range $(0.5,1]$,  
 and also taking also robustness into account,
queue 3 is RDS if and only if $\lambda<0.5$.

\section{Robust delay stability via Lyapunov functions}\label{sec:lyap}

Lyapunov functions are a powerful  tool for the stability analysis of queueing networks \citep{TassE92, BertGT01, MaguBS16}, e.g., in throughput optimality proofs for the MW policy \citep{TassE92,Neel10}.  
\cite{Mark13} and \cite{MarkMT18} provided a sufficient condition for delay stability based on a class of piecewise linear Lyapunov functions, and used it to derive a sharp characterization of delay stability for a special class of networks, namely networks with disjoint schedules. Nevertheless, the Lyapunov approach  in \citep{Mark13,MarkMT18} has some drawbacks: (a) the condition provided therein is, in general, only sufficient for delay stability;  (b) it does not take  into account the tail exponents, even though they play an essential role in delay stability, as already discussed in Sections~\ref{sec:example} and~\ref{ss:revisit};  (c) the  Lyapunov functions considered were piecewise linear, which is perhaps  inadequate for the purpose of tight delay stability conditions.

In this section, we explore the power of Lyapunov functions, 
 for the case where the  tail exponents of the heavy-tailed queues may be arbitrarily close to zero,  so that the RJF  condition allows an arbitrarily large number of jumps.

For the remainder of this section, we assume that queues $1,\ldots,h$ can be heavy-tailed, where $h<\nn$, while the remaining queues are light-tailed.
Formally, we consider the set $\Gamma$ of tail coefficients, defined by
$$\Gamma=\{\gamvec \succ {\bf 0}: \gamma_j>1, \mbox{ for } j=h+1,\ldots,\nn\}.$$
We also fix an arrival rate vector $\lamvec^*\succeq {\bf 0}$ and 
a light-tailed queue $m>h$ of interest.

\begin{definition}[Special Lyapunov function]\label{def:nice lyap}
	For any $\epsilon>0$, we say that a function $V:\R_+^\nn\to\R_+$ is a \emph{special $\epsilon$-Lyapunov function} if:
	\begin{enumerate}
		\item $V$ is Lipschitz continuous, with Lipschitz constant $1$.
		\item $\dot{V}\big(\qvec(t)\big) \leq -\epsilon$ whenever $V\big(\qvec(t)\big)>0$, for all fluid trajectories $\qvec(\cdot)$   corresponding to arrival rate $\lamvec^*$.
		\item We have $V({\bf 0})=0$.  
		Furthermore, if $q_m>0$, then $V(q_m)>0$. 
		
		\item $V$ is nonincreasing along the coordinates associated with 
		heavy-tailed queues; that is, for $j=1,\ldots,h$, for any $\qvec\in\R_+^n$, and any $\alpha>0$, we have $V(\qvec+\alpha \evec_j)\le V(\qvec)$, where  $\evec_j$ is the $j$th unit vector.
	\end{enumerate}
\end{definition}

Special $\epsilon$-Lyapunov functions, as defined above, are quite similar to the functions considered in Theorem~2 of \citep{MarkMT18}. However, in contrast to  \citep{MarkMT18}, our special Lyapunov functions need not be piecewise linear. 
Our next result establishes a strong connection between the $\epsilon$-JF$(\gamvec)$   condition and the existence of special $\epsilon$-Lyapunov functions. The proof is given in Section~\ref{sec:proof lyap}.

\begin{theorem} \label{th:lyap}
	For any $\epsilon>0$, there exists a special $\epsilon$-Lyapunov function if and only if the  $\epsilon$-JF$(\gamvec)$ condition holds for every $\gamvec\in\Gamma$.
\end{theorem}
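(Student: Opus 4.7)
For the ``Lyapunov implies JF'' direction, I would fix any $\gamvec \in \Gamma$ and any $\epsilon$-JF$(\gamvec)$ trajectory $\qvec(\cdot)$ corresponding to $\lamvec^*$. Since $\gamma_j > 1$ for every $j > h$, the budget $\gamvec^T \nvec \le 1$ forces $n_j = 0$ at every light-tailed coordinate, so all jumps of $\qvec$ live in heavy-tailed coordinates. Setting $W(t) = V(\qvec(t))$, the goal is to show $W \equiv 0$; property~3 then gives $q_m \equiv 0$. At $t = 0^-$ one has $W(0^-) = V(\mathbf{0}) = 0$. At each jump time $t^*$, the increment $\qvec(t^*) - \qvec(t^{*-})$ is a nonnegative combination of heavy-tailed unit vectors, so property~4 gives $W(t^*) \le W(t^{*-})$. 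On each interval between jumps, a careful comparison of the perturbed velocity $\dot\qvec(t_0) \in \Drift_{\lamvec(t_0)}(\qvec(t_0))$ against a suitably chosen exact-fluid velocity in $\Drift_{\lamvec^*}(\qvec(t_0))$ (differing from $\dot\qvec(t_0)$ by $\lamvec(t_0) - \lamvec^*$), combined with property~2 and Lipschitz-$1$, yields $\dot W(t_0) \le -\epsilon + \epsilon = 0$ on $\{W > 0\}$. A standard comparison argument then precludes $W$ from ever becoming positive.

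For the reverse direction, I propose to define
\[
V(\xvec) \;=\; \sup\bigl\{q_m(s) : s \ge 0,\ \qvec(\cdot) \in \mathcal{A}(\xvec)\bigr\},
\]
where $\mathcal{A}(\xvec)$ is the class of nonnegative right-continuous trajectories $\qvec(\cdot)$ with $\qvec(0^-) = \xvec$, having finitely many positive jumps in heavy-tailed coordinates at arbitrary times, and satisfying $\dot\qvec(t) \in \Drift_{\lamvec(t)}(\qvec(t))$ between jumps, with $\lamvec(\cdot)$ piecewise constant and $\|\lamvec(\cdot) - \lamvec^*\| \le \epsilon$ throughout. The key observation is that $\mathcal{A}(\mathbf{0})$ coincides with the union, over $\gamvec \in \Gamma$, of the $\epsilon$-JF$(\gamvec)$ trajectory classes, because for any finite jump-count vector $\nvec$ supported on heavy-tailed coordinates one can achieve $\gamvec^T\nvec \le 1$ by choosing the heavy-tailed components of $\gamvec$ small enough. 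Hence the JF hypothesis gives $V(\mathbf{0}) = 0$. The bound $V(\xvec) \ge x_m$ from the trivial $s = 0$ trajectory yields property~3. Property~4 follows because every $\qvec \in \mathcal{A}(\xvec + \Delta\evec_j)$ with $j \le h$ is also in $\mathcal{A}(\xvec)$ once one prepends an initial jump of size $\Delta$ in coordinate $j$. Lipschitz continuity with constant $1$ follows from a pathwise coupling of trajectories from $\xvec$ and $\yvec$ that share the same jump sequence and $\lamvec(\cdot)$, invoking the $L_2$-non-expansiveness of the Max-Weight fluid dynamics.

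The crucial property~2---that $\dot V \le -\epsilon$ along the exact $\lamvec^*$-fluid whenever $V > 0$---is to be established by extracting an $\epsilon t$ margin from the full $\epsilon$-perturbation budget. Given any $\qvec \in \mathcal{A}(\qvec^*(t))$ achieving $q_m(s) = v$, one constructs $\tilde\qvec \in \mathcal{A}(\xvec)$ that uses $\lamvec = \lamvec^* + \epsilon\,\evec_m$ on $[0, t]$---driving the $m$-th coordinate above the exact $\lamvec^*$-fluid from $\xvec$ to $\qvec^*(t)$ by approximately $\epsilon t$---and then splices in $\qvec$ with a matched jump schedule. Non-expansiveness of Max-Weight then implies that this $\epsilon t$ gain in $q_m$ persists through the suffix, so that $V(\xvec) \ge V(\qvec^*(t)) + \epsilon t$ after taking suprema, which is equivalent to the desired strict decrease.

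The hard part, I anticipate, will be the formal perturbation-propagation argument in this last step: because Max-Weight decisions are state-dependent, the perturbed and unperturbed trajectories may execute different schedules once their states diverge, so the $\epsilon t$ gain in $q_m$ must be tracked carefully via drift comparison and non-expansiveness over the full trajectory, including interactions with heavy-tailed jumps. A secondary subtlety will be verifying finiteness of $V$ on all of $\R_+^\nn$, which I expect to follow by combining the Lipschitz-$1$ bound (controlling $V$ by the distance to the JF-safe set passing through $\mathbf{0}$) with the JF hypothesis, which bounds $q_m$'s growth from that safe set.
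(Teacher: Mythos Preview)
Your ``Lyapunov implies JF'' direction is essentially the paper's argument: track $V$ along an $\epsilon$-JF trajectory, use property~4 at jumps, and between jumps compare the $\lamvec(t)$-drift with the $\lamvec^*$-drift via Lipschitz-1 plus property~2 to get $\dot V\le 0$ on $\{V>0\}$. The paper packages this as ``the zero set $W=\{V=0\}$ is $\epsilon$-invariant'' (Lemma~\ref{l:W-inv}) and then observes every $\epsilon$-JF trajectory stays in $W$; your pointwise version is equivalent.

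For the converse direction your construction is genuinely different from the paper's, and your property~2 argument has a real gap. You propose $V(\xvec)=\sup\{q_m(s):\qvec\in\mathcal A(\xvec)\}$ and argue for the $-\epsilon$ decrease by running $\lamvec=\lamvec^*+\epsilon\evec_m$ on $[0,t]$ to ``gain $\epsilon t$ in $q_m$,'' then splicing. The problem is that the perturbed fluid from $\xvec$ does not end at $\qvec^*(t)+\epsilon t\,\evec_m$: because $\Drift_{\lamvec^*+\epsilon\evec_m}(\xvec)=\Drift_{\lamvec^*}(\xvec)+\epsilon\evec_m$, the minimum-norm drift becomes $\pi_{\Drift_{\lamvec^*}(\xvec)}(-\epsilon\evec_m)+\epsilon\evec_m$, which is not $\xivec_{\lamvec^*}(\xvec)+\epsilon\evec_m$ in general. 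So you land at some $\pvec(t)\ne\qvec^*(t)$, and after splicing the suffix from $\qvec^*(t)$ you can only invoke $\ell_2$ non-expansiveness, which does not preserve a coordinate-wise gap in $q_m$. The concatenation argument you can actually carry out (run exact $\lamvec^*$ on $[0,t]$, then append) only yields $V(\xvec)\ge V(\qvec^*(t))$, not the extra $\epsilon t$.

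The paper sidesteps this by taking $V(\xvec)=d(\xvec,W)$ where $W=\bigcup_{\nvec\in\N}W(\nvec)$ is the reachable set from $\mathbf 0$. Properties~1,~3,~4 are immediate; property~2 follows from the general fact (Lemma~\ref{lem:absorb}) that any $\epsilon$-invariant set is $\epsilon$-attracting. The proof of that lemma is where the real content lives: given $\qvec_0\notin W$ and its nearest point $\xvec_0\in W$, one perturbs not by $\epsilon\evec_m$ but by $\epsilon(\qvec_0-\xvec_0)/\|\qvec_0-\xvec_0\|$, i.e., along the displacement direction. The Max-Weight cross-term $(\muvec-\nuvec)^T(\xvec_0-\qvec_0)\ge 0$ then makes the computation close exactly. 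If you want to salvage your value-function construction, this is the missing idea: the perturbation direction must be tied to the geometry of the comparison, not to the target coordinate $m$.
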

The special $\epsilon$-Lyapunov functions constructed in the proof of Theorem~\ref{th:lyap} are \emph{not} piecewise linear.  
We do not know whether Theorem~\ref{th:lyap} remains valid if we were restrict to piecewise linear function

Combining Theorems~\ref{th:main} and~\ref{th:lyap}, 
 we can establish a strong connection between robust delay stability and 
 special $\epsilon$-Lyapunov functions.
 In what follows, we say that queue $m$ is $\epsilon$-RDS$(\gamvec)$ if it is delay stable under all arrival processes in the class $\Aclass_{\epsilon}(\gamvec; \lamvec^*)$  in Definition~\ref{def:RDS}(a).

\begin{corollary}\label{cor:lyap}
Let us fix a network with $\nn$ nodes,  a set $\S$ of possible service vectors, an arrival vector $\lamvec^*\succeq {\bf 0}$, the number $h$ of heavy-tailed queues, and a light-tailed queue $m>h$.
\begin{itemize}
\item[(a)] If there exists a special $\epsilon$-Lyapunov function for some $\epsilon>0$, then queue $m$ is RDS for all tail exponents $\gamvec \in\Gamma$.
\item[(b)] Suppose that there exists some $\epsilon>0$ such that for all $\gamvec\in\Gamma$, queue $m$ is $\epsilon$-RDS$(\gamvec)$.  Then, there exists an $\epsilon$-special Lyapunov function.

\end{itemize}

\end{corollary}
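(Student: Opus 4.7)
The plan is to derive the corollary by concatenating Theorems~\ref{th:main} and~\ref{th:lyap}. Part~(a) is almost immediate; part~(b) requires a little more care because one has to track a specific value of $\epsilon$ (rather than a generic existential ``$\exists\,\epsilon>0$'') through the RDS--RJF equivalence, so I would first extract from the proof of Theorem~\ref{th:main} the sharper $\epsilon$-to-$\epsilon$ statement that $\epsilon$-RDS$(\gamvec)$ holds if and only if the $\epsilon$-JF$(\gamvec)$ condition holds. This refinement is implicit in the arguments of Sections~\ref{sec:proof JF=>RDS} and~\ref{sec:proof RDS=>JF}: in those sections, a bad $\epsilon$-JF trajectory is converted into an arrival process in $\Aclass_\epsilon(\gamvec;\lamvec^*)$ that drives queue~$m$ unstable, and conversely a delay-stable family of processes in $\Aclass_\epsilon$ rules out $\epsilon$-JF trajectories with $q_m>0$.

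For part~(a), fix any $\gamvec\in\Gamma$. Since a special $\epsilon$-Lyapunov function exists, Theorem~\ref{th:lyap} yields that the $\epsilon$-JF$(\gamvec)$ condition holds; in particular, there exists some positive value (namely $\epsilon$ itself) for which the $\epsilon$-JF$(\gamvec)$ condition holds, so the RJF$(\gamvec)$ condition holds. Theorem~\ref{th:main} then gives that queue~$m$ is RDS for this tail vector $\gamvec$. Since $\gamvec\in\Gamma$ was arbitrary, we conclude.

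For part~(b), suppose that the same $\epsilon>0$ witnesses $\epsilon$-RDS$(\gamvec)$ for every $\gamvec\in\Gamma$. The goal is to exhibit a special $\epsilon$-Lyapunov function; by Theorem~\ref{th:lyap}, it suffices to show that the $\epsilon$-JF$(\gamvec)$ condition holds for every $\gamvec\in\Gamma$. I would argue the contrapositive: if the $\epsilon$-JF$(\gamvec)$ condition failed for some $\gamvec\in\Gamma$, then there would exist an $\epsilon$-JF$(\gamvec)$ trajectory $\qvec(\cdot)$ with $q_m(t^*)>0$ at some $t^*\ge 0$. Invoking the construction from the RDS$\Rightarrow$RJF direction of Theorem~\ref{th:main} (see Section~\ref{sec:proof RDS=>JF}), one can produce an arrival process in $\Aclass_\epsilon(\gamvec;\lamvec^*)$ whose rescaled sample paths track $\qvec(\cdot)$ with positive probability on events that contribute $\Omega(1)$ to $\E{Q_m(\cdot)}$, thereby violating delay stability of queue~$m$. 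This contradicts the assumed $\epsilon$-RDS$(\gamvec)$ and establishes the $\epsilon$-JF$(\gamvec)$ condition for every $\gamvec\in\Gamma$; Theorem~\ref{th:lyap} then delivers the desired special $\epsilon$-Lyapunov function.

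The only nontrivial step is the $\epsilon$-uniform tracking in part~(b): Theorem~\ref{th:main} is stated as an ``$\exists\,\epsilon>0$'' equivalence, whereas we need the same $\epsilon$ on both sides. This is the main obstacle, but it should be harmless since the probabilistic construction in Section~\ref{sec:proof RDS=>JF} uses precisely the arrival rate tolerance of the JF trajectory; as long as the construction preserves the $\epsilon$-bound $\|\E{\Avec(t)}-\lamvec^*\|\le\epsilon$ inherited from $\lamvec(\cdot)$ in Definition~\ref{def:eJF}, no loss occurs and the same $\epsilon$ serves on both sides of the equivalence. Everything else is a direct chaining of the two theorems.
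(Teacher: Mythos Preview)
Your proposal is correct and follows essentially the same route as the paper: chain Theorem~\ref{th:lyap} with Theorem~\ref{th:main} for part~(a), and for part~(b) observe that the construction in Section~\ref{sec:proof RDS=>JF} produces an arrival process in $\Aclass_{\epsilon}(\gamvec;\lamvec^*)$ with the \emph{same} $\epsilon$, so that failure of $\epsilon$-JF$(\gamvec)$ contradicts $\epsilon$-RDS$(\gamvec)$. One small overstatement: your claimed ``$\epsilon$-RDS$(\gamvec)\Longleftrightarrow\epsilon$-JF$(\gamvec)$'' is not quite what the proofs give---the direction $\epsilon$-JF$(\gamvec)\Rightarrow\delta$-RDS$(\gamvec)$ in Section~\ref{sec:proof JF=>RDS} uses $\delta=\gamma\epsilon/20C<\epsilon$---but you only invoke the $\epsilon$-preserving direction (from Section~\ref{sec:proof RDS=>JF}), so your argument stands.
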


The proof is provided in Section~\ref{s:cor-pf}. We conjecture that Corollary~\ref{cor:lyap}(b) can be strengthened to provide a converse to part (a). 

\begin{conjecture}\label{con:lyap}
If queue $m$ is RDS for all tail exponents $\gamvec \in\Gamma$, then for some $\epsilon>0$ there exists a special $\epsilon$-Lyapunov function.
\end{conjecture}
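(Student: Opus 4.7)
The plan is to combine Theorems~\ref{th:main} and~\ref{th:lyap} and reduce the conjecture to a uniformity statement about the stability margin. By Theorem~\ref{th:main}, the hypothesis that queue $m$ is RDS for every $\gamvec\in\Gamma$ furnishes, for each such $\gamvec$, some $\epsilon_\gamvec>0$ for which the $\epsilon_\gamvec$-JF$(\gamvec)$ condition holds. By Theorem~\ref{th:lyap}, a special $\epsilon$-Lyapunov function exists precisely when the $\epsilon$-JF$(\gamvec)$ condition holds simultaneously for \emph{every} $\gamvec\in\Gamma$. Thus the conjecture is equivalent to the uniformity claim
$$(\star)\quad \exists\,\epsilon>0\text{ such that the }\epsilon\text{-JF}(\gamvec)\text{ condition holds for every }\gamvec\in\Gamma.$$

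To attack $(\star)$, I would reformulate it in terms of the jump-count vector $\nvec$. Recall that an $\epsilon$-JF$(\gamvec)$ trajectory is an $\epsilon$-JF$(\nvec)$ trajectory with $\gamvec^T\nvec\le 1$. Given any $\nvec\in\Z_+^\nn$ supported on the heavy-tailed indices $\{1,\ldots,h\}$, one can pick $\gamvec\in\Gamma$ with $\gamma_j$ sufficiently small for $j\le h$ so that $\gamvec^T\nvec\le 1$ (taking $\gamma_j>1$ for $j>h$ is unconstrained since then $n_j=0$). Hence $(\star)$ is equivalent to
$$(\star\star)\quad \exists\,\epsilon>0\text{ such that the }\epsilon\text{-JF}(\nvec)\text{ condition holds for every }\nvec\in\Z_+^\nn\text{ with }n_j=0\text{ for }j>h.$$
Defining $\epsilon_\nvec^{*}:=\sup\{\epsilon>0:\text{the }\epsilon\text{-JF}(\nvec)\text{ condition holds}\}$, the hypothesis combined with the construction above yields $\epsilon_\nvec^{*}>0$ for every such $\nvec$; the remaining task is to prove $\inf_\nvec \epsilon_\nvec^{*}>0$.

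I would proceed by contradiction. Suppose $\inf_\nvec\epsilon_\nvec^{*}=0$, so that there exist sequences $\nvec^{(k)}$, $\epsilon_k\downarrow 0$, and $\epsilon_k$-JF$(\nvec^{(k)})$ trajectories $\qvec^{(k)}$ with $q_m^{(k)}(t_k)>0$ for some $t_k\ge 0$. The easier subcase is when $|\nvec^{(k)}|$ is bounded: pass to a subsequence so that $\nvec^{(k)}\equiv\nvec^{*}$ is constant; then $\epsilon_k$ eventually drops below $\epsilon_{\nvec^{*}}^{*}$, so the existence of a violating $\epsilon_k$-JF$(\nvec^{*})$ trajectory directly contradicts the definition of $\epsilon_{\nvec^{*}}^{*}$. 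The scale-invariance of the fluid dynamics (the map $(\qvec,t)\mapsto(c\qvec,ct)$ preserves $\epsilon$-JF trajectories, since $\Sbar$ is scale-invariant in its argument) would be useful in the general case, e.g., to normalize $q_m^{(k)}(t_k)=1$ and control the time horizon.

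The main obstacle is the subcase $|\nvec^{(k)}|\to\infty$: here no single $\nvec^{*}$ bounds the jump counts, and a naive limit of the $\qvec^{(k)}$ may accrue infinitely many jumps, failing to be an $\epsilon$-JF trajectory at all. The strategy I would attempt is to extract an \emph{effective} limit retaining only finitely many macroscopically significant jumps: group the jumps on each heavy-tailed coordinate across a coarsened time grid and merge them into a single aggregate jump, absorbing the vanishing microscopic contributions and any residual $\lamvec$-fluctuations into the arrival-rate slack $\epsilon$. The target is a $0$-JF$(\nvec^{*})$ limit trajectory with finite $\nvec^{*}$ and $q_m\not\equiv 0$, which (since a $0$-JF trajectory qualifies as an $\epsilon$-JF trajectory for every $\epsilon>0$) would contradict $\epsilon_{\nvec^{*}}^{*}>0$. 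Executing this compactness/merging argument rigorously, while ruling out pathologies in which many vanishing jumps conspire to push $q_m$ away from zero in the limit, is the technical crux of the proposal and, in my view, the reason the statement has so far resisted promotion from conjecture to theorem.
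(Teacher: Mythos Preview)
The paper offers no proof of this statement: it is explicitly labeled a conjecture, and the authors write that establishing it ``appears to be difficult,'' adding that ``technically it amounts to reversing the order of the quantifiers'' in Corollary~\ref{cor:lyap}(b). Your reduction to $(\star)$ and then to $(\star\star)$ is correct and is exactly the reformulation the paper has in mind; your identification of the quantifier swap as the crux matches the authors' own diagnosis.

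Your proposal is therefore not wrong so much as incomplete, and you are candid about this. The bounded-$|\nvec^{(k)}|$ subcase is fine (and uses correctly the monotonicity of the $\epsilon$-JF$(\nvec)$ condition in $\epsilon$). The genuine gap is the unbounded subcase, and the merging strategy you sketch faces a concrete obstruction you do not address: merging two jumps that occur at distinct times into a single jump does not, in general, yield a nearby trajectory, because between the two jump times the fluid drift $\xivec_{\lamvec}(\qvec)$ depends on the current state through the set $\S(\qvec)$ of active schedules, which can switch discontinuously. So a sequence of many small, well-spaced jumps can steer $\qvec$ through a succession of regions with different active schedules in a way that no finite-jump trajectory can replicate, even approximately. (This is related to the example in Appendix~\ref{app:timing}, where the timing of jumps is essential.) Absorbing the discrepancy into the $\epsilon$-slack on $\lamvec(\cdot)$ is not obviously possible either, since $\epsilon_k\to 0$ forces the limit to be a $0$-JF trajectory with no slack at all. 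Until this mechanism is controlled, the compactness argument remains a hope rather than a proof, which is consistent with the statement's status as an open conjecture.
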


If Conjecture~\ref{con:lyap} is true, we will have a tight characterization: a queue will be RDS for all tail exponents $\gamvec \in\Gamma$ if and only if there exists a special $\epsilon$-Lyapunov function that testifies to this. Establishing the conjecture appears to be difficult. Technically it amounts to reversing the order of the quantifiers in the clause
``there exists $\epsilon>0$ such that for all $\gamvec\in\Gamma\ldots$'' in Corollary~\ref{cor:lyap}(b), and showing equivalence with the statement
``for every $\gamvec\in\Gamma$ there exists some $\epsilon>0$...''.

Our Lyapunov-based results 
are relevant to the case where nothing is known about the tail exponents of the heavy-tailed queues, other than the fact that they are positive. 
 On the other hand, 
Lyapunov functions are unlikely to provide useful characterizations of robust delay stability for specific values of the  tail exponents, because there is no apparent way of accounting for the number of jumps through Lyapunov functions.

\section{Proof of the ``if'' direction of Theorem~\ref{th:main} (RJF$\implies$RDS)}\label{sec:proof JF=>RDS}
In this section we provide  the proof of the 
``if'' direction of Theorem~\ref{th:main}, i.e., that the RJF condition implies RDS. 

Throughout this section, we consider a network with $\nn$ nodes, and a set $\S$ of possible service vectors. 
We fix an arrival rate vector $\lamvec^* \succeq {\bf 0}$, a particular queue, $m$, of interest, 
a vector  $\gamvec$ of tail exponents with components in $(0,\infty]$, and some
 $\epsilon>0$ for which the  $\epsilon$-JF$(\gamvec)$ condition holds for $\lamvec^*$ and queue $m$. Our goal is to establish robust delay stability for queue $m$. 
  
The proof is organized in a sequence of lemmas whose proofs are  collected in Appendix~\ref{app:proof JF=>RDS}. However, before proceeding to the formal arguments, it is helpful to provide an overview of the proof.

\subsection{Outline of the proof}\label{s:outline-1} 
Let us  fix an arbitrary time $T$. We aim at upper bounds for the probability $\jP\big(Q_m(T) \geq M\big)$, as $M$ gets large. As long as these bounds are a summable function of $M$, and independent of $T$, it will follow that $\E{Q_m(T)}$ is finite and a bounded function of $T$, which is our goal. Let us now fix some $M$, and keep it fixed throughout, except for the end of the proof. 

The proof relies  on various probabilistic bounds, as well as on  deterministic properties of the MW dynamics. Let us start with the probabilistic part, which is focused on showing that the stochastic system mostly follows the deterministic fluid dynamics, except for certain  ``jumps''  caused by the heavy tails of the arrival processes. We  define a threshold for what constitutes a  jump, and then 
develop a probabilistic bound on the numbers of jumps. 
A difficulty here is that if we use a fixed threshold, and because $T$ is arbitrary, a bound on the number of jumps is not possible. We handle this issue by using a threshold $\theta_t$ that increases almost linearly as we move further to the past, of the form
\begin{equation}\label{eq:def theta t}
	\theta_t=\frac{M+T-t}{{\eta\log (M+T-t)}},\qquad t=0,1,\ldots,T.
\end{equation}
for some positive constant $\eta$ to be defined later. 
At any time $t\le T$ and for any index $j$, we say that $A_j(t)$ is a jump if $A_j(t)>\theta_t$. 
Ignoring logarithmic factors, we show that the jump probability $\jP\big(A_j(t)> \theta_t\big)$ is of order at most $1/(M+T-t)^{1+\gamma_j'}$, where $\gamma_j'$ is slightly smaller than the tail exponent $\gamma_j$. 
By summing over $t$ and 
after some elementary calculations, 
we then obtain that  
$\jP\big(N_j =n_j\big)$ is of order at most $1/M^{n_j \gamma_j'}$,
where $N_j$ is the number of ``jumps'' of the $j$th arrival process during the interval $[0,T]$. 
Then, a further calculation shows that, if $\Nvec=(N_1,\ldots,N_\nn)$, then 
$\jP(\gamvec^T \Nvec>1)$  is of order at most $1/M^{\beta}$, for some  constant $\beta\in(1,2)$, and is therefore a summable function of $M$; see~Lemma~\ref{lem:prob event jump if}.
We then consider 
stochastic fluctuations in the arrival process, other than jumps.  We argue that they average out so that the cumulative arrival process follows its fluid counterpart. 
We refer to this as the ``small fluctuations event,'' and show that it occurs with probability at least 
$1-\nn/M^2$; see \eqref{eq:def event fluc if} and Lemma~\ref{lem:prob event fluc if}.

Having completed the probabilistic analysis, we then switch to  deterministic (sample path) considerations. 
Let $W(\nvec)$ be the set of all points $\qvec$ that can be reached by some $\epsilon$-JF$(\nvec)$ trajectory (see Definition~\ref{def:W}). 
As a first step, we exploit some special properties of the MW dynamics and show that $W(\nvec)$ is $\epsilon$-attracting; that is, any fluid trajectory that starts outside $W(\nvec)$ moves towards that set with rate at least $\epsilon$ (Lemma~\ref{lem:absorb}). We then consider a ``nice'' sample path, that is, a sample path 
for which the small fluctuations event occurs, and 
for which the realized vector of jump counts $\nvec$  satisfies 
$\gamvec^T\nvec\leq 1$. 
(As discussed earlier, ``nice'' sample paths have probability $1-O(M^{-\beta})$ , with $\beta>1$.)   
Our deterministic analysis, outlined in the next paragraph, shows 
that any such sample path stays within O($M$) distance from
the set $W(\nvec)$. Since
$\gamvec^T\nvec\leq 1$, 
the $\epsilon$-JF($\gamvec$) condition implies that any point in $W(\nvec)$ 
satisfies $q_m=0$. Thus, for nice sample paths, we have $Q_m(T)=O(M)$ and therefore, 
$\jP\big( Q_m(T)\ge M  \big)$ is of order at most  $1/ M^{\beta}$, for the constant $\beta\in(1,2)$ mentioned earlier. 
This readily implies a uniform upper bound on $\Exp\big[Q_m(T)\big]$.

The analysis of the dynamics under nice sample paths has two parts. 
We first study the dynamics \emph{between} jumps: we 
rely on the small fluctuations event, and then 
make use of a result from \citep{AlTG19ssc} (Theorem~\ref{th:sensitivity})  to ensure that small fluctuations in the arrivals result into comparably small changes in the resulting stochastic  trajectory; cf.~Lemma~\ref{lem:jump free}. Second, to understand what happens at jump times, we recall that the jump vectors $\nvec$ associated to nice sample paths satisfy $\gamvec^T\nvec\leq 1$. Such vectors $\nvec$  are  allowed in $\epsilon$-JF($\gamvec$) trajectories,
and therefore the jumps cannot take the $\epsilon$-JF($\nvec$) trajectory away from $W(\nvec)$. 
This implies that a ``nice'' sample path stays ``close''  to an $\epsilon$-JF($\nvec$) trajectory, and therefore has a ``small'' $Q_m$.

\subsection{Sensitivity of Max-Weight dynamics}

The proof, for both directions of the theorem, requires fairly precise bounding of the fluctuations of the stochastic trajectories. To this effect, we rely heavily on a
fluctuation bound for the MW dynamics, established in \citep{AlTG19ssc}:

\begin{theorem}[Theorem~2 from \citep{AlTG19ssc}] \label{th:sensitivity-old}
	Fix a network (i.e., the number of nodes and the set $\S$ of possible service vectors), operating under the MW policy, and let $\Qvec(\cdot)$ be the corresponding queue length stochastic process.  
	There exists a (deterministic)   constant $C\ge1$  such that, for any arrival rate vector $\lamvec\succeq {\bf 0}$, any $\qvec(0)\succeq {\bf 0}$,  any $t\ge 0$, and any sample path, if $\qvec(0)= \Qvec(0)$, then
	\begin{equation}
		\big\| \Qvec(t)-\qvec(t)  \big\| \, \le  \, C\left(
		1+\|\lamvec\| +\max_{k<t} \Big\| \sum_{\tau=0}^k \big(\Avec(\tau) -\lamvec   \big)  \Big\|   \right),
	\end{equation}
	where $\qvec(\cdot)$ is the fluid trajectory corresponding to $\lamvec$,  initialized at $\qvec(0)$.
\end{theorem}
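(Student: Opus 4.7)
The plan is to bound the pathwise deviation $\|\Qvec(t)-\qvec(t)\|$ via a decomposition that separates the stochastic noise from the deterministic effect of the scheduling mismatch between the discrete and fluid processes. Define the centered cumulative noise
\begin{equation*}
\Nvec(t)\,:=\,\sum_{\tau=0}^{t}\bigl(\Avec(\tau)-\lamvec\bigr),
\end{equation*}
and the recentered process $\mathbf{Y}(t):=\Qvec(t)-\Nvec(t-1)$ (with the convention $\Nvec(-1)=\mathbf{0}$). Unrolling the MW recursion~\eqref{eq:evolution MW} and rewriting $[\cdot]^+$ via the unused-service vector $\Uvec(t)\succeq\mathbf{0}$, the shifted process satisfies the deterministic recursion
\begin{equation*}
\mathbf{Y}(t+1)\,=\,\mathbf{Y}(t)+\lamvec-\muvec(t)+\Uvec(t),
\end{equation*}
where $\muvec(t)\in\argmax{\nuvec\in\S}\nuvec^T\Qvec(t)$ is the MW scheduler evaluated at the \emph{perturbed} state $\Qvec(t)=\mathbf{Y}(t)+\Nvec(t-1)$, rather than at $\mathbf{Y}(t)$ itself. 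Thus $\mathbf{Y}(\cdot)$ would coincide with a fluid trajectory $\qvec(\cdot)$ corresponding to $\lamvec$ were it not for this scheduler misspecification.

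The core of the proof is then to bound $\|\mathbf{Y}(t)-\qvec(t)\|$ by a quantity depending only on $\max_{k<t}\|\Nvec(k)\|$. I would use the quadratic potential $\Phi(t):=\tfrac{1}{2}\|\mathbf{Y}(t)-\qvec(t)\|^2$ and the variational property $\muvec(t)^T\Qvec(t)\ge\nuvec^T\Qvec(t)$ for all $\nuvec\in\S$, applied with $\nuvec$ equal to a selected fluid scheduler $\muvec_f(t)\in\argmax{\nuvec\in\S}\nuvec^T\qvec(t)$, together with the symmetric inequality for $\qvec(\cdot)$. Adding these yields
\begin{equation*}
\bigl(\muvec(t)-\muvec_f(t)\bigr)^T\bigl(\Qvec(t)-\qvec(t)\bigr)\,\ge\,0.
\end{equation*}
Writing $\Qvec(t)-\qvec(t)=\bigl(\mathbf{Y}(t)-\qvec(t)\bigr)+\Nvec(t-1)$ and expanding $\Phi(t+1)-\Phi(t)$, the quadratic term in $\mathbf{Y}(t)-\qvec(t)$ is non-positive, while the remaining cross-terms are bounded by a multiple of $\|\Nvec(t-1)\|\cdot\|\mathbf{Y}(t)-\qvec(t)\|$ plus an additive $O(1+\|\lamvec\|)$ coming from the uniform boundedness of $\S$. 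A discrete Gr\"onwall-style argument then yields $\|\mathbf{Y}(t)-\qvec(t)\|\le C\bigl(1+\|\lamvec\|+\max_{k<t}\|\Nvec(k)\|\bigr)$, and the triangle inequality $\|\Qvec(t)-\qvec(t)\|\le\|\mathbf{Y}(t)-\qvec(t)\|+\|\Nvec(t-1)\|$ completes the proof.

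The main obstacle is that MW is only \emph{non-expansive} and not strictly contractive: on faces of $\R_+^\nn$ where several vectors of $\S$ attain the maximum, the Lyapunov drift may vanish, so the argument cannot rely on a strict decrease of $\Phi$. Overcoming this requires careful handling of the boundary reflection term $\Uvec(t)$, using the complementarity property $U_j(t)>0\implies Q_j(t+1)=A_j(t)$ to ensure that contributions of boundary episodes get absorbed into the additive constant rather than accumulating over $t$. A secondary subtlety is that the selection of the fluid scheduler $\muvec_f(t)$ must be consistent with the unique fluid trajectory guaranteed after Definition~\ref{def:fluid model}; this would be handled by choosing $\muvec_f(t)$ so that $\lamvec-\muvec_f(t)$ equals the minimum-norm element of $\Drift_\lamvec(\qvec(t))$, which is well-defined and single-valued for all $t\geq 0$.
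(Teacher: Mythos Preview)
The paper does not prove this statement at all: Theorem~\ref{th:sensitivity-old} is quoted verbatim as Theorem~2 of \cite{AlTG19ssc} and used as a black box. The only related argument in the paper is the short proof of the variant Theorem~\ref{th:sensitivity} in Appendix~\ref{app:sens-pr}, which simply combines Theorem~\ref{th:sensitivity-old} with the nonexpansiveness of the MW fluid dynamics and the triangle inequality. So there is nothing to compare your proposal against in this paper.

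That said, your sketch has a genuine gap at the ``discrete Gr\"onwall-style argument.'' From the drift inequality you derive, the best you get is
\[
\Phi(t+1)-\Phi(t)\ \le\ c_1\,\Bigl(\max_{k<t}\|\Nvec(k)\|\Bigr)\sqrt{\Phi(t)}\ +\ c_2\bigl(1+\|\lamvec\|\bigr)^2,
\]
since the MW variational inequality only makes the ``quadratic'' contribution nonpositive, not strictly negative. A recursion $x_{t+1}-x_t\le a\sqrt{x_t}+b$ permits $x_t$ to grow quadratically in $t$; Gr\"onwall does not yield a $t$-independent bound here. The strength of the theorem is precisely that the bound is uniform in $t$, and obtaining this requires more than nonexpansiveness plus bounded noise: one needs an argument that the scheduler mismatch cannot persistently push $\mathbf{Y}(\cdot)$ away from $\qvec(\cdot)$ once the gap exceeds a constant multiple of $\max_k\|\Nvec(k)\|$. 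Your boundary-complementarity observation is relevant but does not by itself close this gap. The actual proof in \cite{AlTG19ssc} uses a considerably more delicate argument along these lines.
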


We will actually use the following variant  of Theorem~\ref{th:sensitivity-old}, which allows for different initial conditions. The proof is given in Section~\ref{app:sens-pr}.

\begin{theorem}\label{th:sensitivity}
Under the same assumptions as in Theorem~\ref{th:sensitivity-old}, except that we allow for $\Qvec(0)$ and $\qvec(0)$ to be different, and for the same constant $C$, we have
$$
		\big\| \Qvec(t)-\qvec(t)  \big\| \, \le  \, \big\| \Qvec(0)-\qvec(0)\big\|+ C\left(	  
		1+ \|\lamvec\|+\max_{k<t} \Big\| \sum_{\tau=0}^k \big(\Avec(\tau) -\lamvec   \big)  \Big\|   \right).
$$
\end{theorem}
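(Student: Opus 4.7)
The plan is to reduce Theorem~\ref{th:sensitivity} to Theorem~\ref{th:sensitivity-old} by inserting an auxiliary fluid trajectory that shares the initial condition of the stochastic process, and then absorbing the initial condition discrepancy into a non-expansiveness property of the MW fluid dynamics. Concretely, I would introduce $\tilde\qvec(\cdot)$, the unique fluid trajectory associated with arrival rate $\lamvec$ and initialized at $\tilde\qvec(0)=\Qvec(0)$. Since $\tilde\qvec(\cdot)$ and $\Qvec(\cdot)$ share their starting point, Theorem~\ref{th:sensitivity-old} applies directly and yields
$$\bigl\|\Qvec(t)-\tilde\qvec(t)\bigr\|\,\le\, C\Bigl(1+\|\lamvec\|+\max_{k<t}\Bigl\|\sum_{\tau=0}^{k}\bigl(\Avec(\tau)-\lamvec\bigr)\Bigr\|\Bigr).$$

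The remaining work is to compare $\tilde\qvec(\cdot)$ with $\qvec(\cdot)$, two fluid trajectories driven by the same arrival rate but starting from the distinct points $\Qvec(0)$ and $\qvec(0)$. I would establish the non-expansiveness estimate
$$\bigl\|\tilde\qvec(t)-\qvec(t)\bigr\|\,\le\,\bigl\|\tilde\qvec(0)-\qvec(0)\bigr\|\,=\,\bigl\|\Qvec(0)-\qvec(0)\bigr\|,\qquad t\ge 0,$$
so that the triangle inequality immediately delivers the theorem. To prove the estimate I would differentiate $\|\tilde\qvec(t)-\qvec(t)\|^{2}/2$ from the right. Using the characterization of the fluid drift as the minimum-norm element of $\Drift_\lamvec(\cdot)=\lamvec-\Sbar(\cdot)$ (Lemma~2 of \cite{AlTG19sensitivity}), write $\dot\qvec=\lamvec-\muvec$ and $\dot{\tilde\qvec}=\lamvec-\tilde\muvec$ for some $\muvec\in\Sbar(\qvec(t))$ and $\tilde\muvec\in\Sbar(\tilde\qvec(t))$. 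The derivative then equals $(\tilde\qvec-\qvec)^{T}(\muvec-\tilde\muvec)$, and adding the two MW defining inequalities $\muvec^{T}\qvec\ge\tilde\muvec^{T}\qvec$ and $\tilde\muvec^{T}\tilde\qvec\ge\muvec^{T}\tilde\qvec$ gives $(\tilde\qvec-\qvec)^{T}(\muvec-\tilde\muvec)\le 0$, which is exactly what we need.

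The main subtlety I expect to encounter is the boundary behavior: when a component $q_{j}$ sits at zero, the actual fluid drift in that coordinate is the appropriate projection onto $\R_{+}^{\nn}$ rather than $\lambda_{j}-\mu_{j}$, and the derivative computation must remain valid after this modification. The cleanest resolution is to observe that the projection only contributes in the $+\evec_{j}$ direction at coordinates where $q_{j}=0$, in which case $\tilde q_{j}-q_{j}=\tilde q_{j}\ge 0$, so the extra term has nonpositive inner product with $\tilde\qvec-\qvec$ and preserves the sign of the derivative. If, as seems likely, this non-expansiveness is already embedded in the analysis underlying Theorem~\ref{th:sensitivity-old} in \cite{AlTG19ssc}, the argument reduces to a brief quotation from that reference; otherwise, one proves it once as a standalone lemma and the present theorem follows at once by the triangle inequality.
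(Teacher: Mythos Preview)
Your proposal is correct and matches the paper's proof essentially verbatim: introduce the auxiliary fluid trajectory $\tilde\qvec(\cdot)$ with $\tilde\qvec(0)=\Qvec(0)$, apply Theorem~\ref{th:sensitivity-old} to $\|\Qvec(t)-\tilde\qvec(t)\|$, invoke the nonexpansive property of the MW fluid dynamics for $\|\tilde\qvec(t)-\qvec(t)\|$, and combine via the triangle inequality. The paper simply cites nonexpansiveness as a known fact rather than re-deriving it, so your extra discussion of the derivative computation and boundary behavior is more than is needed here.
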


\subsection{Arrival process and jumps}
We now return to the formal proof.
Recall that throughout this section, we fix the network, $\lamvec^*\succeq {\bf 0}$, and the tail exponents
$\gamma_{{j}}\in (0,\infty]$. We assume  that the $\epsilon$-JF$(\gamvec)$ condition holds for $\lamvec^*$, queue $m$, and some $\epsilon>0$. 
We fix some positive integers $M$ and $T$;
these will remain fixed throughout, except for the end of the proof, and except for some additional assumptions that 
$M$ is ``large enough.'' 

We consider an arrival process $\Avec(\cdot)$ in the class $\Aclass_{\delta}(\gamvec; \lamvec^*)$ introduced in Definition~\ref{def:RDS}, with $\delta= {\gammast\epsilon }/{20C}$, 
where $C$ is the  constant
in Theorem~\ref{th:sensitivity} and
\begin{equation} \label{def:gammast}
	\gammast\, \equals\, \min_{{j}=1,\ldots,\ell} \gamma_{{j}}.
\end{equation}
In particular,
\begin{equation} \label{eq:A rate eps2 lamstar}
	\big\| \Exp [ \Avec(t)] -\lamvec^* \big\| \,\le\,\frac{\gammast\epsilon }{20C}, \qquad \forall\  t\geq 0.
\end{equation}
Our goal is to derive  an upper bound on  $\jP(\Qvec_{m}(T)\geq M)$ that holds uniformly for every arrival process $\Avec(\cdot)$ in $\Aclass_{\delta}(\gamvec; \lamvec^*)$, every $T$, and every large enough $M$. 
This will then be used to conclude that the RDS property holds.

Since we have assumed that the tail exponents are nonzero, we have $\gammast>0$. Furthermore, in order to simplify the proof, it is convenient to assume that 
\begin{equation}\label{eq:gam le1}
	\gammast\, \le\,1. 
\end{equation}

 \begin{claim}\label{cl:gamma-small}
The assumption $\gamma\leq 1$ can be made without loss of generality.
\end{claim}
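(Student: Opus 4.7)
Plan: If $\gammast \le 1$, there is nothing to show, so suppose $\gammast > 1$. My strategy is to argue that the subsequent proof machinery of Section~\ref{sec:proof JF=>RDS} is \emph{monotone in $\gammast$} in the following sense: every bound in the argument that depends on $\gammast$ as a tail exponent remains valid (though numerically weaker) if $\gammast$ is replaced by any smaller positive value. Concretely, I will set $\tilde\gamma_j \equals \min(\gamma_j,1)$ and $\tilde\gammast \equals \min_j \tilde\gamma_j \le 1$, and rerun the proof using the $\tilde\gamma_j$'s wherever $\gamma_j$ appears as a tail-probability exponent or as an ingredient in the perturbation radius.

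The key monotonicity is this: if $\Avec \in \Aclass_\delta(\gamvec; \lamvec^*)$ has tail exponent $\gamma_j$, then for every $\tilde\gamma_j \in (0, \gamma_j)$ we also have $\Pr\bigl(A_j(t) > x\bigr) \le C\, x^{-(1+\tilde\gamma_j)}$ with a suitable constant $C$. Hence the jump-probability estimates sketched in Section~\ref{s:outline-1}---which will yield $\Pr(N_j = n_j) = O(M^{-n_j \tilde\gamma_j})$ and, by combination, $\Pr(\gamvec^T \Nvec > 1) = O(M^{-\beta})$ for some $\beta > 1$---remain valid when $\tilde\gamma_j$ replaces $\gamma_j$. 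The deterministic part of the proof (the $\epsilon$-attracting property of the sets $W(\nvec)$ to be introduced and the sensitivity bound of Theorem~\ref{th:sensitivity}) involves no tail exponent at all and is therefore unaffected. The choice $\delta = \gammast \epsilon/(20C)$ in \eqref{eq:A rate eps2 lamstar} simply becomes $\delta = \tilde\gammast \epsilon/(20C)$, which is smaller; since the definition of RDS demands only the existence of \emph{some} positive $\delta$ for which delay stability holds, this shrinking is harmless.

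Crucially, neither the hypothesis RJF$(\gamvec)$ nor the conclusion that queue~$m$ is RDS changes under the internal substitution $\gamma_j \mapsto \tilde\gamma_j$: both refer to the original $\gamvec$, and $\tilde\gamma_j$ serves only as a bookkeeping device inside the probabilistic analysis. The only obstacle I foresee is the routine verification that each of the forthcoming lemmas is indeed invariant under this substitution; since the proof's dependence on $\gammast$ arises exclusively through tail-probability bounds (monotone in the exponent) and a single linear choice of $\delta$ (monotone in the radius), the required monotonicity is essentially mechanical and will be pointed out case by case as those lemmas arise.
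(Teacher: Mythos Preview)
Your monotonicity idea has a genuine gap at the step ``by combination, $\Pr(\gamvec^T \Nvec > 1) = O(M^{-\beta})$ for some $\beta > 1$.'' The issue is that you keep the jump event $\{\gamvec^T\Nvec>1\}$ in terms of the \emph{original} $\gamvec$, while bounding $\Pr(\Nvec=\nvec)$ using the \emph{truncated} exponents $\tilde\gamma_j=\min(\gamma_j,1)$. This gives (roughly) $\Pr(\Nvec=\nvec)=O(M^{-\tilde\gamvec^T\nvec})$, and to sum over $\{\nvec:\gamvec^T\nvec>1\}$ and obtain $O(M^{-\beta})$ with $\beta>1$ you would need $\min\{\tilde\gamvec^T\nvec:\gamvec^T\nvec>1\}>1$. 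But that can fail: if $\gamma_j>1$ then $\nvec=\evec_j$ satisfies $\gamvec^T\nvec=\gamma_j>1$ yet $\tilde\gamvec^T\nvec=1$, so the weakened tail bound only yields $\Pr(N_j\ge1)=O(M^{-\tilde\gamma_j'})$ with $\tilde\gamma_j'<1$, which is not summable. (If instead you switched the jump event itself to $\{\tilde\gamvec^T\Nvec\le1\}$, the probabilistic side would go through, but then the deterministic side would require the RJF$(\tilde\gamvec)$ condition, which is strictly stronger than the assumed RJF$(\gamvec)$ and not available.)

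The paper takes an entirely different and much cleaner route: it augments the \emph{system} with a dummy queue~$0$ having tail exponent $\gamma_0<1$ and service vectors $(1,\muvec)$ for each $\muvec\in\S$, so that queue~$0$ does not interact with the others. This forces $\gamma=\min_j\gamma_j<1$ in the augmented system while leaving every original $\gamma_j$ untouched in the tail-probability bounds, and it leaves both the dynamics of queues $1,\ldots,\nn$ and the RDS/RJF status of queue~$m$ unchanged. Your approach could be repaired by truncating only the scalar $\gammast$ (which feeds the constants $\delta,\eta$ and the inequalities in Lemmas~\ref{lem:prob event fluc if} and~\ref{lem:deterministic dQW}) while keeping the individual $\gamma_j$ intact in Lemma~\ref{lem:prob event jump if}; but that requires tracking every appearance of $\gamma$ through the constants and cross-checking consistency, which is precisely the bookkeeping the paper's dummy-queue trick sidesteps.
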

\begin{proof}
Given a system, call it $S$, consider a new system $S'$ in which we add one more queue, queue 0, with $\gamma_0<1$, which ``does not interact'' with the others; for example, for any allowed service vector $\muvec$ in system $S$, we introduce a corresponding service vector $(1,\muvec)$ in system $S'$. This way the dynamics of queues $1,\ldots,\nn$ are the same in the two systems $S$ and $S'$. It follows that for $m\geq 1$, queue $m$ is RDS in system $S$ if and only if it is RDS in system $S'$. Furthermore, because of the lack of interaction, the RJF condition for queue $m$ holds in system $S$ if and only if it holds in system $S'$. 

Once we prove the result for the case $\gamma\leq 1$, we apply it to system $S'$ and obtain the equivalence of RDS and RJF for the latter system. Based on the above discussion, this also establishes the equivalence of RDS and RJF for the original system $S$.
\end{proof}

We define some more constants:
\begin{equation} \label{eq:mumax}
	\mumax = 1+\max_{\muvec\in\S} \|\muvec\|+ \|\lamvec^*\|+\epsilon,
\end{equation}
and 
\begin{equation}\label{eq:def eta}
	\eta\,\equals\, \frac{8000 \,C^2 \nn^2\mumax}{(\gamma\epsilon)^2}.
\end{equation}
Having fixed $M$, $T$, and $\eta$, we finally define $\theta_t$ as in \eqref{eq:def theta t}.

For $j=1,\ldots,\nn$ and $t=0,\ldots,T-1$, we say that $t$ is a jump time for $A_j(\cdot)$, if $A_j(t)>\theta_t$.
For any $j$ and any $\tau\in [0,T)$, we let $N_j(\tau)$ be the (random) number of jumps in $A_j(\cdot)$ that occur during $[0,\tau]$, and also define the corresponding vector 
$\Nvec(\tau)=\big(N_1(\tau),\ldots,N_\nn(\tau)\big)$.
To simplify notation, and as long as $T$ is fixed, 
we use $\Nvec$ and $N_j$ to refer to $\Nvec(T-1)$ and $N_j(T-1)$, respectively.  Note that with this definition, $\Nvec=\Nvec(T-1)$ includes all jumps that can affect $\Qvec(T)$.

We 
consider the event
\begin{equation}\label{eq:def event jump if}
	\event^{\textrm{jump}}(T,M) =\big\{ \gamvec^T \Nvec \le 1\big\}.
\end{equation}
We will argue that it occurs with high probability.

Let $F$ be the set of all $\vn\succeq {\bf 0}$  such that $1< \vg^T \vn \leq 2$. If $F$ is empty, then whenever $\vg^T \vn >1$, we must have
$\vg^T \vn >2$. If $F$ is nonempty, it has finitely many elements, which implies that the minimum $\min_{\vn\in F}  \vg^T \vn$ is attained and its value is greater than 1. In either case, we obtain that 
there exists a constant $\beta>1$ such that
\begin{equation} \label{def:beta}
\mbox{if  } \gamvec^T\vn>1, \mbox{  then  } \vg^T \vn \geq \beta^3.
\end{equation}
Without loss of generality, we can take $\beta$ to satisfy $1<\beta<2$.
In the next lemma we show that the probability that $\gamvec^T\vN>1$
 decays at least as fast as $1/M^{\beta}$. 

\begin{lemma} \label{lem:prob event jump if}
	There exists a constant  $M_1\ge 0$, independent of $T$, such that if $M\ge M_1$, 
then 
$$\Pr\big(\event^{\textrm{jump}}(T,M)\big)
=\Pr(\gamvec^T \Nvec \le 1)
 \geq 1- M^{-\beta}.$$
\end{lemma}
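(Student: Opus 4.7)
The plan is to (1) bound the single-jump probability via Markov's inequality applied to a dominating moment, (2) sum over time for a uniform-in-$T$ bound on $p_j := \Exp[N_j]$, (3) use the independence of arrivals to control the joint law of $\Nvec$, and (4) sum over ``bad'' $\nvec$ using a Chernoff-type multiplier that exploits the discrete gap provided by \eqref{def:beta}.

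For steps (1)-(2), fix for each $j$ with $\gamma_j<\infty$ some $\gamma_j'\in(0,\gamma_j)$ to be chosen later. By \eqref{eq:def gamma process} there exists a nonnegative random variable $X_j$ stochastically dominating $A_j(t)$ for every $t$, with $\xi_j:=\Exp[X_j^{1+\gamma_j'}]<\infty$. Markov's inequality then gives $\Pr(A_j(t)>\theta_t)\le\xi_j\theta_t^{-(1+\gamma_j')}$. Summing over $t=0,\ldots,T-1$, plugging in $\theta_t$ from \eqref{eq:def theta t}, and comparing with the convergent integral $\int_M^\infty u^{-(1+\gamma_j')}\log^{1+\gamma_j'}u\,du$ yields a uniform-in-$T$ estimate $p_j\le\hat{c}_j(\log M)^{1+\gamma_j'}/M^{\gamma_j'}$. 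For any prescribed $\nu>0$, by taking $\gamma_j'$ close enough to $\gamma_j$ and $M$ large enough to absorb logarithmic factors into a tiny power of $M$, I would conclude $p_j\le M^{-\gamma_j+\nu}$. Coordinates with $\gamma_j=\infty$ are handled by the same formula with any sufficiently large finite $\gamma_j'$, so that $p_j$ decays faster than any fixed power of $M$ and they do not contribute at leading order.

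For steps (3)-(4), the independence of the $A_j(t)$ across $(j,t)$ makes the indicators $B_{j,t}=\one[A_j(t)>\theta_t]$ independent Bernoullis, and the standard factorial-moment Markov bound gives $\Pr(N_j\ge n_j)\le p_j^{n_j}/n_j!$. Combined with independence across $j$ and the elementary inequality $|\nvec|\le\gamvec^T\nvec/\gammast$, this yields
\[
\Pr(\Nvec=\nvec)\;\le\;\prod_{j=1}^{\nn}p_j^{n_j}\;\le\;M^{-(1-\nu/\gammast)\gamvec^T\nvec}.
\]
By \eqref{def:beta}, the event $\{\gamvec^T\Nvec>1\}$ is identical to $\{\gamvec^T\Nvec\ge\beta^3\}$, and the Chernoff-style bound $\one[\gamvec^T\nvec\ge\beta^3]\le M^{s(\gamvec^T\nvec-\beta^3)}$ (valid for any $s>0$) gives
\[
\Pr(\gamvec^T\Nvec>1)\;\le\;M^{-s\beta^3}\prod_{j=1}^{\nn}\sum_{n_j=0}^{\infty}M^{-(1-\nu/\gammast-s)\gamma_j n_j}.
\]
I would then choose $s\in(1/\beta^2,\,1)$ and take $\nu$ small enough that $1-\nu/\gammast-s>0$; each geometric series is bounded by a constant for $M$ large, and $s\beta^3>\beta$ makes the prefactor $M^{-s\beta^3}$ strictly smaller than $M^{-\beta}$ once $M\ge M_1$, with $M_1$ absorbing all constants and independent of $T$.

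The main obstacle is the coordinated choice of parameters: $s>1/\beta^2$, $s<1$, $\nu<\gammast(1-s)$, and an exponent gap $s\beta^3-\beta>0$ large enough to absorb leftover multiplicative constants into $M_1$. The hypothesis $1<\beta<2$ is exactly what provides the required slack, since it guarantees $1/\beta^2<1$ so that the interval $(1/\beta^2,1)$ is nonempty. Uniformity in $T$ is preserved throughout, since the defining integral comparison for $p_j$ is bounded by a function of $M$ alone. Coordinates with $\gamma_j=\infty$ are a minor technicality, handled by choosing a sufficiently large finite exponent $\gamma_j'$ rather than taking $\gamma_j'=\gamma_j$.
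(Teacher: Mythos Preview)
Your proposal is correct and follows essentially the same route as the paper: Markov's inequality on a sub-critical moment to bound single-jump probabilities, summation over $t$ (uniform in $T$ thanks to the growing threshold $\theta_t$), independence across coordinates to factorize, and then a geometric-series summation over bad $\nvec$ that exploits the gap $\gamvec^T\nvec\ge\beta^3$ from \eqref{def:beta}. The paper's execution differs only in bookkeeping: it makes the explicit two-case choice $\gamma_j'=\gamma_j/\beta$ (when $\gamma_j<\beta^3$) or $\gamma_j'=\beta^2$ (when $\gamma_j\ge\beta^3$, including $\gamma_j=\infty$), which cleanly handles the infinite-exponent coordinates you flag as a technicality, and it picks the constant $\phi=1-2^{-1/\nn}$ so that the final geometric sum telescopes to exactly $M^{-\beta}$ without leftover constants, whereas your Chernoff multiplier $M^{s(\gamvec^T\nvec-\beta^3)}$ achieves the same end with a residual $2^\nn$ absorbed into $M_1$.
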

The proof is given in Appendix~\ref{appsec:proof lem jump if}, and relies on the intuition that the probability of a jump in the $j$th  arrival process 
scales at most as
$M^{-\gamma_j}$ (approximately), which then implies that the probability of the event $\{\Nvec=\nvec\}$  scales at most as  $M^{-\gamvec^T \nvec}$ (approximately).

\subsection{Fluctuations of the arrival process}
We now study the remaining fluctuations of the cumulative arrival processes, after we exclude the jumps.  
We begin with a concentration inequality for the sum of independent random variables. The proof of our next lemma is given in Appendix~\ref{appsec:proof Bernstein} and is essentially a reformulation of the Bernstein inequality; see, e.g., (1.21) in Appendix~1 of \citep{AnthB09}.
\begin{lemma} \label{lem:Bernstein if}
Suppose that $X_1,\ldots,X_n$ are independent random variables that satisfy
$X_i\in[0,b]$ and $\E{X_i}\le\lambarmax$, for some $b,\lambarmax >0$. Let $Y=X_1+\cdots+X_n$. Then, for any $z\ge 0$,
	\begin{equation} \label{eq:lem bern}
		\Pr\Big( \big|Y-\E{Y} \big| \, >\, z \Big)\, \le\, 2\exp\left(-\frac{z^2}{2b\,\left(\lambarmax n+ z/3\right)}  \right).
	\end{equation}
\end{lemma}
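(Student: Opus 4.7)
The plan is to reduce the statement to the standard Bernstein inequality for sums of independent bounded random variables, which asserts that if $X_1,\ldots,X_n$ are independent with $X_i \in [0,b]$, and $\sigma^2 \equals \sum_{i=1}^n \mathrm{Var}(X_i)$, then for every $z \ge 0$,
\begin{equation*}
\Pr\Big(\big|Y - \E{Y}\big| > z\Big) \,\le\, 2\exp\!\left(-\frac{z^2}{2(\sigma^2 + b z/3)}\right).
\end{equation*}
This is the version cited in (1.21) of Appendix~1 of \cite{AnthB09}. The only work to do is to replace $\sigma^2$ by the cruder quantity $b \lambarmax n$ that appears in the lemma.

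The key observation is the variance bound $\mathrm{Var}(X_i) \le \E{X_i^2} \le b\,\E{X_i} \le b\,\lambarmax$, where the middle inequality uses that $0 \le X_i \le b$ almost surely, and the last inequality uses the hypothesis $\E{X_i} \le \lambarmax$. Summing over $i = 1,\ldots,n$ yields $\sigma^2 \le b\,\lambarmax n$. Substituting this upper bound into the denominator of the standard Bernstein exponent, and noting that the right-hand side of the Bernstein bound is monotone nondecreasing in $\sigma^2$, we obtain
\begin{equation*}
\Pr\Big(\big|Y - \E{Y}\big| > z\Big) \,\le\, 2\exp\!\left(-\frac{z^2}{2(b\,\lambarmax n + b z/3)}\right) \,=\, 2\exp\!\left(-\frac{z^2}{2b(\lambarmax n + z/3)}\right),
\end{equation*}
which matches \eqref{eq:lem bern}.

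Since this is essentially a bookkeeping reduction, I do not anticipate any real obstacle; the only thing to verify carefully is that the standard Bernstein bound as stated in \cite{AnthB09} applies verbatim to non-centered, non-identically distributed, $[0,b]$-valued random variables (it does, after centering, since $X_i - \E{X_i} \in [-b,b]$ and the bound in \cite{AnthB09} is formulated for centered variables bounded by $b$ in absolute value, with variance $\mathrm{Var}(X_i)$). No macros beyond those already defined (\verb|\E|, \verb|\Pr|, \verb|\equals|, \verb|\lambarmax|) are needed.
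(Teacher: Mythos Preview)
Your proposal is correct and matches the paper's proof essentially verbatim: both invoke the standard Bernstein inequality from \cite{AnthB09}, bound each variance by $\mathrm{Var}(X_i)\le \E{X_i^2}\le b\,\E{X_i}\le b\lambarmax$, and substitute $\sigma^2\le b\lambarmax n$ into the denominator.
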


We consider the truncated process $\Avec^*(t)=\min\big\{ \Avec(t), \theta_t\big\}$, where the minimum is taken componentwise.  
We define the ``small fluctuations'' event
\begin{equation}\label{eq:def event fluc if}
	\event^{\textrm{fluc}}(T,M) = \Big\{ \Big\|  \sum_{\tau=t_0}^{t} \big(\Avec^*(\tau)-\lamvec^*\big)  \Big\|\le \frac{\gamma\epsilon}{10 C} \, \big(M+T-t_0\big), \ \  
	\mbox{for } 0\leq t_0\leq t<T \Big\},
\end{equation}
where $C$ is the constant in Theorem~\ref{th:sensitivity}.

\begin{lemma} \label{lem:prob event fluc if}
	There exists a constant  $M_2\ge 0$ independent of $T$, such that if $M\ge M_2$, then $\Pr\big(\event^{\textrm{fluc}}(T,M)\big)  \geq 1- \nn\, M^{-2} $.
\end{lemma}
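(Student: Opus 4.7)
The plan is to bound $1-\Pr\bigl(\event^{\textrm{fluc}}(T,M)\bigr)$ by a union bound of Bernstein estimates for each coordinate $j$ and each pair $(t_0,t)$, after decomposing the cumulative fluctuation into a deterministic bias (due to truncation and the drift allowance $\delta$) and a centered, independent-sum term. Concretely, for every coordinate $j$ and every $0\le t_0\le t<T$, I would write
$$
\sum_{\tau=t_0}^{t}\bigl(A_j^*(\tau)-\lambda_j^*\bigr)
=\sum_{\tau=t_0}^{t}\bigl(A_j^*(\tau)-\E{A_j^*(\tau)}\bigr)+\sum_{\tau=t_0}^{t}\bigl(\E{A_j^*(\tau)}-\lambda_j^*\bigr),
$$
and aim at bounding each piece by $\tfrac{\gamma\epsilon}{20C\sqrt{\ell}}(M+T-t_0)$, so that the Euclidean norm across the $\ell$ coordinates stays below $\tfrac{\gamma\epsilon}{10C}(M+T-t_0)$.

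For the \emph{bias} term, I split $|\E{A_j^*(\tau)}-\lambda_j^*|\le |\E{A_j(\tau)}-\lambda_j^*|+\E{(A_j(\tau)-\theta_\tau)^+}$. The first summand is at most $\delta=\tfrac{\gamma\epsilon}{20C}$ by~\eqref{eq:A rate eps2 lamstar}, so summing over the interval contributes at most $\delta(M+T-t_0)$. For the truncation tail, I fix any $\gamma'\in(0,\gamma)$; since each $A_j(\tau)$ is stochastically dominated by a fixed nonnegative variable $X$ with $\E{X^{1+\gamma'}}<\infty$, Markov's inequality gives $\E{(A_j(\tau)-\theta_\tau)^+}=O(\theta_\tau^{-\gamma'})$. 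Using the definition~\eqref{eq:def theta t} of $\theta_\tau$ and substituting $s=T-\tau$, the sum $\sum_{\tau=t_0}^{t}\theta_\tau^{-\gamma'}$ is $O\bigl((M+T-t_0)^{1-\gamma'}(\log(M+T-t_0))^{\gamma'}\bigr)$, which for $M$ larger than a suitable constant is dominated by $\tfrac{\gamma\epsilon}{40C\sqrt{\ell}}(M+T-t_0)$.

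For the \emph{variance} term, I apply Lemma~\ref{lem:Bernstein if} to the independent variables $A_j^*(\tau)\in[0,\theta_{t_0}]$ for $\tau\in\{t_0,\ldots,t\}$ (using that $\theta_\tau\le\theta_{t_0}$ for $\tau\ge t_0$ because $x/\log x$ is increasing for $x>e$), with $\bar\lambda=\mumax$, $n=t-t_0+1$, and target deviation $z=\tfrac{\gamma\epsilon}{40C\sqrt{\ell}}(M+T-t_0)$. Setting $R=M+T-t_0$, we have $b\le R/(\eta\log R)$, $\bar\lambda n\le\mumax R$, and $z=\Theta(R)$, so the Bernstein exponent is bounded below by an absolute constant times $(\gamma\epsilon)^2\eta\log R/(C^2\ell\mumax)$. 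Substituting the value of $\eta$ from~\eqref{eq:def eta} makes this exponent at least $c\ell\log R$ for a large absolute constant $c$, yielding a per-triple failure probability of at most $2R^{-c\ell}$.

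A union bound over $j\in\{1,\ldots,\ell\}$ and over all pairs $(t_0,t)$, with $s=T-t_0$, produces
$$
2\ell\sum_{s=1}^{T} s\,(M+s)^{-c\ell}\;\le\;2\ell\sum_{s=1}^{\infty} s\,(M+s)^{-c\ell}\;=\;O\bigl(\ell M^{2-c\ell}\bigr),
$$
which is at most $\ell M^{-2}$ once $c\ell\ge 4$ and $M\ge M_2$ for some $M_2$ independent of $T$. Combining with the deterministic bias bound gives $\event^{\textrm{fluc}}(T,M)$ with probability at least $1-\ell M^{-2}$. The main obstacle is the three-way interplay of the truncation level $\theta_{t_0}$ (which controls $b$), the length $n\le R$, and the target $z=\Theta(R)$: the logarithmic denominator in the definition of $\theta_t$ is exactly what makes $b$ sublinear in $R$, thereby turning the Bernstein exponent from a constant into something that grows like $\log R$ and delivers polynomial decay fast enough to absorb the $O(T^2)$ pairs uniformly in $T$.
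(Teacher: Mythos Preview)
Your proposal is correct and follows essentially the same route as the paper: decompose into a deterministic bias (drift allowance plus truncation tail) and a centered sum, control the latter coordinate-wise via the Bernstein bound of Lemma~\ref{lem:Bernstein if} with $b=\theta_{t_0}$, and exploit the logarithmic denominator in $\theta_t$ to turn the Bernstein exponent into $\Theta(\log R)$, yielding polynomial decay summable over all $(t_0,t)$ pairs uniformly in $T$.

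Two small remarks. First, your bookkeeping has a minor slip: the per-coordinate drift bias is $|\E{A_j(\tau)}-\lambda_j^*|\le\delta=\tfrac{\gamma\epsilon}{20C}$, so summing over the interval already gives $\tfrac{\gamma\epsilon}{20C}(M+T-t_0)$, which exceeds your stated per-coordinate target $\tfrac{\gamma\epsilon}{20C\sqrt{\ell}}(M+T-t_0)$; the fix is to handle the drift at the vector level (since \eqref{eq:A rate eps2 lamstar} bounds the \emph{norm} by $\delta$), exactly as the paper does. Second, for the truncation bias the paper uses the simpler uniform bound $\E{A_j(\tau)}-\E{A_j^*(\tau)}\le\int_{M/(\eta\log M)}^\infty\Pr(\bar A_j>x)\,dx$, made small by choosing $M_2$ large, rather than your $\tau$-dependent Markov estimate and sum; your version is slightly more work but equally valid.
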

The proof is somewhat long but straightforward. It relies on the concentration inequality in  Lemma~\ref{lem:Bernstein if}, and is given in Appendix~\ref{appsec:proof lem fluc if}.

\subsection{Deterministic analysis of the dynamics}

We start with some definitions. It is useful to recall here that $\epsilon$-JF trajectories start at zero just before time zero, 
but can become nonzero, at time zero or later, due to jumps or unstable drifts.

\begin{definition}\label{def:W} 
For any nonnegative integer  vector $\nvec$, we define
	$W(\nvec)$ as the set of all points in $\R^\nn_+$ that can be reached by some $\epsilon$-JF$(\nvec)$ trajectory. 
	\end{definition}

Because $\epsilon$-JF trajectories are by definition nonnegative, $W(\nvec)$ is a subset of $\R_+^{\nn}$.
Note that the set $W(\nvec)$ depends on $\epsilon$, but since $\epsilon$ is held fixed, we suppress this dependence from our notation.

\begin{definition}[$\epsilon$-attracting and $\epsilon$-invariant sets] \label{def:absorb}
	\ 
	\begin{itemize}
		\item[(a)]
		A subset $W$ of $\R^\nn_{+}$ is \emph{$\epsilon$-attracting} if   every 
		fluid trajectory $\qvec(\cdot)$  corresponding to $\lamvec^*$ and initialized at some arbitrary $\qvec(0)\succeq {\bf 0}$, satisfies
		\begin{equation}
			\frac{d}{dt} d\big(\qvec(t), W\big) \le -\epsilon,
		\end{equation}
		whenever $\qvec(t)\not\in W$. 	
		\item[(b)]
		A subset $W$ of $\R^\nn_{+}$ is \emph{$\epsilon$-invariant} if for any  $\lamvec \succeq {\bf 0}$ that 
		satisfies $\|\lamvec-\lamvec^*\|\le\epsilon$, and any fluid trajectory $\qvec(\cdot)$ corresponding to $\lamvec$ and 	initialized at some arbitrary	 $\qvec(0)\in W$, we have $\qvec(\tau)\in W$, for all $\tau\ge 0$.
	\end{itemize}
\end{definition}

We observe that if $W$ is  $\epsilon$-attracting, 
and if $0\leq t_0<t_1$, then every fluid trajectory satisfies
\begin{equation}\label{eq:absorb to W}
	d\big(\qvec(t_1),\,W\big) \,\le\, \max\Big\{0,\, d\big(\qvec(t_0),\,W\big)-(t_1-t_0)\epsilon\Big\}.
\end{equation}
It can be shown  that  an $\epsilon$-attracting set is $\epsilon$-invariant, but we do not need this fact. We are interested instead in the converse statement, that
every $\epsilon$-invariant set is $\epsilon$-attracting,  which we then apply to the set $W(\nvec)$; this is the subject of the next lemma.

\begin{lemma}\label{lem:absorb}
\begin{itemize} \item[(i)]
	Every $\epsilon$-invariant set is $\epsilon$-attracting. 
	\item[(ii)] For any $\nvec\in\Z_+^\nn$, the set $W(\nvec)$ 
	in Definition~\ref{def:W} is $\epsilon$-invariant and, a fortiori, $\epsilon$-attracting.
\end{itemize}
\end{lemma}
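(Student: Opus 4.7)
For part (ii), I would argue by a simple concatenation. Suppose $\qvec(0)\in W(\nvec)$; then some $\epsilon$-JF$(\nvec)$ trajectory $\tilde\qvec(\cdot)$ passes through $\qvec(0)$ at some time $t_0\ge 0$. Given any $\lamvec$ with $\|\lamvec-\lamvec^*\|\le\epsilon$, let $\qvec(\cdot)$ be the fluid trajectory corresponding to $\lamvec$ and starting from $\qvec(0)$, and define the spliced function $\hat\qvec(\tau)=\tilde\qvec(\tau)$ for $\tau<t_0$ and $\hat\qvec(\tau)=\qvec(\tau-t_0)$ for $\tau\ge t_0$. Because $\qvec(\cdot)$ is continuous and $\qvec(0)=\tilde\qvec(t_0)$, the splice inherits from $\tilde\qvec$ its right-continuity with left limits, introduces no new discontinuity at $t_0$, and is driven by the piecewise-constant arrival-rate function equal to $\tilde\lamvec$ on $(-\infty,t_0)$ and to $\lamvec$ on $[t_0,\infty)$. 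Hence $\hat\qvec$ is itself an $\epsilon$-JF$(\nvec)$ trajectory corresponding to $\lamvec^*$, and $\qvec(\sigma)=\hat\qvec(t_0+\sigma)\in W(\nvec)$ for every $\sigma\ge 0$, which is precisely $\epsilon$-invariance.

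For part (i), let $W$ be $\epsilon$-invariant, let $\qvec(\cdot)$ be a fluid trajectory at rate $\lamvec^*$, and suppose $\qvec(t)\notin W$. For each $\delta>0$, choose $\yvec_\delta\in W$ with $\|\qvec(t)-\yvec_\delta\|\le d(\qvec(t),W)+\delta$, set $\uvec_\delta=(\qvec(t)-\yvec_\delta)/\|\qvec(t)-\yvec_\delta\|$, and let $\pvec_\delta(\cdot)$ be the fluid trajectory at the perturbed rate $\lamvec^*+\epsilon\uvec_\delta$ starting from $\yvec_\delta$. Since $\|(\lamvec^*+\epsilon\uvec_\delta)-\lamvec^*\|=\epsilon$, $\epsilon$-invariance of $W$ forces $\pvec_\delta(\tau)\in W$ for all $\tau\ge 0$, and hence
\[
d\bigl(\qvec(t+\tau),W\bigr)\;\le\;\bigl\|\qvec(t+\tau)-\pvec_\delta(\tau)\bigr\|\qquad \text{for all }\tau\ge 0.
\]

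The core input is the standard contraction inequality for MW fluid dynamics: for two fluid trajectories $\qvec_1,\qvec_2$ at rates $\lamvec_1,\lamvec_2$,
\[
\frac{d^+}{d\tau}\,\tfrac{1}{2}\bigl\|\qvec_1(\tau)-\qvec_2(\tau)\bigr\|^2 \;\le\; \bigl(\qvec_1(\tau)-\qvec_2(\tau)\bigr)^T(\lamvec_1-\lamvec_2),
\]
which follows from $(\qvec_1-\qvec_2)^T(\muvec_1-\muvec_2)\ge 0$ for any $\muvec_i\in\Sbar(\qvec_i)$, an immediate consequence of MW optimality. I would apply this with $\qvec_1(\tau)=\qvec(t+\tau)$, $\qvec_2(\tau)=\pvec_\delta(\tau)$ at $\tau=0$: using $\qvec_1(0)-\qvec_2(0)=\|\qvec(t)-\yvec_\delta\|\uvec_\delta$ and $\lamvec_1-\lamvec_2=-\epsilon\uvec_\delta$, the right-hand side equals $-\epsilon\|\qvec(t)-\yvec_\delta\|$, so dividing through by $\|\qvec(t)-\yvec_\delta\|$ shows that the right derivative of $\|\qvec(t+\tau)-\pvec_\delta(\tau)\|$ at $\tau=0$ is at most $-\epsilon$. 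Combined with the distance bound displayed above and letting $\delta\to 0$, this yields $D^+d(\qvec(\cdot),W)(t)\le -\epsilon$, from which the uniform decrease property of Definition~\ref{def:absorb}(a) follows by integration.

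The main obstacle will be making the limit $\delta\to 0$ rigorous when $W$ is not closed, so that no genuine nearest point exists: concretely, one must argue that the second-order-in-$\tau$ correction to the contraction bound is uniform across the approximations $\yvec_\delta$, so that $\delta$ can be driven to zero inside the Dini-derivative limit. Uniform bounds on $\max_{\muvec\in\S}\|\muvec\|$ and $\|\lamvec^*\|$ cap the drift magnitudes in the neighborhood of $\qvec(t)$ and handle this. A secondary technicality is the extension of the contraction inequality to the differential-inclusion form of MW, which reduces to the same optimality argument applied to the specific (minimum-norm) drifts realized by the dynamics.
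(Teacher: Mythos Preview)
Your argument is correct and follows essentially the same route as the paper: for part~(ii) the paper gives the identical concatenation argument, and for part~(i) both proofs compare $\qvec(\cdot)$ to a fluid trajectory launched from a (near-)closest point of $W$ at the perturbed rate $\lamvec^*+\epsilon\uvec$, invoke the MW optimality inequality $(\muvec_1-\muvec_2)^T(\qvec_1-\qvec_2)\ge 0$, and conclude that the distance decreases at rate at least $\epsilon$.

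The one substantive difference is how the non-closed case is handled. You work with $\delta$-approximate nearest points $\yvec_\delta$ and then pass to the limit $\delta\to 0$, which (as you correctly identify) requires a uniformity argument on the $o(\tau)$ remainder across the family $\{\yvec_\delta\}$. The paper instead disposes of this issue up front: it observes that the closure $\overline{W}$ of an $\epsilon$-invariant set is again $\epsilon$-invariant (by continuity of MW fluid trajectories in the initial condition), and that $d(\xvec,W)=d(\xvec,\overline{W})$; so one may assume $W$ closed and use a genuine nearest point, eliminating the limit altogether. This is a cleaner way to resolve the obstacle you flagged.
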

The proof is given in Appendix~\ref{appsec:proof lem absorb} and relies on the intuition that for an $\epsilon$-invariant set $W$, $\epsilon$-perturbations of $\lamvec^*$ cannot take the trajectories away from $W$. This means that there must be a drift towards $W$ that locally counteracts such perturbations. The non-expansive  property of the Max-Weight dynamics then enables us to extend this local counteraction result into the desired global attraction property.

Let us fix a sample path of the arrival process. For any $t\in [0,T)$, 
let $\nvec(t)$ be the realized value of $\Nvec(t)$, i.e., $\nvec(t)$ is the vector with the realized number of jumps in the process $\Avec(\cdot)$ up to time $t$ (see the paragraph preceding \eqref{eq:def event jump if}). With a bit of abuse of notation, we let
\begin{equation}
	W(t) \,=\, W\big( \nvec(t-1) \big).
\end{equation}
(The reason for the $t-1$ term on the right-hand side is that  we want to compare
$W(\cdot)$ and $\Qvec(\cdot)$, but   $\Qvec(t)$ is only affected by jumps that happen before time $t$.) 

\subsection{Some more intuition}
The general idea is to show that $\Qvec(\cdot)$ stays close to the set $W(\cdot)$ so that we can  ultimately exploit the fact that $q_m=0$, for every $\qvec\in W(T)$. There are two parts to the argument: 
\begin{itemize}
\item[(i)] If at a certain time, $Q_j(t)$ has a jump (i.e., a large increase), the set $W(t)$ expands along the $j$th coordinate and so the distance between $\Qvec(t)$ and 
$W(t)$ does not increase.
\item[(ii)] In between jumps, $\Qvec(t)$ follows the fluid trajectory, plus some fluctuations, within the range allowed by Lemma~\ref{lem:prob event fluc if}. These fluctuations get ``eliminated'' because the fluid trajectory is attracted to $W(t)$ (Lemma~\ref{lem:absorb}).
Note that~\eqref{eq:def event fluc if} allows for larger 
fluctuations in the far past (see the term $M+T-t_0$); however, for fluctuations in the far past, the motion in the direction of $W(t)$  happens for a longer time period, enough  to eliminate them. This explains the choice of the threshold $\theta_t$
 in \eqref{eq:def theta t}; the logarithmic term in the denominator is included for technical reasons.
\end{itemize}


\subsection{The distance from the invariant set}
Given times that satisfy $0 \leq t_0<t_1\le T$, we say that the interval $(t_0,t_1)$ is jump-free if  $A_{j}(\tau)\le \theta_\tau$, for all $j$ and all $\tau\in (t_0,t_1)$. Note that 
the initial time $t_0$ and the end time $t_1$ are allowed to
be jump times. Let $\Mthree\geq 1$ be a constant, independent of $T$, such that for any $M'\ge \Mthree$,
\begin{equation} \label{eq:q:def Mthree}
	\frac{\gammast\epsilon \,M' }{10} + \frac{\nn\,M'}{\eta\log M'} + \epsilon+ (\|\lamvec^*\|+1)C + \mumax \,\le\, \frac{\gammast\epsilon \,M' }{6}.
\end{equation}

\begin{lemma}\label{lem:jump free}
Suppose that  $M\ge \Mthree$  and fix times that satisfy $0\leq t_0<t_1\le T$. Consider a sample path under which the event 
$\event^{\textrm{fluc}}(T,M)$ occurs,  and the interval $(t_0,t_1)$ is jump-free. Then, 
	\begin{equation}
		d\big(\Qvec(t_1),\, W(t_1)  \big) \,\le\, \max \Big\{ 0,\,  d\big(\Qvec(t_0),\, W(t_0)  \big) - (t_1-t_0) \epsilon    \Big\} \,+\,  \frac{\epsilon\gamma}6\,  (M+T-t_0). 
	\end{equation}
\end{lemma}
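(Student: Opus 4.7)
My plan is to control $\Qvec$ on $(t_0,t_1]$ by comparing it to an ordinary fluid trajectory for the nominal rate $\lamvec^*$ on the strictly jump-free sub-interval $[t_0+1,t_1]$, after first absorbing any jumps at $t_0$ itself into the expansion of $W(\cdot)$ from $W(t_0)$ to $W(t_0+1)$. The key structural property of $W(\cdot)$ I will rely on is that, for any $\nvec\in\Z_+^\nn$, any coordinate $j$, and any $c\ge 0$, $W(\nvec)+c\evec_j\subseteq W(\nvec+\evec_j)$: any JF$(\nvec)$ trajectory reaching $\qvec^\star$ can be extended by inserting an additional jump of size $c$ in coordinate $j$. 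Let $\qvec^\star\in W(t_0)$ be a point closest to $\Qvec(t_0)$, and split $\Avec(t_0)=\Avec^*(t_0)+\Jvec$ into truncated and jump parts. Iterating the structural property across all coordinates with $J_j>0$ gives $\qvec^\star+\Jvec\in W(\nvec(t_0))=W(t_0+1)$. Combined with $\Qvec(t_0+1)=[\Qvec(t_0)-\muvec(t_0)]^++\Avec^*(t_0)+\Jvec$, the $1$-Lipschitzness of the positive-part map (so that $\|[\Qvec(t_0)-\muvec(t_0)]^+-\Qvec(t_0)\|\le\|\muvec(t_0)\|\le\mumax$), and $\|\Avec^*(t_0)\|\le\sqrt{\nn}\,\theta_{t_0}$, the triangle inequality yields the one-step bound
\begin{equation*}
d\bigl(\Qvec(t_0+1),\,W(t_0+1)\bigr)\;\le\;d\bigl(\Qvec(t_0),\,W(t_0)\bigr)+\mumax+\sqrt{\nn}\,\theta_{t_0}.
\end{equation*}

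Next, let $\qvec(\cdot)$ be the fluid trajectory for $\lamvec^*$ initialized at $\qvec(t_0+1)=\Qvec(t_0+1)$. Because $(t_0,t_1)$ is jump-free, $\Avec(\tau)=\Avec^*(\tau)$ for every $\tau\in[t_0+1,t_1-1]$, so Theorem~\ref{th:sensitivity} (shifted by $t_0+1$) together with the partial-sum bound provided by the event $\event^{\mathrm{fluc}}(T,M)$ applied with starting index $t_0+1$ yields $\|\Qvec(t_1)-\qvec(t_1)\|\le C(1+\|\lamvec^*\|)+\tfrac{\gamma\epsilon}{10}(M+T-t_0)$. The jump-free property forces $\nvec(\cdot)$ to be constant on $[t_0,t_1-1]$, so $W(t_1)=W(t_0+1)$, which is $\epsilon$-attracting by Lemma~\ref{lem:absorb}; equation~\eqref{eq:absorb to W} therefore gives
\begin{equation*}
d\bigl(\qvec(t_1),\,W(t_1)\bigr)\;\le\;\max\bigl\{0,\,d\bigl(\Qvec(t_0+1),W(t_0+1)\bigr)-(t_1-t_0-1)\epsilon\bigr\}.
\end{equation*}
Chaining these via the triangle inequality $d(\Qvec(t_1),W(t_1))\le\|\Qvec(t_1)-\qvec(t_1)\|+d(\qvec(t_1),W(t_1))$, substituting the one-step bound above, and peeling out an extra $\epsilon$ outside the maximum to replace $(t_1-t_0-1)$ by $(t_1-t_0)$, leaves a residual error $\mumax+\sqrt{\nn}\,\theta_{t_0}+\epsilon+C(1+\|\lamvec^*\|)+\tfrac{\gamma\epsilon}{10}(M+T-t_0)$. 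Since $M+T-t_0\ge M\ge\Mthree$, the defining inequality~\eqref{eq:q:def Mthree} of $\Mthree$ (with $M'=M+T-t_0$) bounds this residual by $\tfrac{\gamma\epsilon}{6}(M+T-t_0)$, giving the claimed inequality.

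The main obstacle is the handling of $t_0$, which may itself be a jump time with $\Avec(t_0)$ potentially unbounded in the heavy-tailed coordinates. The resolution is to absorb the jump part $\Jvec$ geometrically into the enlargement $W(t_0)\to W(t_0+1)$, leaving only the truncated portion $\Avec^*(t_0)$ as a boundary contribution, whose norm is at most $\sqrt{\nn}\,\theta_{t_0}$. The logarithmic factor in the denominator of $\theta_t$, built into~\eqref{eq:def theta t}, is precisely what makes this term negligible compared to $\tfrac{\gamma\epsilon}{6}(M+T-t_0)$, and every other residual is subsumed into the choice of $\Mthree$ via~\eqref{eq:q:def Mthree}.
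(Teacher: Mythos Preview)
Your proof is correct and follows essentially the same approach as the paper: absorb the jump part of $\Avec(t_0)$ into the enlargement $W(t_0)\to W(t_0+1)$, compare $\Qvec$ to a fluid trajectory on $[t_0+1,t_1]$ via Theorem~\ref{th:sensitivity} and the fluctuation event, invoke the $\epsilon$-attracting property of $W(t_0+1)=W(t_1)$, and absorb all residual constants using~\eqref{eq:q:def Mthree}. The only cosmetic differences are your use of the decomposition $\Avec(t_0)=\Avec^*(t_0)+\Jvec$ (the paper instead sets $J_j=A_j(t_0)$ on jumping coordinates) and that the paper takes $\qvec^\star$ in the \emph{closure} of $W(t_0)$ to guarantee a minimizer exists---a trivial fix you should add.
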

Note that the lemma also applies when $t_1=t_0+1$, so that the set of integers in $(t_0,t_1)$ is empty. 
The proof, which is given in Appendix~\ref{appsec:proof lem jump free}, relies on the sensitivity bound in Theorem~\ref{th:sensitivity} and the fact that $W(t_0)$
is $\epsilon$-attractive.
The first term on the right-hand side, reflects the fact that a fluid trajectory is attracted to $W(\cdot)$, during the jump-free interval; the second term reflects the effect of  the smaller fluctuations during the interval $(t_0,t_1)$.

We then apply Lemma~\ref{lem:jump free} and use strong induction on $t$, for $t\leq T$, to establish the next lemma.
Its proof is given in Appendix~\ref{appsec:proof lem deterministic dQW}.

\begin{lemma}\label{lem:deterministic dQW}
Suppose that $M\ge \Mthree$. 
Consider a sample path under which the events 
 $\event^{\textrm{jump}}(T,M)$ and $\event^{\textrm{fluc}}(T,M)$ occur. Then, 
 $d\big(\Qvec(T), 
 W(T) 
 \big)\le M\epsilon/2$.
\end{lemma}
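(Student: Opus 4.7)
The plan is to apply Lemma~\ref{lem:jump free} iteratively on the jump-free intervals carved out by the realized jump times, and then perform a Lindley-type bookkeeping. Let $\tau_1<\tau_2<\cdots<\tau_K$ denote the (random) jump times of the arrival processes in $[0,T-1]$, and set $\tau_0:=0$ and $\tau_{K+1}:=T$, so each open interval $(\tau_k,\tau_{k+1})$ is jump-free by construction. Under the event $\event^{\textrm{jump}}(T,M)$, the constraint $\gamvec^T\Nvec\le 1$ combined with $\gamma_j\ge\gamma>0$ for every $j$ (from~\eqref{def:gammast}) yields the key bound $K=|\Nvec|\le 1/\gamma$ on the total number of jumps. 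For the base case, $d(\Qvec(0),W(0))=0$, since $\Qvec(0)=\mathbf{0}$ and the identically-zero function is a valid $\epsilon$-JF$(\mathbf{0})$ trajectory (take $\lamvec(t)\equiv\lamvec^*$ and no jumps), so $\mathbf{0}\in W(\mathbf{0})=W(0)$.

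Applying Lemma~\ref{lem:jump free} to each jump-free interval $(\tau_k,\tau_{k+1})$, and writing $\alpha_k:=\epsilon\gamma(M+T-\tau_k)/6$ and $\Delta_k:=\tau_{k+1}-\tau_k$, yields the Lindley-type recursion
\begin{equation*}
d\big(\Qvec(\tau_{k+1}),W(\tau_{k+1})\big)\,\le\,\bigl[d\big(\Qvec(\tau_k),W(\tau_k)\big)-\epsilon\Delta_k\bigr]^+\,+\,\alpha_k.
\end{equation*}
Iterating this from $k=0$ (with initial value $0$) and decomposing the resulting nested maxima in the standard way, the bound on $d(\Qvec(T),W(T))$ reduces to a maximum, over all possible ``last truncation indices'' $m\in\{-1,0,\ldots,K\}$, of linear expressions of the form $\sum_{i\ge m}\alpha_i-\epsilon\sum_{i>m}\Delta_i$ (with the conventions $\alpha_{-1}:=0$ and $\tau_0:=0$).

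The main obstacle is bounding each such expression by $\epsilon M/2$, because $\alpha_k$ contains a piece linear in $T$ that must be cancelled against the attraction sum $\epsilon\sum\Delta_i$. The key algebraic observation is that $\gamma(K-m+1)\le\gamma(K+1)\le 1+\gamma\le 2$, using $K\le 1/\gamma$ and the reduction $\gamma\le 1$ afforded by Claim~\ref{cl:gamma-small}. Expanding $\sum_{i\ge m}(T-\tau_i)=\sum_{j\ge m}(j-m+1)\Delta_j$ and combining with $-\epsilon\sum_{j\ge m+1}\Delta_j$, each $j\ge m+1$ picks up a net coefficient $\epsilon\Delta_j\bigl[\gamma(j-m+1)/6-1\bigr]\le-2\epsilon\Delta_j/3<0$ that absorbs the $T$-dependence arising from the later jump times. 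What remains is a residual contribution of order $\epsilon M$ plus a small boundary term in $\Delta_m$ and $M$, and the inequality defining $\Mthree$ in~\eqref{eq:q:def Mthree} (under the assumption $M\ge\Mthree$) then absorbs the lower-order slack to deliver the claimed bound $d(\Qvec(T),W(T))\le \epsilon M/2$.
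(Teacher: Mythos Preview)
There is a genuine gap in your argument, specifically in the claim that the ``boundary term in $\Delta_m$'' is small and can be absorbed by the inequality~\eqref{eq:q:def Mthree} defining $\Mthree$. The length $\Delta_m$ of the $m$-th jump-free interval can be of order $T$, which is independent of $M$ and unbounded; no choice of $\Mthree$ can absorb such a term. The simplest instance is $K=0$ (no jumps): a single application of Lemma~\ref{lem:jump free} over $(0,T)$ yields
\[
d\big(\Qvec(T),W(T)\big)\le [0-\epsilon T]^+ + \frac{\epsilon\gamma}{6}(M+T)=\frac{\epsilon\gamma}{6}(M+T),
\]
which exceeds $\epsilon M/2$ whenever $T>(3/\gamma-1)M$. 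More generally, in your Lindley unfolding the term with truncation index $m$ retains a contribution $\frac{\epsilon\gamma}{6}\Delta_m$ that your negative coefficients (which multiply $\Delta_j$ for $j\ge m+1$ only) cannot cancel. The root cause is that the additive error in Lemma~\ref{lem:jump free} scales with $M+T-t_0$ (the distance of $t_0$ from the horizon), not with the interval length $t_1-t_0$; the attraction term $-\epsilon(t_1-t_0)$ is trapped inside $[\cdot]^+$ and is wasted whenever the initial distance is already zero.

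The paper's proof avoids this by proving, via strong induction on $t$, a \emph{time-dependent} bound $d(\Qvec(t),W(t))\le \frac{(N(t-1)+2)\gamma\epsilon}{6}(M+T-t)$, which shrinks to $O(\epsilon M)$ as $t\to T$. The key device is the look-back time $t_0=\max\{0,2t_1-T-M\}$, chosen so that $M+T-t_0\le 2(M+T-t_1)$: the additive error from Lemma~\ref{lem:jump free} is then at most twice the target bound at $t_1$, and the attraction over $[t_0,t_1]$ (of length $M+T-t_1$) suffices, via~\eqref{eq:bet<eps}, to annihilate the induction hypothesis at $t_0$. In effect, the paper repeatedly applies Lemma~\ref{lem:jump free} over geometrically shrinking windows, even within a single long jump-free stretch---something your jump-to-jump iteration cannot do, because the jump times are not under your control.
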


\subsection{Bounding $\Exp[\Qvec_m(t)]$}
Let $\Mb \equals \max\{M_1,M_2,\Mthree\}$, 
where $M_1$, $M_2$, and $\Mthree$ are the constants in  Lemma~\ref{lem:prob event jump if}, Lemma~\ref{lem:prob event fluc if}, and \eqref{eq:q:def Mthree}, respectively. 
Since these constants are independent of $T$, the constant $\Mb$ is also independent of $T$. 

Let us consider some
$M\ge \Mb$,  and a sample path under which the events $\event^{\textrm{jump}}(T,M)$ and $\event^{\textrm{fluc}}(T,M)$ occur.
In particular, we have $\gamvec^T \nvec \leq1$,  
where $\nvec$ is the realized value of $\Nvec$, i.e., the vector with the number of jumps until time $T-1$, for that particular sample path. The $\epsilon$-JF($\gamvec$) condition, which we have assumed to hold, implies that 
 every $\epsilon$-JF($\nvec$) trajectory with $\gamvec^T \nvec \leq 1$, 
 keeps $q_m$ at zero, and therefore, every vector $\qvec\in W (T)$ has $q_m=0$. Consequently, for every sample path
in  $\event^{\textrm{jump}}(T,M) \cap \event^{\textrm{fluc}}(T,M)$, we have
\begin{equation}\label{eq:Qm le M}
	Q_m(T)\,\le\, d\big(\Qvec(T), W(T)\big)\,\le\, 
	\frac{M\epsilon}{2},
\end{equation}
where the second inequality follows from Lemma~\ref{lem:deterministic dQW}.
Therefore, for any $M\ge \Mb$, we have
\begin{equation}\label{eq:Qm bound}
	\begin{split}
		\Pr\left(Q_m(T) > \frac{M\epsilon}2 \right) \,&\le\,  \Big(1 - \Pr\left( \event^{\textrm{jump}}(T,M) \right)   \Big) + \Big(1 - \Pr\left( \event^{\textrm{fluc}}(T,M) \right)   \Big)\\
		&\le\,  M^{-\beta}+ \nn M^{-2} ,
	\end{split}
\end{equation}
where the first inequality follows from \eqref{eq:Qm le M} and the union bound, and the second inequality is due to Lemmas~\ref{lem:prob event jump if} and~\ref{lem:prob event fluc if}.
Since $\beta$ is by definition greater than 1, the formula
$\E{Q_m(T)}\,=\, \int_{0}^{\infty} \Pr\big( Q_m(t) > M\big) \,dM $
implies that $\E{Q_m(T)}$ is bounded above by a constant that does not depend on $T$.
Furthermore note that this bound applies uniformly to all processes in the class
$\Aclass_{\delta}(\gamvec; \lamvec^*)$, for  $\delta={\gammast\epsilon }/{20C}$
(see \eqref{eq:A rate eps2 lamstar}). This shows that

queue $m$ is robustly delay stable and completes the proof of the first direction of Theorem~\ref{th:main}.


\medskip

\section{Proof of the reverse direction of Theorem~\ref{th:main} (RDS$\implies$
RJF)} \label{sec:proof RDS=>JF}
In this section we prove the 
reverse (``only if'') direction of Theorem~\ref{th:main}, i.e., that RDS implies that the $\epsilon$-JF$(\gamvec)$ condition holds for some $\epsilon>0$. The proof is organized in a sequence of lemmas whose proofs  are collected in Appendix~\ref{app:proof RDS=>JF}. 

We actually prove the contrapositive and start by assuming
that for every $\epsilon>0$, there exists  an $\epsilon$-JF$(\nvec)$ trajectory  $\qvec^{\epsilon}(\cdot) $, 
with $\gamvec^T\nvec\leq 1$, and 
such that $\qvec^\epsilon_m(t)>0$, at some  positive time $t$. 

We keep $\lamvec^*$, $\gamvec$, and $\epsilon$ fixed throughout the proof, and show that there exists an 
arrival processes in the class $\Aclass_{\epsilon}(\gamvec; \lamvec^*)$ for which queue $m$ is \emph{not} delay stable. Since $\epsilon$ can be arbitrarily small, this will imply that there exists no $\delta$ such that queue $m$ is delay stable for all arrival processes in the class $\Aclass_{\delta}(\gamvec; \lamvec^*)$, and, therefore, queue $m$ is not RDS.
However, before proceeding to the formal arguments, we overview informally the key ideas in the proof.

\subsection{Outline of the proof}
\label{s:outline-2}
The main idea is to construct a certain arrival process $\Avec(\cdot)$ 
in the class $\Aclass_{\epsilon}(\gamvec; \lamvec^*)$, whose arrival rate is a time-scaled  version of the piecewise constant rate $\lamvec(\cdot)$ associated with the $\epsilon$-JF($\nvec$) trajectory. We then use the bounded sensitivity property of the MW dynamics (Theorem~\ref{th:sensitivity}) to
show that 
the resulting process $\Qvec(\cdot)$ tracks a suitably scaled (by a factor of $T$) version of 
 the $\epsilon$-JF trajectory $\qvec^{\epsilon}(\cdot)$, with substantial probability.
 In particular, 
$\Qvec(T)$ will be (with substantial probability) comparable to $ T\qvec^{\epsilon}(1)$,  leading to a large value of $\Exp\big[Q_m(T)\big]$. 
This part of the argument capitalizes on 
the fact that the number of jumps of the $\epsilon$-JF($\nvec$) trajectory is limited by the condition $\gamvec^T\nvec\leq 1$.

On the technical side, the tracking result involves two separate arguments:
\begin{itemize}
\item[(a)]
Whenever the $\epsilon$-JF($\nvec$) trajectory has a jump, at some time $\tau$,  
there is substantial probability that the stochastic process also has a jump at some time near the scaled counterpart,  $T\tau$, of $\tau$; 
see Lemma~\ref{lem:bound episode jump}. 
\item[(b)] In between jump times of the $\epsilon$-JF($\nvec$) trajectory, we use concentration inequalities to show that there is a fairly large probability that the stochastic process stays close to its fluid counterpart.
\end{itemize}

\subsection{Jumps  of the $\epsilon$-JF($\nvec$) trajectory}
We fix  a network with $\nn$ nodes,  a set $\S$ of possible service vectors, an
arrival rate vector $\lamvec^* \succeq {\bf 0}$, a particular queue, $m$, of interest, 
and a vector  $\gamvec$ of tail exponents with components in $(0,\infty]$.
We also fix some $\epsilon>0$ and assume that  the $\epsilon$-JF($\gamvec$) condition fails to hold.

We start with a few elementary observations, namely, that the $\epsilon$-JF($\nvec$) trajectories of interest can be taken, without loss of generality, through scaling and perturbations, to have some convenient properties that allow us to simplify subsequent notation and arguments. The proof is given in Appendix~\ref{ap:convenient}.

\begin{lemma}\label{l:convenient}
If the $\epsilon$-JF($\gamvec$) condition fails to hold, then
there exists some $\nvec\succeq {\bf 0}$ with   $\gamvec^T\nvec\leq 1$, and an 
$\epsilon$-JF($\nvec$) trajectory $\qvec^{\epsilon}(\cdot) $ with the following properties:
\begin{itemize}
\item[(a)] $q_m^{\epsilon}(1)>0$;
\item[(b)] The times at which $\qvec^{\epsilon}(\cdot) $ is discontinuous (the ``jump times'') all belong to the open interval $(0,1)$.
\item[(c)] At each jump time, exactly one of the components of $\qvec^{\epsilon}(\cdot) $ is discontinuous.
\item[(d)] The arrival rate associated to  $\qvec^{\epsilon}(\cdot) $ satisfies $\inf_t \lambda_j(t)>0$, for all $j$.
\end{itemize}
\end{lemma}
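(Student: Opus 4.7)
The plan is to start from a witness to the failure of the $\epsilon$-JF$(\gamvec)$ condition and apply a sequence of small, controlled modifications, each of which installs one of the four properties while preserving positivity of $q_m$ at time $1$. By hypothesis, there exist $\nvec\succeq {\bf 0}$ with $\gamvec^T\nvec\le 1$, an $\epsilon$-JF$(\nvec)$ trajectory $\tilde\qvec(\cdot)$ with associated rate function $\tilde\lamvec(\cdot)$, and some $t^*\geq 0$ with $\tilde q_m(t^*)>0$. If the only value of $t^*$ available is $t^*=0$, I would note that between jumps the right-derivative of $\tilde q_m$ is bounded in absolute value by $\mumax$ (by boundedness of $\S$ and $\lamvec$), and positive jumps only help, so right-continuity gives $\tilde q_m(t)>0$ on a positive-length interval $(0,t^*+\eta)$. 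Thus I can assume $t^*>0$.

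Next I would exploit the scale-invariance of the Max-Weight dynamics: since $\Sbar(\xvec)$ depends only on the direction of $\xvec$, the rescaled function $\hat\qvec(s):=c\,\tilde\qvec(s/c)$ is itself an $\epsilon$-JF$(\nvec)$ trajectory, with arrival rate $\hat\lamvec(s)=\tilde\lamvec(s/c)$ (still satisfying the budget constraint $\|\hat\lamvec(s)-\lamvec^*\|\le\epsilon$) and jumps merely reparametrized in time. Taking $c=1/t^*$ yields $\hat q_m(1)>0$, and truncating the trajectory at $s=1$ discards only jumps outside $[0,1]$, which can only decrease $\gamvec^T\nvec$. This installs (a) and places every remaining jump in $[0,1]$. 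I then address (b), (c), and (d) together by a simultaneous small perturbation: (b) any jump currently located at $0$ is translated to a small time $\delta_1>0$; (c) any pair of jumps sharing a time are separated by a small offset $\delta_2>0$; (d) the arrival-rate function is first contracted toward $\lamvec^*$, $\hat\lamvec_{\mathrm{int}}(s):=\lamvec^*+\alpha(\hat\lamvec(s)-\lamvec^*)$ with $\alpha\in(0,1)$ close to $1$, and then lifted componentwise via $\lamvec^\epsilon_j(s):=\max\{\hat\lamvec_{\mathrm{int},j}(s),\,\delta_3\}$. The triangle inequality gives $\|\lamvec^\epsilon(s)-\lamvec^*\|\le\alpha\epsilon+\delta_3\sqrt{\nn}$, so for $\alpha$ close enough to $1$ and $\delta_3$ small enough, the $\epsilon$-budget is respected with slack, and simultaneously $\inf_s\lambda^\epsilon_j(s)\ge\delta_3>0$.

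The remaining task, and the main technical point, is continuity of the resulting trajectory $\qvec^\epsilon(\cdot)$ with respect to the perturbation parameters $(\delta_1,\delta_2,\delta_3,1-\alpha)$. I would argue piecewise on the finite collection of jump-free subintervals of $[0,1]$: on each such interval the fluid dynamics is driven by a constant arrival rate and has a Lipschitz-type sensitivity to changes in the rate and the initial condition (a fluid-to-fluid analogue of Theorem~\ref{th:sensitivity}, obtained by comparing the same fluid flow under two different constant rates and initial states). Jumps enter trivially: translating a single jump by $\delta_1$ only changes the trajectory over that interval by $O(\delta_1)$, and separating coincident jumps by $\delta_2$ has a similar effect. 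Summing errors across the finite number of pieces, the total deviation $\|\qvec^\epsilon(1)-\hat\qvec(1)\|$ can be made strictly less than $\hat q_m(1)/2$, so $q^\epsilon_m(1)>\hat q_m(1)/2>0$, as required. The main obstacle I anticipate is carefully bookkeeping the cumulative sensitivity across the (possibly many, but finite) pieces and ensuring that the arrival-rate perturbation in step~(d) stays strictly within the budget uniformly in time, which is why the contraction-then-lift construction is used rather than a direct componentwise bump.
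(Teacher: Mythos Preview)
Your proof is correct and follows essentially the same route as the paper: rescale time to place the positive value of $q_m$ at $t=1$, discard jumps after time $1$, and then perturb jump times and arrival rates via continuity of the fluid dynamics on each jump-free piece. Your contract-then-lift construction for property~(d) is in fact more careful than the paper's (which takes a bare componentwise maximum $\max\{\lambda_j(t),\epsilon'\}$ without explicitly checking the $\epsilon$-budget); the only small omission is that jumps landing exactly at $t=1$ need the same small translation into $(0,1)$ as those at $t=0$.
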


For the rest of the proof, we fix an $\epsilon$-JF($\nvec$) trajectory with the properties in Lemma~\ref{l:convenient}, together with the associated vector $\nvec$ and rate function $\lamvec(\cdot)$. We define
$n=n_1+\cdots+n_\nn$, which is  the total number of jumps.\footnote{We note that $n$ may be equal to zero. For example, for a single unstable queue, an $\epsilon$-JF trajectory becomes positive, even in the absence of jumps. Such a system is not RDS, consistent with our result.}

We define $\Theta_0=0$, $\Theta_{n+1}=1$, and for $k=1,\ldots,n$, let $\Theta_k$ be the $k$th jump time. In particular,
\begin{equation}\label{eq:Tets}
	0=\Theta_0<\Theta_1<\cdots <\Theta_n<\Theta_{n+1}=1.
\end{equation}
We will use $j_k$ to refer to the queue at which the $k$th jump occurs, and $a_k$ to refer to the size of the $k$th jump. In particular,
at time $\Theta_k$, for $k=1,\ldots,n$, $q_j(\cdot)$ is continuous for every $j\neq j_k$, and
$$
q^{\epsilon}_{j_k}(\Theta_k) = q^{\epsilon}_{j_k}(\Theta_k^-) +a_k.$$

\subsection{Defining certain constants}\label{s:constants}

As in \eqref{eq:mumax},
we define
\begin{equation} \label{eq:mumax2}
\mumax = 1+\max_{\muvec\in\S} \|\muvec\|+ \|\lamvec^*\|+\epsilon.
\end{equation}
Moreover, similar to \eqref{def:gammast}, we let
\begin{equation} \label{def:gammast 2}
	\gammast = \min_{j=1,\ldots,\nn} \gamma_j >0.
\end{equation}
By arguing as in Claim~\ref{cl:gamma-small}, we can and will assume, without loss of generality, that 
$\gamma\leq 1$. 
We then define a positive constant 
\begin{equation}\label{eq:def delta =q eps 1}
	c=q^\epsilon_m(1), 
\end{equation}
and also let 
\begin{equation}\label{eq:delta'}
	\dd  \,\equals \,  \frac12\,\min\left\{\frac{\gamma c}{4(1+4\mumax)},\quad \min_{\kappa=0,\ldots,n}\big\{\Tet_{\kappa+1}-\Tet_\kappa\big\},\quad \frac{\min_{\kappa=1,\ldots,n} a_\kappa}{1+2\mumax} \right\}.
\end{equation}
(In case $n=0$, the last term inside the brackets, is taken to be zero.)

\subsection{Defining the stochastic arrivals}\label{s:def-stoch-arr} 

Let us fix some constants $t_0\geq 0$ and $T> 0$, and keep them fixed until the end of the proof, in Section~\ref{s:beyond}. 
In this subsection, we define a stochastic arrival process  over an interval of the form $[t_0,t_0+T)$, thus constructing what we call an \emph{episode} of the overall process. Later, in Section~\ref{s:beyond}, we concatenate multiple episodes, to construct the stochastic process over the entire timeline $[0,\infty)$.

Consider the arrival rate function $\lamvec(\cdot)$ of the
$\epsilon$-JF($\nvec$) trajectory; in particular,  
$\| \lamvec(t)-\lamvec^* \|\leq\epsilon$. 
The arrival rate vector for the stochastic process during the episode is 
a time-scaled (by a factor of $T$) and shifted (by $t_0$) 
version of $\lamvec(\cdot)$. 
More concretely, we let
\begin{equation}\label{eq:lbar}
\bar{\lamvec}(t) =\lamvec\Big(\frac{t-t_0}{T}\Big),\qquad t\in[t_0,t_0+T).
\end{equation}
Clearly,
\begin{equation} \label{eq:lldiff}
	\big\|\bar{\lamvec}(t)-\lamvec^*\big\|\, \le\,\epsilon,\qquad \forall\ t\in [t_0,t_0+T). \end{equation}
Furthermore, because of Lemma \ref{l:convenient}(d), 
$\inf_t \bar{\lambda}_j(t)>0$, for every $j$. 

We now digress to introduce certain constants that will be used to specify the exact form of the distribution of the arrival processes. 
For any $\alpha>0$, we let
\begin{equation} \label{eqa:def sigma}
	\sigma(\alpha) \,\equals\, \int_{\mumax}^{\infty} x^{-(1+\alpha)}\log(x+1)\,dx,
\end{equation}
where $\mumax$ was defined in \eqref{eq:mumax2}. 
Then, for any $\alpha>0$, we have
\begin{equation}\label{eq:sig/sig}
	\frac{\sigma(\alpha)}{\sigma(1+\alpha)} \,=\, \frac{\int_{\mumax}^{\infty}x\cdot x^{-(2+\alpha)}\log(x+1)\,dx}{\int_{\mumax}^{\infty} x^{-(2+\alpha)}\log(x+1)\,dx}
	\,\ge\,\frac{\mumax\,\int_{\mumax}^{\infty} x^{-(2+\alpha)}\log(x+1)\,dx}{\int_{\mumax}^{\infty} x^{-(2+\alpha)}\log(x+1)\,dx} \,=\, \mumax.
\end{equation}

\begin{definition}[Arrivals during an episode] \label{def:episode}
	Given $T>0$ and $t_0\ge0$, 
	we define the arrival process over the interval $t\in[t_0,t_0+T)$ (which we call an \emph{episode}) as follows:
	\begin{itemize}
	\item[(a)] If $\gamma_j=\infty$, then $A_j(t)=\bar\lambda_j(t)$ (deterministically).
	\item[(b)] If $\gamma_j<\infty$, then $A_j(t)$ is a random variable with probability density function
		\begin{equation}\label{eq:def pdf A}
		f_{A_j(t)}(x) \,= \,  \frac{{\bar\lambda_j}(t)}{\sigma(\gamma_j)}\cdot x^{-(2+\gamma_j)}\log(x+1)\,\one\left(x\ge\mumax\right)  \, + \,  \left(1- \frac{{\bar\lambda_j(t)}\sigma(1+\gamma_j)}{\sigma(\gamma_j) }  \right) \delta(x),
	\end{equation} 
where	$\one(\cdot)$ denotes the indicator function, $\delta(\cdot)$ is Dirac's delta function, and $\sigma(\cdot)$ is as defined in \eqref{eqa:def sigma}.
	\item[(c)] The $A_j(t)$, for different $j$ and $t$, are independent.
	\end{itemize}
We refer to $\big\{\Avec(t): t\in[t_0,t_0+T)\big\}$ as an \emph{episode-adjusted} arrival process.
\end{definition}

From \eqref{eq:sig/sig}, we obtain $\sigma(\gamma_j)/\sigma(1+\gamma_j) \ge \mumax\ge \bar\lambda_j(t)$, where the last inequality is due to \eqref{eq:mumax2} and \eqref{eq:lldiff}. 
Therefore,  the coefficient of  the delta function is nonnegative, and we have a well-defined distribution. 
It is also easy to verify that 
$\int_0^\infty f_{A_j(t)}(x)\,dx=1$.
Furthermore, from~\eqref{eq:lldiff}, $\bar{\lambda_j}(t)$ is bounded above. It follows that 
each $A_j(t)$ is dominated by a random variable $\bar{A}_j$ with tail exponent $\gamma_j$, and consequently the process also has tail exponent $\gamma_j$. 
Moreover, $\E{A_j(t)} = \int_0^\infty x\, f_{A_j(t)}(x)\,dx=\bar{\lambda}_j(t)$. 
Therefore, using again~\eqref{eq:lldiff}, we have $\big\|\E{\Avec(t)}-\lamvec^*\big\|\, \le \,\epsilon$, for all $t\in[t_0,t_0+T)$. In particular, the process $\Avec(\cdot)$ belongs to the class $\Aclass_{\epsilon}(\gamvec; \lamvec^*)$, as desired.

\subsection{Probabilistic analysis}\label{s:prob-converse}
We aim to show that during an episode, and with significant probability, the queue vector process $\Qvec(\cdot)$ stays close to a scaled version of the $\epsilon$-JF trajectory, i.e., that 
$$\Qvec(t) \approx T\qvec^{\epsilon}\Big(\frac{t-t_0}{T}\Big),\qquad\forall\  t\in[t_0,t_0+T).$$
 To accomplish this 
we consider each interval of the form $[t_0+\Tet_\kappa T,t_0+\Tet_{\kappa+1} T)$, for $k=0,1,\ldots,n$,  and show that there is a substantial probability that the arrival process we have defined has the following properties: 
(a)  as long as $k>0$, it has a jump in a small segment in the beginning of the interval, and 
(b)  it has small fluctuations in the rest of the interval.
Below, we define events that capture the above two properties.

For $\kappa=1,\ldots,n$, let $\Bvec_\kappa$ be the cumulative arrival vector over the interval $[t_0+\Tet_\kappa T, t_0+\Tet_\kappa T+\dd T )$, i.e.,\footnote{To avoid notation clutter, we present the proof as if $\Theta_k T$ or $\Theta_k T+dT$ were  integer, which is not necessarily the case. Everything goes through, with occasional trivial modifications, if a sum of the form $\sum_{t=a}^{b} c_t$ is interpreted as $\sum_{t=\lfloor a \rfloor}^{\lfloor b \rfloor} c_t$.
}
\begin{equation}\label{eq:B-cumul}
	\Bvec_\kappa \,=\, \sum_{t = t_0+ \Tet_\kappa T}^{t_0+\Tet_\kappa T +\dd T -1} \Avec(t).
\end{equation}
We let $\event_\kappa^{\textrm{jump}}$ be the event that $\Bvec_\kappa$ emulates the jump in $\qvec^{\epsilon}(\cdot)$ at time $\Theta_\kappa$,
scaled by $T$, i.e.,
\begin{equation}\label{eq:E-jump}
	\event_\kappa^{\textrm{jump}} = \Big\{  \big\| \Bvec_\kappa - T a_\kappa \evec_{j_\kappa}  \big\|\,\le\, \dd T (1+2\mumax)\Big\},
\end{equation}
where 
$j_\kappa$ has been defined as the index of the queue at which the $\kappa$th jump takes place,  
and $\dd$ is the constant defined in~\eqref{eq:delta'}.

\begin{lemma}\label{lem:bound episode jump}
	There exist $\psi \in(0,1)$ 
	and $\bar{T}_1>0$, such that if $T\ge \bar{T_1}$, then, for $\kappa=1,\ldots,n$, and for the episode-adjusted arrival process over an episode $[t_0,t_0+T)$, we have  \ $\Pr(\event_\kappa^{\textrm{jump}})\ge \psi T^{-\gamma_{j_\kappa}} \log T$.
\end{lemma}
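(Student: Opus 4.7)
The strategy is to build an explicit disjoint family $\{G(t^*):t^*\in I_\kappa\}$ of sub-events of $\event_\kappa^{\textrm{jump}}$, each having probability of order $T^{-(1+\gamma_{j_\kappa})}\log T$, where $I_\kappa := [t_0+\Tet_\kappa T,\ t_0+\Tet_\kappa T+\dd T)$. Fix a threshold $\theta' = T^\alpha$ with $\alpha\in(1/(1+\gammast),1)$, and a constant $\delta''\in(0,\dd]$ to be chosen later. For each $t^*\in I_\kappa$, let $G(t^*)$ be the event that
\[
A_{j_\kappa}(t^*)\in\bigl[Ta_\kappa-\tfrac{\delta''T}{2},\,Ta_\kappa+\tfrac{\delta''T}{2}\bigr] \quad\text{and}\quad A_j(t)\le\theta'\ \text{for all }(j,t)\in\{1,\ldots,\nn\}\times I_\kappa,\ (j,t)\neq(j_\kappa,t^*).
\]
Since $Ta_\kappa-\delta''T/2>\theta'$ for $T$ large, the events $G(t^*)$ are pairwise disjoint. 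I will show that on $G(t^*)$, apart from a conditional Bernstein event of probability $o(1)$, the norm condition defining $\event_\kappa^{\textrm{jump}}$ in \eqref{eq:E-jump} holds.

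For the lower bound on $\jP(G(t^*))$, direct integration of the density~\eqref{eq:def pdf A} gives
\[
\jP\!\bigl(A_{j_\kappa}(t^*)\in[Ta_\kappa-\tfrac{\delta''T}{2},Ta_\kappa+\tfrac{\delta''T}{2}]\bigr) \,=\, \frac{\bar\lambda_{j_\kappa}(t^*)}{\sigma(\gamma_{j_\kappa})}\!\int_{Ta_\kappa-\delta''T/2}^{Ta_\kappa+\delta''T/2}\! x^{-(2+\gamma_{j_\kappa})}\log(x+1)\,dx \,\ge\, c_1 T^{-(1+\gamma_{j_\kappa})}\log T,
\]
for some $c_1>0$, using $\inf_t\bar\lambda_{j_\kappa}(t)>0$ from Lemma~\ref{l:convenient}(d). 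Tail integration likewise gives $\jP(A_j(t)>\theta')=O(T^{-\alpha(1+\gamma_j)}\log T)$, and by independence across $(j,t)$, together with the choice $\alpha(1+\gammast)>1$, the probability that every other arrival is at most $\theta'$ is $1-O(\nn\dd T\cdot T^{-\alpha(1+\gammast)}\log T)=1-o(1)$. Hence $\jP(G(t^*))\ge \tfrac{c_1}{2}T^{-(1+\gamma_{j_\kappa})}\log T$ for $T\ge\bar T_1$, and disjointness yields
\[
\jP\!\Bigl(\bigcup\nolimits_{t^*\in I_\kappa} G(t^*)\Bigr) \,\ge\, \dd T\cdot \tfrac{c_1}{2}T^{-(1+\gamma_{j_\kappa})}\log T \,=\, \tfrac{c_1\dd}{2}\,T^{-\gamma_{j_\kappa}}\log T.
\]

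To conclude that $\event_\kappa^{\textrm{jump}}$ holds on most of this union, apply Bernstein's inequality (Lemma~\ref{lem:Bernstein if}) with $b=\theta'=T^\alpha$, $n=\dd T$, and $\lambar=\mumax$ to the coordinate sums $S_j:=\sum_{t\in I_\kappa,\,(j,t)\neq(j_\kappa,t^*)}A_j(t)$: conditional on $G(t^*)$, these are sums of independent variables bounded by $\theta'$, so $|S_j-\E{S_j}|\le\dd T/(4\nn)$ for every $j$ except on an event of probability $2\nn\exp(-\Omega(T^{1-\alpha}))=o(1)$. Since $\|\sum_{t\in I_\kappa}\bar\lamvec(t)\|\le\dd T\mumax$ by~\eqref{eq:lldiff} and~\eqref{eq:mumax2}, the triangle inequality gives
\[
\|\Bvec_\kappa-Ta_\kappa\evec_{j_\kappa}\| \,\le\, |A_{j_\kappa}(t^*)-Ta_\kappa|+\bigl\|(S_1,\ldots,S_\nn)\bigr\| \,\le\, \tfrac{\delta''T}{2}+\dd T\mumax+\tfrac{\dd T}{4} \,\le\, \dd T(1+2\mumax),
\]
for $\delta''\le\dd$ and $T$ large, which is precisely~\eqref{eq:E-jump}. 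Combining with the $o(1)$ Bernstein loss yields $\jP(\event_\kappa^{\textrm{jump}})\ge\psi\,T^{-\gamma_{j_\kappa}}\log T$ for a suitable $\psi\in(0,1)$ and all $T\ge\bar T_1$.

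The main obstacle is bookkeeping: obtaining $\psi$ and $\bar T_1$ that are uniform in the episode start $t_0$ and in the jump index $\kappa\in\{1,\ldots,n\}$. Uniformity in $t_0$ is automatic because the density~\eqref{eq:def pdf A} depends on $t$ only through $\bar\lambda_j(t)=\lambda_j((t-t_0)/T)$; uniformity in $\kappa$ comes from the finiteness of $n$ together with $\inf_t\lambda_{j_\kappa}(t)>0$ from Lemma~\ref{l:convenient}(d) and $\min_\kappa a_\kappa>0$, which force the per-$\kappa$ constants $c_1$ to be bounded away from zero.
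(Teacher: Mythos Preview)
Your proof is correct and follows the same high-level architecture as the paper's: you build disjoint events indexed by the ``designated jump time'' $t^*\in I_\kappa$, each asserting that $A_{j_\kappa}(t^*)$ lands in a window near $Ta_\kappa$ while the remaining arrivals stay small, show that each such event (after excising a negligible set) is contained in $\event_\kappa^{\textrm{jump}}$, and sum over $t^*$.

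The difference is in how you control the ``rest'' $\Bvec_\kappa-A_{j_\kappa}(t^*)\evec_{j_\kappa}$. You impose a pointwise bound $A_j(t)\le\theta'=T^\alpha$ on every other arrival, pay a union-bound cost for this, and then still need a second step---a Bernstein argument on the conditionally bounded variables---to conclude that their sum is close to its mean. The paper instead dispatches this in one line: it applies Markov's inequality directly to $\|\Uvec_{t^*}\|:=\|\Bvec_\kappa-A_{j_\kappa}(t^*)\evec_{j_\kappa}\|$, using $\E{\|\Uvec_{t^*}\|}\le\dd T\mumax$, to get $\jP(\|\Uvec_{t^*}\|\le 2\dd T\mumax)\ge 1/2$. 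This immediately makes the paper's event $\event_{t^*}$ a genuine subset of $\event_\kappa^{\textrm{jump}}$ with no residual Bernstein set to subtract, and no threshold $\theta'$ or exponent $\alpha$ to tune. Disjointness also falls out differently: in the paper it follows from $\|\Uvec_{t^*}\|\le 2\dd T\mumax<(a_\kappa-\dd)T$, which forces every other $A_{j_\kappa}(\tau)$ to be small; for you it comes from $Ta_\kappa-\delta''T/2>\theta'$. Your route works, but the paper's is noticeably shorter and avoids the conditional-distribution bookkeeping (one small point worth stating explicitly in your write-up: Bernstein under the conditioning controls $S_j$ around its \emph{conditional} mean, which is $\le\sum_t\bar\lambda_j(t)$; this is harmless for the upper bound you need, but the notation $\E{S_j}$ is ambiguous as written).
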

The proof is given in Appendix~\ref{appsec:proof lem bound episode jump}.

Recall now that the function $\lamvec(\cdot)$ driving the trajectory $\qvec^{\epsilon}(\cdot)$ is piecewise constant, with a finite number of pieces. We use $r$ to denote the number of such pieces. We then proceed to define 
certain ``small fluctuations'' events. 
For $\kappa=0,1,\ldots,n$, we let $\event_\kappa^{\textrm{fluc}}$ be the event that the cumulative fluctuations of $\Avec(\cdot)$ over the interval $[t_0+\Tet_\kappa T+\dd T , t_0+\Tet_{\kappa+1} T )$ are small, i.e.,
\begin{equation}\label{eq:fluct-kappa}
	\event_\kappa^{\textrm{fluc}} = \Big\{
	 \Big\| \sum_{\tau=t_0+\Tet_\kappa T+\dd T}^{t}\big(\Avec(\tau)-\bar{\lamvec}(\tau)\big)   \Big\|\,\le\ \frac{\gamma c T}{32C\rr},\quad
	 \forall\ t\in 
	 [t_0+\Tet_\kappa T+\dd T,t_0+\Tet_{\kappa+1}T)
	\Big\}.
\end{equation}

\begin{lemma}\label{lem:bound episode fluc} 
	There exists a $\bar{T}_2\ge 0$ such that if $T\ge \bar{T}_2$, then
	for $\kappa=0,1,\ldots,n$, and for the episode-adjusted process over an   episode $[t_0,t_0+T)$,
	 we have
 $\Pr(\event_\kappa^{\textrm{fluc}})\ge 1/2$.
\end{lemma}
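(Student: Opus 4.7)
The plan is to establish this concentration inequality by truncating the heavy-tailed arrivals at a level $M_T$, applying Bernstein's inequality (Lemma~\ref{lem:Bernstein if}) to the truncated centered sums, and using a union bound over the $O(T)$ time indices and $\nn$ coordinates to obtain the uniform-in-$t$ bound required in~\eqref{eq:fluct-kappa}. Specifically, set $\gamma=\gammast=\min_j\gamma_j>0$ and choose $M_T=T^{1/(1+\gamma/2)}$; the precise form of this calibration is motivated at the end.

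From the explicit density~\eqref{eq:def pdf A} and the bound $\bar{\lambda}_j(\tau)\le\mumax$ (valid by~\eqref{eq:mumax2} and~\eqref{eq:lldiff}), a direct integration yields constants $c_1,c_2$ such that $\Pr(A_j(\tau)>M_T)\le c_1 M_T^{-(1+\gamma_j)}\log M_T$ and $\E{A_j(\tau)\,\one(A_j(\tau)>M_T)}\le c_2 M_T^{-\gamma_j}\log M_T$. Writing $I_\kappa=[t_0+\Tet_\kappa T+\dd T,\,t_0+\Tet_{\kappa+1}T)$ (which has length at most $T$), a union bound shows that the ``tail event'' $\event_{\textrm{tail}}$ that some $A_j(\tau)$ with $\tau\in I_\kappa$ and $j\in\{1,\ldots,\nn\}$ exceeds $M_T$ has probability of order $TM_T^{-(1+\gamma)}\log M_T=O(T^{-\gamma/(2+\gamma)}\log T)$, which tends to zero. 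Moreover, the deterministic bias introduced by replacing $A_j(\tau)$ by $A_j^{\textrm{tr}}(\tau)=A_j(\tau)\,\one(A_j(\tau)\le M_T)$ accumulates to at most $O(TM_T^{-\gamma}\log M_T)=o(T)$, which is negligible next to the required slack $\gamma cT/(32C\rr)$ once $T$ is large.

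For each fixed $t\in I_\kappa$ and each coordinate $j$, I next apply Lemma~\ref{lem:Bernstein if} to the independent variables $A_j^{\textrm{tr}}(\tau)\in[0,M_T]$ with $\lambarmax=\mumax$ and $n\le T$, at deviation $z=\gamma cT/(64C\rr\sqrt{\nn})$. The exponent in~\eqref{eq:lem bern} is of order $z^2/(M_T\cdot T)=\Theta(T^{\gamma/(2+\gamma)})$, so the probability that the centered truncated partial sum along coordinate $j$ at time $t$ exceeds $z$ is at most $2\exp(-c_3 T^{\gamma/(2+\gamma)})$ for some $c_3>0$. On the complement of $\event_{\textrm{tail}}$, these truncated partial sums differ from $\sum_\tau(\Avec(\tau)-\bar{\lamvec}(\tau))$ only by the $o(T)$ deterministic bias above, which for $T$ large is itself bounded by $z$; since the Euclidean norm of an $\nn$-vector is at most $\sqrt{\nn}$ times its maximum absolute coordinate, a union bound over the at most $\nn T$ pairs $(j,t)$ yields $\Pr(\event_\kappa^{\textrm{fluc}})\ge 1-\Pr(\event_{\textrm{tail}})-2\nn T\exp(-c_3 T^{\gamma/(2+\gamma)})$, which exceeds $1/2$ for all $T\ge\bar{T}_2$ large enough. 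The main obstacle is the joint calibration of $M_T$: it must be large enough that the tail probability over $T$ independent draws is $o(1)$ (forcing $M_T\gg T^{1/(1+\gamma)}$), yet small enough that Bernstein's exponent $z^2/(M_T\cdot T)$ diverges (forcing $M_T=o(T)$); any power $M_T=T^\alpha$ with $\alpha\in(1/(1+\gamma),\,1)$ satisfies both constraints, and $\alpha=1/(1+\gamma/2)$ is simply a convenient midpoint.
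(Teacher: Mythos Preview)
Your proof is correct and follows essentially the same strategy as the paper: truncate the arrivals, bound the probability that truncation changes anything via a union bound over the $O(\nn T)$ terms, control the deterministic truncation bias, and then apply the Bernstein inequality (Lemma~\ref{lem:Bernstein if}) to the bounded truncated variables with a final union bound over $(j,t)$.

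The only noteworthy difference is the truncation level. The paper truncates at $T/(\eta\log T)$ with a carefully chosen constant $\eta$ so that the Bernstein exponent comes out as exactly $-2\log T$, yielding a polynomial bound $2T^{-2}$ that survives the union bound over $T$ values of $t$. You instead truncate at $M_T=T^{2/(2+\gamma)}$, which gives a stretched-exponential Bernstein bound $\exp(-c_3 T^{\gamma/(2+\gamma)})$; this is stronger than needed but works just as well after the union bound. Your choice also makes the bias $O(T^{(2-\gamma)/(2+\gamma)}\log T)=o(T)$ rather than the paper's $O(1)$ per term, but since the target slack is $\Theta(T)$ this is immaterial. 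Both calibrations lie in the admissible window you identify at the end, and neither approach offers a real advantage over the other for the purpose of this lemma.
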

The proof is given in Appendix~\ref{appsec:proof lem bound episode fluc}.

Note that each one of the events $\event_\kappa^{\textrm{jump}}$, for $k=1,\ldots,n$,  and $\event_\kappa^{\textrm{fluc}}$, for  $k=0,\ldots,n$,  is determined by the arrival process during a particular interval, and that 
all of these intervals are disjoint. Thus, the independence assumption on the arrival process implies that all of these events are independent.
Therefore, if $T\ge \max\{\bar{T}_1,\bar{T}_2\}$, then
\begin{equation} \label{eq:prob all event jump fluc}
	\begin{split}
		\Pr\Big(\event_{1}^{\textrm{jump}},\ldots, \event_{n}^{\textrm{jump}}, \event_{0}^{\textrm{fluc}},\ldots, \event_{n}^{\textrm{fluc}}    \Big)   
		\,&=\, \prod_{\kappa=1}^n \Pr(\event_\kappa^{\textrm{jump}}) \,\cdot\, \prod_{\kappa=0}^n \Pr(\event_\kappa^{\textrm{fluc}})\\
		&\geq \prod_{\kappa=1}^n \psi T^{-\gamma_{j_\kappa}} \log T \,\cdot\, \prod_{\kappa=0}^n \frac12\\
		\,&=\, \frac{1}{2}\,\left(\frac{\psi\log T}2\right)^n \, T^{-\sum_{\kappa=1}^{n}\gamma_{j_\kappa}}\\
		\,&=\,  \frac{1}{2}\, \left(\frac{\psi\log T}2\right)^n \, T^{-\gamvec^T\nvec}\\
		\,&\geq\, \frac{1}{2}\,
		\left(\frac{\psi}2\right)^n \, T^{-\gamvec^T\nvec}\,\log T\\
		\,&\geq\, 
		\frac{1}{2}\, \left(\frac{\psi}2\right)^{1/\gamma}
		 \, T^{-\gamvec^T\nvec}\,\log T\\
		\,&\geq\, 
		\frac{1}{2}\, \left(\frac{\psi}2\right)^{1/\gamma}
		\, T^{-1}\log T,
	\end{split}
\end{equation}
where the first inequality follows from Lemmas~\ref{lem:bound episode jump} and~\ref{lem:bound episode fluc}.  
The second inequality is because $n\ge 1$ and (without loss of generality) $\log T\geq 1$.
The third inequality is due to
$n\gamma = \sum_{j=1}^{\nn}n_j\gamma \le \sum_{j=1}^{\nn}n_j\gamma_{j}\le 1$, 
so that $n\leq 1/\gamma$, together with the fact $\psi\leq 1$ (see Lemma~\ref{lem:bound episode jump}).
The last inequality is again because
 $\gamvec^T\nvec\le 1$. 

\subsection{$\Exp[Q_m]$ is large at the end of an episode}\label{s:Qlarge}

We are now ready to argue that if the 
events $\event_1^{\textrm{jump}},\ldots,\event_n^{\textrm{jump}}$ and $\event_{0}^{\textrm{fluc}},\ldots,\event_n^{\textrm{fluc}}$ occur, then, over an episode $[t_0,t_0+T)$, the  process $\Qvec(\cdot)$ stays close  to the suitably scaled 
$\epsilon$-JF trajectory. 
As a consequence, the value of $Q_m$ at the end of the episode becomes of order $cT$, where $c=q^{\epsilon}_m(1)>0$, as in \eqref{eq:def delta =q eps 1}.  
This argument is
entirely deterministic. It is carried out in the course of the proof  of the next lemma (in Appendix~\ref{appsec:proof lem episode deterministic}),   
and relies  on the sensitivity bound in Theorem~\ref{th:sensitivity}.

\begin{lemma}\label{lem:bound episode deterministic}
There exists a
$\bar{T}_3\ge 0$ such that if $T\ge \bar{T}_3$, then  the following holds.	
If a
 sample path of the episode-adjusted process over the episode $[t_0,t_0+T)$ satisfies the events 
  $\event_1^{\textrm{jump}},\ldots,\event_n^{\textrm{jump}}$ and $\event_{0}^{\textrm{fluc}},\ldots,\event_n^{\textrm{fluc}}$, 
 and if $\big\| \Qvec(t_0) \big\| \le c T/5$,  then, $Q_m(t_0+T)\ge {c T }/{2}$.
\end{lemma}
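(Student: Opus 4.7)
The plan is to show that under the conjunction of the events $\event_1^{\textrm{jump}},\ldots,\event_n^{\textrm{jump}}$ and $\event_0^{\textrm{fluc}},\ldots,\event_n^{\textrm{fluc}}$, the stochastic queue vector $\Qvec(\cdot)$ stays close to the time-scaled jumping fluid trajectory $\tilde{\qvec}(t) := T\qvec^{\epsilon}\big((t-t_0)/T\big)$ throughout the episode, in particular with $\|\Qvec(t_0+T) - \tilde{\qvec}(t_0+T)\| \le cT/2$. Since $\tilde q_m(t_0+T) = T q_m^{\epsilon}(1) = cT$ by~\eqref{eq:def delta =q eps 1}, this tracking bound immediately yields $Q_m(t_0+T) \ge cT/2$, as required.

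I would partition $[t_0, t_0+T)$ into alternating \emph{jump windows} $[t_0 + \Theta_\kappa T,\, t_0 + \Theta_\kappa T + \dd T)$ for $\kappa = 1,\ldots,n$ and \emph{smooth windows} in between, and then bound the tracking error $\|\Qvec(t) - \tilde{\qvec}(t)\|$ inductively across these windows. On a jump window, the change in $\tilde{\qvec}$ consists of the deterministic jump $T a_\kappa \evec_{j_\kappa}$ plus a drift of norm at most $\dd T \mumax$, while the change in $\Qvec$ equals $\Bvec_\kappa$ minus a service vector of norm at most $\dd T \mumax$; combining $\event_\kappa^{\textrm{jump}}$ with the triangle inequality, the tracking error therefore grows by at most $\dd T(1+4\mumax) \le \gamma cT/8$ per jump window, by the choice of $\dd$ in~\eqref{eq:delta'}. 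On a smooth window, the rate $\bar{\lamvec}(\cdot)$ is piecewise constant over at most $r$ sub-pieces (corresponding to the pieces of $\lamvec(\cdot)$), and on each sub-piece $\tilde{\qvec}$ is an ordinary fluid trajectory. I would therefore invoke Theorem~\ref{th:sensitivity} on each sub-piece with initial condition $\qvec(\mathrm{start}) = \tilde{\qvec}(\mathrm{start})$; the event $\event_\kappa^{\textrm{fluc}}$ bounds the cumulative deviation of $\Avec(\cdot)$ from $\bar{\lamvec}(\cdot)$ by $\gamma cT/(32Cr)$, and so the sensitivity bound contributes at most $\gamma cT/(16r) + O(1)$ additional tracking error per sub-piece.

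Summing these contributions, and using $\gamvec^T\nvec \le 1$ together with $\gamma \le \gamma_j$ for all $j$, the $n \le 1/\gamma$ jump windows contribute at most $cT/8$ to the tracking error, while the $r$ smooth sub-pieces contribute at most $\gamma cT/16 + O(1)$ in total. Combined with the initial mismatch $\|\Qvec(t_0) - \tilde{\qvec}(t_0)\| = \|\Qvec(t_0)\| \le cT/5$ (where $\tilde{\qvec}(t_0) = 0$ follows from Lemma~\ref{l:convenient}(b) since $\qvec^{\epsilon}$ is continuous at $0$), the total tracking error at $t_0+T$ is at most $cT(1/5 + 1/8 + 1/16) + O(1)$, which is strictly less than $cT/2$ for all $T$ larger than some threshold $\bar T_3$.

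The main technical obstacle is the jump-window analysis: within a window of length $\dd T$, the random cumulative arrival vector $\Bvec_\kappa$ must closely reproduce the deterministic jump $T a_\kappa \evec_{j_\kappa}$ targeted at queue $j_\kappa$, and the MW service rendered over the window must not meaningfully distort the match. The delicate choice of $\dd$ in~\eqref{eq:delta'} ensures that both the service ``slippage'' and the drift of $\tilde{\qvec}$ over the jump window are dominated by $\gamma cT/8$, while simultaneously keeping the jump windows pairwise disjoint and strictly inside $(t_0, t_0+T)$. A secondary care is required for the initial sub-interval $[t_0, t_0+\dd T)$, which is not covered by $\event_0^{\textrm{fluc}}$ as defined; this can be absorbed via a straightforward deterministic bound, since its contribution is overwhelmed by the $cT/5$ slack in the initial mismatch.
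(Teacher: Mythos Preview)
Your proposal is correct and mirrors the paper's proof: decompose the episode into jump windows (controlled by $\event_\kappa^{\textrm{jump}}$ plus crude drift and service bounds, each contributing at most $\dd T(1+4\mumax)\le\gamma cT/8$ to the tracking error) and smooth windows (controlled by applying Theorem~\ref{th:sensitivity} on each constant-$\bar\lamvec$ sub-piece with $\event_\kappa^{\textrm{fluc}}$), then sum and invoke $n\gamma\le 1$. One minor slip: your ``straightforward deterministic bound'' for the uncovered initial segment $[t_0,t_0+\dd T)$ does not actually work, since the heavy-tailed arrivals there are not controlled by any of the hypothesized events; the clean fix (tacit in the paper, which applies the smooth-window estimate ``verbatim'' from time $0$) is simply to have $\event_0^{\textrm{fluc}}$ start at $t_0$ rather than $t_0+\dd T$, after which your accounting goes through.
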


Let $\rho=\left({\psi}/2\right)^{1/\gamma}\big /4 $
 and  $\bar{T}\equals \max\{2,\bar{T}_1, \bar{T}_2, \bar{T}_3\}$, where $\bar{T}_1$, $\bar{T}_2$, and $\bar T_3$ are the constants in Lemmas~\ref{lem:bound episode jump},~\ref{lem:bound episode fluc}, and~\ref{lem:bound episode deterministic}, respectively. We note that all of these constants, $\bar{T}_1$, $\bar{T}_2$, $\bar{T}_3$, and therefore $\bar{T}$ as well, are defined in terms of general parameters and properties of the particular $\epsilon$-JF trajectory, and are deterministic. 
Lemma~\ref{lem:bound episode deterministic} and \eqref{eq:prob all event jump fluc} imply that  for an episode  $(t_0,t_0+T)$,  with $T\ge\bar{T}$, and initialized so that $\| \Qvec(t_0) \|\le c T/5$, we have
\begin{equation}\label{eq:pr Q_m 1/T}
	\Pr\left( Q_m(t_0+T)\ge \frac{c T}2 \right) 
	\,\ge  \,  \Pr\Big(\event_{1}^{\textrm{jump}},\ldots, \event_{n}^{\textrm{jump}}, \event_{0}^{\textrm{fluc}},\ldots, \event_{n}^{\textrm{fluc}}    \Big) 
	\,\ge\, 2\rho \, T^{-1}\log T,
\end{equation}
which implies that
\begin{equation}\label{eq:at the end of an episode}
	\Exp[Q_m(t_0+T)]\,\ge\, \frac{c T }{2} \,\Pr\left(Q_m(t_0+T)\ge \frac{c T }{2}\right) \, \ge\, \frac{c T }{2}\cdot 2\rho T^{-1}\log T \, =\, 
	{\rho c \log T}.
\end{equation}


\subsection{Concatenating episodes, over the entire timeline}\label{s:beyond}
So far, we have defined and studied an arrival process over an episode $[t_0,t_0+T)$.  We now concatenate a sequence of such episodes, of increasing duration, 
which defines an arrival process over an infinite timeline. 

We define times $T_0,T_1,T_2,\ldots,$ 
and arrival processes for the intervals $[T_i,T_{i+1})$, recursively,
as follows. We let $T_0=0$ and $T_1=\bar{T}$, where $\bar T$ was defined in the last paragraph  of Section \ref{s:Qlarge}. We also let  $\Avec(\cdot)$, for $t\in[T_0,T_1)$ be the corresponding episode-adjusted process, as in Definition~\ref{def:episode}.
Suppose now that we have defined $T_i$, for some $i\geq 1$, as well as the arrival process for $t\in[0,T_i)$. We then let
\begin{equation}\label{eq:def Ti inductive}
	T_{i+1} = T_i + \max\left\{T_i, \frac{10 \,\Exp\big[\|\Qvec(T_i)\|\big]}{c}\right\}. 
\end{equation}
Finally, we define the arrival process over the episode $[T_i,T_{i+1})$ to be the corresponding episode-adjusted process. With this recursion, the arrival process is now well-defined for all times $t\geq 0$. 

Note that $\Exp\big[\|\Qvec(T_i)\|\big]\le \Exp\left[\sum_{t=0}^{T_i-1}\|\Avec(t)\|\right]= \sum_{t=0}^{T_i-1}\Exp\big[\|\Avec(t)\|\big]  
\le  T_i\nn(\lamax) <\infty$. 
This guarantees that all $T_i$ are finite, and that we have an infinite number of episodes. 
Moreover, note that for $i=1,2,\ldots$, we have $T_{i+1}\ge 2T_i$. As a result,  $T_{i}\ge 2^{i-1} \bar{T}$ and also,
\begin{equation}\label{eq:log T-T}
	\log (T_{i+1}-T_i) \,\ge\, \log T_i  \,\ge\,  i \log2,
\end{equation}
where the last inequality is because $T_i\geq 2^{i-1}\bar{T}\geq 2^{i}$.

We define the event 
$$\Bev_i=\Big\{ 
\|\Qvec(T_i)\| \le \frac{(T_{i+1}-T_{i})c}{5}
\Big\},$$ and use the
 Markov inequality, to obtain
\begin{equation}\label{eq:onehalf}
 \Pr(\Bev_i)
\, \ge\, 1- \frac{\Exp\big[\|\Qvec(T_i)\|\big]}{(T_{i+1}-T_{i})c/{5}}
	\,\ge\, 1- \frac{\Exp\big[\|\Qvec(T_i)\|\big]}{\big(10\Exp\big[\|\Qvec(T_i)\|\big]\big)/5} \,=\,\frac12,
\end{equation}
where the second inequality is due to \eqref{eq:def Ti inductive}.

Recall now that the inequality~\eqref{eq:at the end of an episode}, which was about an episode of length $T$, made use of the assumption 
$\| \Qvec(t_0) \|\le c T/5$, where $t_0$ is the start time of the episode. 
According to~\eqref{eq:onehalf}, this assumption is satisfied at the start time of the episode $[T_i,T_{i-1})$, with probability at least 1/2.
By  interpreting~\eqref{eq:at the end of an episode} as a statement about conditional expectations, and with $t_0$ and $t_0+T$ replaced by $T_i$ and $T_{i+1}$, respectively,
we obtain
\begin{equation*}
\Exp\big[Q_m(T_{i+1})\big] \, \geq\, \Pr(\Bev_i) \cdot \Exp\big[Q_m(T_{i+1}) \mid \Bev_i\big]
\, \geq\, \frac{1}{2}\cdot \rho c \log (T_{i+1}-T_i)
\, \geq\,  \frac{\rho c \,i \log2 }{2},
\end{equation*}
where the last inequality follows from \eqref{eq:log T-T}.
Therefore,  $\Exp\big[Q_m(T_i)\big]$ grows unbounded as $i$ increases. Consequently, 
under the arrival process that we constructed, queue $m$ is not delay stable.
This conclusion is obtained for any positive choice of $\epsilon$, no matter how small, and establishes that 
queue $m$ is not RDS.  This completes the proof of the second direction of Theorem~\ref{th:main}.


\medskip
\medskip

\section{Proof of Theorem~\ref{th:lyap}} \label{sec:proof lyap}

\subsection{Proof of the  first direction} Let us fix some $\epsilon>0$. 
To establish one direction of the result, we assume that the $\epsilon$-JF$(\gamvec)$ condition holds for every  
$\gamvec\in\Gamma$. We will show that there exists a special $\epsilon$-Lyapunov function.

Let $\N$ be the set of all nonnegative integer vectors $\nvec$ such that $n_j=0$ for $j>h$; that is, we allow arbitrarily many jumps at the heavy-tailed queues and no jumps at the light-tailed ones. 
As in
Definition~\ref{def:W}, 
for any nonnegative integer vector $\nvec$, let $W(\nvec)$ be the set of all points in $\R_+^\nn$ that are reachable by $\epsilon$-JF$(\nvec)$ trajectories. 
Let $W = \bigcup_{\nvec\in\N} W(\nvec)$, and 
consider a Lyapunov function $V(\cdot)$ equal to the distance from $W$, i.e., $V(\xvec)\equals d\big(\xvec,W\big)$, for any $\xvec\in\R_+^{\nn}$. 
We will show that this Lyapunov function  has the desired properties.

The distance function is clearly Lipschitz continuous, with a Lipschitz constant equal to 1, which implies the first property in the definition of special $\epsilon$-Lyapunov functions. 

For the second property, Lemma~\ref{lem:absorb}(b) applies and shows that each set $W(\nvec)$ is $\epsilon$-invariant. It can be seen that the union, $W$, of the $\epsilon$-invariant sets $W(\nvec)$ is also $\epsilon$-invariant. It then follows from  Lemma~\ref{lem:absorb}(a) 
 that $W$ is $\epsilon$-attracting. This proves the second property in Definition~\ref{def:nice lyap}.

Note that every $\nvec\in\N$ satisfies the inequality $\gamvec^T \nvec\leq 1$ for \emph{some} $\gamvec\in \Gamma$. Since  the $\epsilon$-JF$(\gamvec)$ condition holds for every $\gamvec\in\Gamma$, it follows that every $\epsilon$-JF($\nvec$) trajectory, with $\nvec\in\N$, satisfies $q_m(t)=0$, for all $t$. Hence   $q_m=0$, for all $\qvec\in W$.
Furthermore, since ${\bf 0}\in W$, we have $V({\bf 0})=0$. This 
establishes the third property in  Definition~\ref{def:nice lyap}.

Finally, $W$ is closed under  jumps along coordinates associated with heavy-tailed 
arrivals.   Therefore, $V(\cdot)$ is nonincreasing
 along those directions, and the fourth property in Definition~\ref{def:nice lyap} follows.
Thus, $V(\cdot)$ has all the required properties of special $\epsilon$-Lyapunov functions. This completes the proof of one direction of the theorem.

\subsection{Proof of the reverse direction}
We continue with the proof of the reverse direction. 
We fix some $\epsilon>0$ and assume
that there exists a special $\epsilon$-Lyapunov function $V(\cdot)$, and let $W=\big\{\xvec\in\R_+^{\nn} \mid V(\xvec)=0 \big\}$. 
The argument rests on the  $\epsilon$-invariance of $W$ which, in turn, relies on some properties of the MW dynamics that we discuss next.

For any $\lamvec,\xvec\in\R_+^\nn$, let
\begin{equation} 
	\xivec_{\lamvec}(\xvec) \equals \dot{\qvec}(0),
\end{equation}
where $\qvec(\cdot)$ is the fluid trajectory corresponding to arrival rate $\lamvec$ and initialized with $\qvec(0)=\xvec$. In view of \eqref{eq:fluid evolution}, we have $\xivec_{\lamvec}(\xvec)\in\Drift_\lamvec(\xvec)$. Moreover, 
it is shown in Lemma~2(a) of \citep{AlTG19sensitivity} that $\xivec_{\lamvec}(\xvec)$
 has the minimum norm among all vectors in $\Drift_\lamvec\big({\xvec}\big)$, i.e.,
\begin{equation} \label{eq:min norm in S}
	\xivec_{\lamvec}(\xvec)\,=\, \argmin{\nuvec \in \Drift_\lamvec(\xvec)} \|\nuvec\|, \qquad \forall\  \xvec\in\R_+^\nn.
\end{equation}
with the minimizer being unique. 

Given a  closed and convex set $\mathcal{A}\subset \R^{\nn} $ and a point $\xvec\in\R^\nn$, we denote by $\pi_\mathcal{A}(\xvec)$ the projection of $\xvec$ on $\mathcal{A}$, defined as  the point in $\mathcal{A}$ which is closest  to $\xvec$.
With this  terminology, $\xivec_{\lamvec}(\xvec)$
is the projection $\pi_{\Drift_{\lamvec}{(\xvec)}}({\bf 0})$ of the zero vector on the set  $\Drift_{\lamvec}{(\xvec)}$. 

In what follows, we also make use of an elementary  property of projections: if
$\mathcal{A}$ is a closed convex set, $\bvec$ is some vector, and 
$\mathcal{B}=\mathcal{A}+\bvec$, then  
\begin{equation}\label{eq:geo}
\|\pi_\mathcal{A}(\xvec)- \pi_\mathcal{B}(\xvec)\|\,\leq\, \|\bvec\|.
\end{equation}

As a consequence of the above, 
for any $\lamvec_1$, $\lamvec_2$, and $\xvec$ in $\R_+^\nn$,
\begin{equation} \label{eq:drift diff bound}
	\begin{split}
	\big\|\xivec_{\lamvec_1}(\xvec)-\xivec_{\lamvec_2}(\xvec)\big\| 
	&\,=\,\big\|\pi_{\Drift_{\lamvec_1}{(\xvec)}}(\mathbf{0})- \pi_{\Drift_{\lamvec_2}{(\xvec)}}(\mathbf{0})\big\| \\
	&\,=\, \big\|\pi_{\Drift_{\lamvec_1}(\xvec)}(\mathbf{0})- \pi_{\Drift_{\lamvec_1}(\xvec)+\lamvec_2-\lamvec_1}(\mathbf{0}){\big\|}\\
	&\,\le\, \big\|\lamvec_1-\lamvec_2\big\|,
	\end{split}
\end{equation}
where the second equality is because 
$$\Drift_{\lamvec_2}(\xvec) = \lamvec_2 - \Sbar(\xvec) = \lamvec_1 - \Sbar(\xvec) + \lamvec_2-\lamvec_1
 = \Drift_{\lamvec_1}(\xvec) +  \lamvec_2-\lamvec_1,$$
 and the inequality  follows from ~\eqref{eq:geo}.

\begin{lemma} \label{l:W-inv}
The set $W$ is $\epsilon$-invariant.
\end{lemma}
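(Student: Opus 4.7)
The plan is to show that along any admissible fluid trajectory, $V$ cannot increase from the zero level. Fix $\lamvec \succeq {\bf 0}$ with $\|\lamvec - \lamvec^*\| \le \epsilon$, and let $\qvec(\cdot)$ be a fluid trajectory corresponding to $\lamvec$, initialized at some $\qvec(0) \in W$, so that $V(\qvec(0)) = 0$. I want to show $V(\qvec(t)) = 0$ for all $t \ge 0$.

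The key estimate compares the rate of change of $V$ along the $\lamvec$-trajectory with its rate of change along a hypothetical $\lamvec^*$-trajectory. Fix a time $t_0 \ge 0$, let $\xvec = \qvec(t_0)$, and let $\qvec^*(\cdot)$ denote the fluid trajectory corresponding to $\lamvec^*$ with $\qvec^*(t_0) = \xvec$. Using the first-order expansions $\qvec(t_0+h) = \xvec + h\,\xivec_\lamvec(\xvec) + o(h)$ and $\qvec^*(t_0+h) = \xvec + h\,\xivec_{\lamvec^*}(\xvec) + o(h)$, together with the Lipschitz continuity (with constant $1$) of $V$ and the drift bound \eqref{eq:drift diff bound}, I obtain
\begin{equation*}
\bigl|V(\qvec(t_0+h)) - V(\qvec^*(t_0+h))\bigr| \,\le\, h\,\bigl\|\xivec_\lamvec(\xvec) - \xivec_{\lamvec^*}(\xvec)\bigr\| + o(h) \,\le\, h\epsilon + o(h).
\end{equation*}
Since $V(\qvec(t_0)) = V(\qvec^*(t_0))$, dividing by $h$ and sending $h \downarrow 0$ yields
\begin{equation*}
\dot V(\qvec(t_0)) \,\le\, \dot V(\qvec^*(t_0)) + \epsilon,
\end{equation*}
where both derivatives are right-derivatives along the respective trajectories.

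Now suppose $V(\qvec(t_0)) > 0$; then $V(\qvec^*(t_0)) > 0$ as well, and Property 2 of the special $\epsilon$-Lyapunov function gives $\dot V(\qvec^*(t_0)) \le -\epsilon$, whence $\dot V(\qvec(t_0)) \le 0$. Thus along the $\lamvec$-trajectory, the right-derivative of $V$ is nonpositive wherever $V$ is positive.

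To conclude, I would argue by contradiction: suppose $V(\qvec(t_1)) > 0$ for some $t_1 > 0$. Since $V \circ \qvec$ is continuous (as $V$ is Lipschitz and $\qvec$ is continuous) and equals $0$ at $t = 0$, there exists $\tau \in [0,t_1)$ with $V(\qvec(\tau)) = 0$ and $V(\qvec(t)) > 0$ for all $t \in (\tau, t_1]$. The Lipschitz function $V \circ \qvec$ is absolutely continuous, so integrating the right-derivative from $\tau$ to $t_1$ and using the nonpositivity established above gives $V(\qvec(t_1)) \le V(\qvec(\tau)) = 0$, contradicting $V(\qvec(t_1)) > 0$. Hence $V(\qvec(t)) = 0$ for all $t \ge 0$, i.e., $\qvec(t) \in W$, proving $\epsilon$-invariance. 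The main technical care is in the handling of right-derivatives where $V$ may fail to be differentiable, but Lipschitz continuity of $V$ and of fluid trajectories makes the Dini-type argument above rigorous.
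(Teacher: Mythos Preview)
Your proof is correct and follows essentially the same approach as the paper: compare the right-derivative of $V$ along the $\lamvec$-trajectory with that along the $\lamvec^*$-trajectory starting from the same point, use Lipschitz continuity of $V$ together with the drift bound \eqref{eq:drift diff bound} to control the difference by $\|\lamvec-\lamvec^*\|\le\epsilon$, and then invoke Property~2 to conclude $\dot V\le 0$ whenever $V>0$. Your final contradiction argument via absolute continuity is slightly more explicit than the paper's one-line conclusion that ``the distance from the set $W$ cannot increase,'' but the substance is identical.
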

\begin{proof}
Since $V$ is a special $\epsilon$-Lyapunov function, it is Lipschitz continuous with Lipschitz constant $1$.
Let $\lamvec\in\R_+^\nn$ be such that $\|\lamvec-\lamvec^*\|\leq\epsilon$, and consider   fluid trajectories $\qvec(\cdot)$ and $\pvec(\cdot)$ corresponding to arrival rates $\lamvec$ and $\lamvec^*$, respectively, initialized with the same nonnegative vector $\qvec(0)=\pvec(0)\not\in W$. Then,
\begin{equation}
	\begin{split}
		\dot{V}\big(\qvec(t)\big)\big|_{t=0} \,&=\, \lim_{\delta\downarrow0} \frac{V\big(\qvec(\delta)\big)- V\big(\qvec(0)\big)}{\delta}\\
		&\,=\,
		\lim_{\delta\downarrow0} \frac{V\big(\pvec(\delta)\big) + V\big(\qvec(\delta)\big) -V\big(\pvec(\delta)\big)  - V\big(\qvec(0)\big)}{\delta}\\
				&\le\, \lim_{\delta\downarrow0} \frac{\big[V\big(\pvec(\delta)\big) + \| \qvec(\delta) -\pvec(\delta)  \|\big]- V\big(\qvec(0)\big)}{\delta}\\
		&=\, \lim_{\delta\downarrow0}\left( \frac{V\big(\pvec(\delta)\big) - V\big(\pvec(0)\big)}{\delta} \,+\,\frac{\| \qvec(\delta) -\pvec(\delta)\|}{\delta}\right)\\
		&\le\, \lim_{\delta\downarrow0}\left( \frac{V\big(\pvec(\delta)\big) - V\big(\pvec(0)\big)}{\delta} \,+\,\| \lamvec - \lamvec^*\| \right)\\
		&=\,  \dot{V}\big(\pvec(t)\big)\big|_{t=0} \,+\,\| \lamvec - \lamvec^*\|\\
		&\leq\, -\epsilon +\epsilon\\
		&=\, 0,
	\end{split}
\end{equation}
where the first inequality is because $V$ has a Lipschitz constant equal to $1$, the third equality is due to $\qvec(0)=\pvec(0)$, and the second inequality follows from \eqref{eq:drift diff bound}. 
The last inequality follows from the second property of special $\epsilon$-Lyapunov functions in Definition~\ref{def:nice lyap} and 
the assumption $\|\lamvec-\lamvec^*\|\leq\epsilon$.

The above argument shows 
that, for any $\lamvec\in\R_+^\nn$ with $\|\lamvec-\lamvec^*\|\leq\epsilon$, and for any fluid trajectory $\qvec(\cdot)$ corresponding to arrival rate $\lamvec$, the distance from the set $W$ cannot increase. In particular, if $\qvec(\cdot)$  is 
initialized with $\qvec(0)\in W$, it must stay in $W$. 
 Therefore, using the terminology in Definition~\ref{def:absorb}, $W$ is $\epsilon$-invariant. 
\end{proof}

From the third property in the definition of special $\epsilon$-Lyapunov functions, we have $V({\bf 0})=0$ and, therefore, ${\bf 0}\in W$. Thus, every $\epsilon$-JF($\nvec$) trajectory starts in $W$. 
 From the fourth property in the definition of special $\epsilon$-Lyapunov functions, $W$ is  closed with respect to positive jumps along 
the coordinates associated with heavy-tailed arrivals. 
Using also Lemma~\ref{l:W-inv} for the times between jumps, we see that 
for any $\nvec\in\N$, every $\epsilon$-JF($\nvec$) trajectory stays in $W$. 
Equivalently, for any $\gamvec\in\Gamma$, every $\epsilon$-JF($\gamvec$) trajectory stays in $W$.
Finally, employing again the third property of special $\epsilon$-Lyapunov functions, we have $q_m=0$, for all $\qvec\in W$. This implies that $q_m(t)=0$, for  all $\epsilon$-JF$(\gamvec)$ trajectories $\qvec(\cdot)$ with $\gamvec\in \Gamma$, and all $t\ge0$.
This establishes the $\epsilon$-JF$(\gamvec)$ condition for all 
$\gamvec\in \Gamma$, 
 and completes the proof of Theorem~\ref{th:lyap}
 
\subsection{Proof of Corollary~\ref{cor:lyap}}\label{s:cor-pf} 

For part (a), fix some $\epsilon>0$, and suppose that there exists a special $\epsilon$-Lyapunov function.  Theorem~\ref{th:lyap} implies that the $\epsilon$-JF($\gamvec$) condition holds for every $\gamvec\in\Gamma$. It then follows from Theorem~\ref{th:main} that queue $m$ is RDS for every $\gamvec\in\Gamma$.

For part (b), we fix some $\epsilon>0$, and assume that queue $m$ is $\epsilon$-RDS, for all $\gamvec\in\Gamma$. In particular, queue $m$ is RDS and Theorem~\ref{th:main} implies that the $\epsilon'$-JF($\gamvec$) condition holds for some $\epsilon'>0$. However, 
a close inspection of the proof of the reverse part of Theorem~\ref{th:main} reveals that we can in fact choose $\epsilon'$ to be the same as $\epsilon$. Thus, the $\epsilon$-JF($\gamvec$) condition holds and Theorem~\ref{th:lyap} implies that  there exists a special $\epsilon$-Lyapunov function.


\section{Discussion}\label{sec:discuss}
In this section, we summarize some key points and conclude with a few open questions.
 
\subsection{Framing and results}
We  have addressed the problem of delay stability for a class of queueing networks that operate under the Max-Weight scheduling policy, when some arrival processes are heavy-tailed and some are light-tailed. The overall purpose was to develop conditions for delay stability in terms of fluid-like models. However, as  illustrated by the example in Section~\ref{sec:example}, delay instability can be the result of multiple coordinated large jumps. The probabilities of such large  jumps are, in turn, affected 
by the tail exponents of the arrival processes. Given that 
traditional fluid models are oblivious to the tail exponents, we had to introduce  JF (jumping-fluid) models, a  generalization that  allows for jumps along the coordinates associated with heavy-tailed flows, subject to a budget on the number of jumps, with the budget being determined by the tail exponents.
 
At the same time, it became clear that 
 tight conditions for delay stability  
that do not depend on the details of the arrival distributions are only possible under a suitable ``robust'' formulation with respect to both the arrival rates and the arrival process distributions.
With a careful choice of definitions, we were finally able to establish 
necessary and sufficient conditions for robust delay stability, in terms of 
$\epsilon$-JF models.

In Section~\ref{sec:example}, we also discussed a related, so-called ZF condition. The ZF condition essentially examines fluid trajectories that start at zero and involve a single jump, and leads to a necessary condition for delay stability~\citep{MarkMT18}, but the question whether it can also form the basis for a sufficient condition was  open. Our results show that in order to obtain necessary \emph{and\/} sufficient conditions, we need to examine a richer set of trajectories, that involve multiple jumps.

Finally, 
earlier works \citep{Mark13, MarkMT18} had shown that Lyapunov functions with certain structural properties could yield sufficient conditions for  delay stability. 
But it was not clear if and when delay stability is equivalent to the existence of such Lyapunov functions. Our results 
make progress towards establishing the completeness of such a Lyapunov-based methodology, for the regime where the heavy-tailed flows can have arbitrarily small tail exponents. 

Our RJF condition 
 is difficult to test for general networks. In some sense, this  reflects the intrinsic complexity of the (robust) delay stability problem. The Lyapunov-based condition also appears to be hard to test, for general networks.

\subsection{Alternative formulations}
Given the complexity of the RJF condition, it is natural to inquire about simpler alternatives. 
  For example, is it possible to obtain tight delay stability results (without robustness) if we consider 
JF models with a constant rate $\lamvec(\cdot)$? In the same spirit, could we restrict to the case where all jumps in  $\epsilon$-JF models take place at the same time? Might it be easier to consider concrete arrival processes, instead of focusing on delay stability for all arrival proceses with given exponents?

For all three of the above questions, the answer is negative. We discuss such variations and related (counter)examples in  
 Appendix~\ref{app:new}.

\subsection{Open problems}
We collect here a few open problems and possible future research directions.
\begin{itemize}
\item[(a)] Can the results be generalized  to other scheduling policies, e.g., MW-$\alpha$ policies \citep{Neel10},  an extension of the MW policy considered in this paper, or more generally, to other stochastic networks whose stability has been studied using fluid models? One obstacle here is that 
our main result relies heavily on a particular fluctuation bound, which has been established specifically for the MW dynamics  \citep{AlTG19ssc}.   
However, progress  may be possible if we rely on alternative stochastic bounding techniques. 
\item[(b)] Can we identify some special classes of networks for which our criteria (either the RJF condition or the Luyapunov-based condition) can be tested in polynomial time?
\item[(c)] Is the RJF condition, which involves time-varying $\lamvec(t)$, with $\lamvec(t)\approx \lamvec^*$, equivalent to a similar condition in which we only consider $\epsilon$-JF trajectories
with time-invariant $\lamvec(t)=\lamvec$, with $\lamvec\approx \lamvec^*$? See Appendix~\ref{app:new} for further discussion and some conjectures.

\end{itemize}

\section*{Acknowledgment}
We are grateful to Dr. Bert Zwart for providing us with pointers to relevant works in the literature and for informative technical discussions.

\ifIEEE
\bibliographystyle{ieeetr}
\fi
\ifSIAM
\bibliographystyle{siamplain}
\else 

\bibliographystyle{informs2014}

\fi

\bibliography{schbib}

\ifIEEE
\newpage
\appendices
\fi

\ifSIAM
\appendix
\medskip

\else 
\newpage
{\huge Appendices}
\begin{APPENDICES}
	
\hspace{3.5cm}

\fi

\section{Exploring alternative formulations} \label{app:new}
Our formulation involves rate-robustness (robustness with respect to variations in the arrival rate), as well as distributional robustness (by considering the worst-case over all distributions with given tail exponents). It would have been preferable to develop conditions that characterize delay stability for specific systems (with fixed arrival rates and arrival distributions). However, this seems to be impossible, for reasons that will become clearer in this appendix. In particular, 
the distributional robustness aspect appears to be inevitable, as
as long as we are aiming at conditions that are both necessary and sufficient; see Section~\ref{app:details}. For this reason, most of this appendix is devoted to exploring variants of rate robustness. In the interest of brevity, we keep the discussion informal, without rigorous proofs.

\subsection{Variations of our definitions}\label{app:variations}
In this subsection, we present a number of variations to our definitions of RDS and of the RJF condition. In later subsections, we will elaborate on their relations. 
Throughout this appendix,  we assume that the tail exponent vector $\gamvec$ has been fixed, with every $\gamma_j$ in $(0,\infty]$. We also fix some $\lamvec^*>0$.
The various definitions that we offer   differ only with respect to the choice of allowed functions 
$\lamvec(\cdot)$.

Let $\F$ be a class of  discrete-time functions $\lamvec(\cdot)$.
We say that queue $m$ is $\F$-RDS if it obeys Definition~\ref{def:RDS}, except that 
the allowed arrival rates $\E{\Avec(t)}$ are also required to belong to $\F$. 

We consider the following choices for $\F$, leading to three alternative definitions of robust delay stability, namely G-RDS, C-RDS, and 0-RDS:
\begin{itemize}
\item[G:] (General) Here we impose no additional restrictions on $\E{\Avec(t)}$. Thus,  G-RDS is identical to the RDS condition that we have studied.
\item[C:] (Constant) Here we require $\E{\Avec(t)}$ to be constant. Effectively, we are considering small but constant perturbations of $\lamvec^*$. 
\item[0:] (Zero) Here, we require  $\E{\Avec(t)}$ to be equal to $\lamvec^*$, for all times $t$.
\end{itemize}

We continue similarly, to define variants of the RJF condition. Let $\F$ be a class of  continuous-time functions $\lamvec(\cdot)$. The $\F$-RJF condition is defined exactly as in Definition~\ref{def:eps jf cond}, except that we only consider $\epsilon$-JF trajectories for which the rate function $\lamvec(\cdot)$ is also  required to belong to the class $\F$. We consider four possible choices for $\F$, leading to four variants of the RJF condition, namely, UC-RJF, PC-RJF, C-RJF, 0-RJF:
\begin{itemize}
\item[UC:] (Uniformly continuous) Here we remove the requirement in Definition~\ref{def:eJF} that $\lamvec(\cdot)$ be piecewise constant.
Instead, we require $\lamvec(\cdot)$ to be (i) piecewise continuous, with a finite number of discontinuities, and (ii) uniformly continuous on any interval in which it is continuous.
\item[PC:] (Piecewise constant) Here, $\lamvec(\cdot)$ is exactly as in Definition~\ref{def:eJF}, and in particular, piecewise constant. Thus, the PC-RJF condition coincides with the RJF condition we have been studying.
\item[C:] (Constant) Here we require $\lamvec(\cdot)$ to be constant. Effectively, we are considering small but constant perturbations of $\lamvec^*$. 
\item[0:] (Zero) Here, we require $\lamvec(\cdot)$ to be equal to $\lamvec^*$, for all times $t$. 
\end{itemize}

\subsection{Relations between alternative definitions}\label{app:relations}
In this section, we explore the relation between  $\F$-RDS and
$\F$-RJF 
 conditions for different choices of F; see Figure~\ref{f:diagram} for a visual summary.

\begin{figure} 
	\begin{center}
		{
			\ifSIAM
			\includegraphics[width = .95\textwidth]{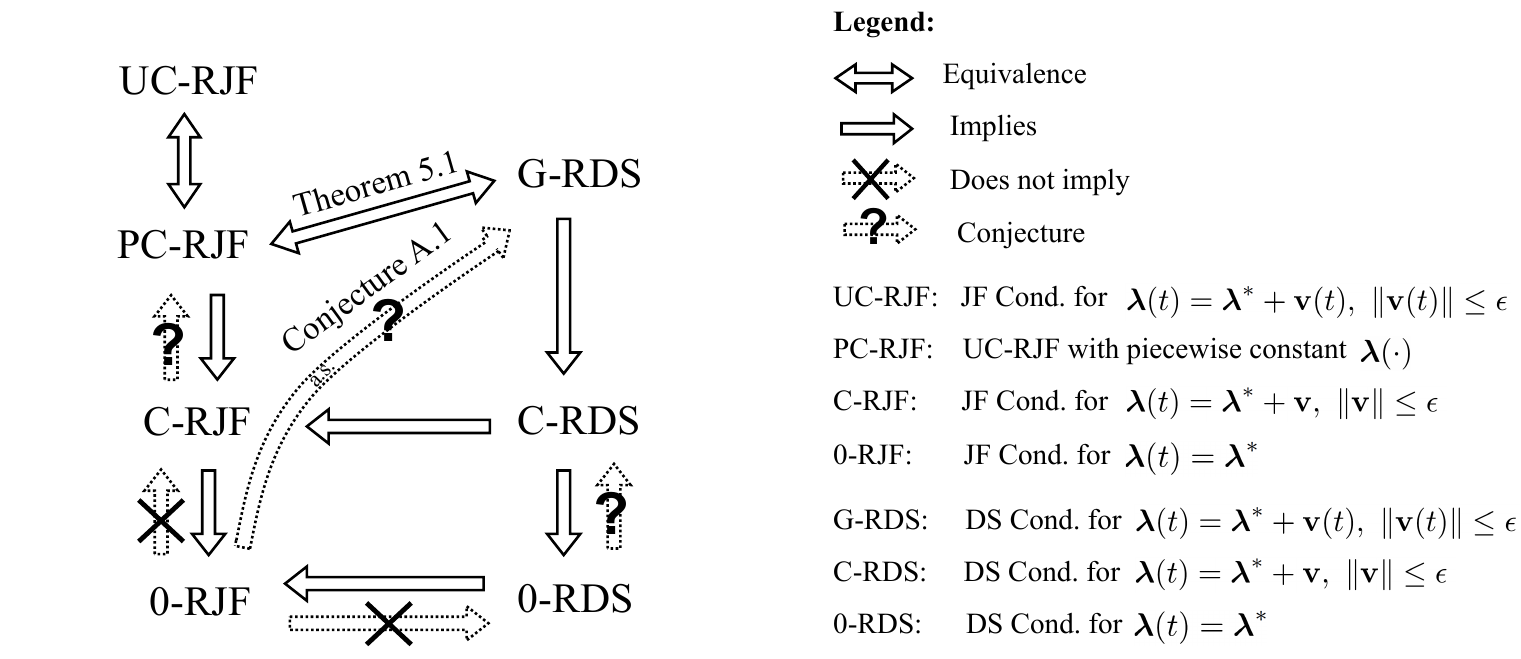}
			\else 
			\includegraphics[width = .95\textwidth]{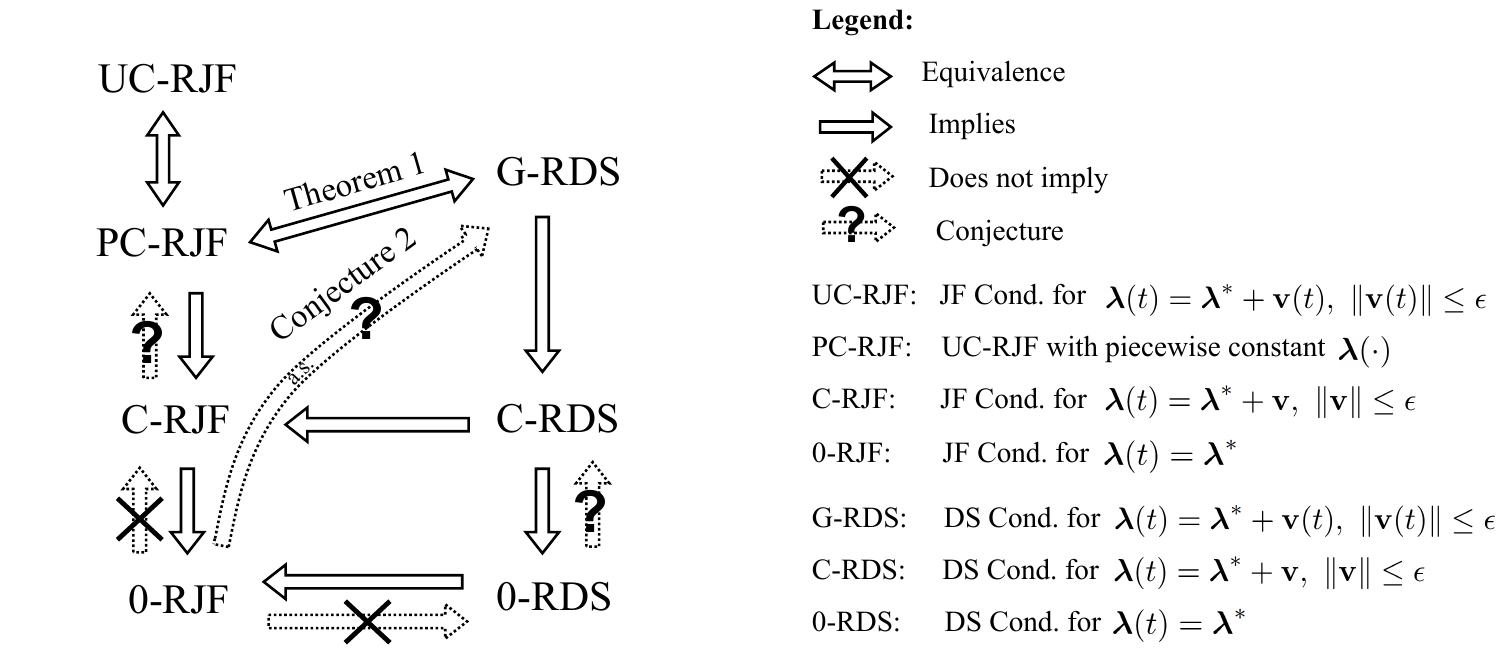}
			\fi}
		\vspace{0cm}
	\end{center}
	\caption{Relation between the various  conditions.} 
	\label{f:diagram}
\end{figure}

It is clear that when we restrict to a smaller class, the RDS or RJF conditions are easier to satisfy. Thus, 
$$\mbox{G-RDS} \implies \mbox{C-RDS} \implies \mbox{0-RDS},$$
and
$$ \mbox{UC-RJF} \implies \mbox{PC-RJF} \implies \mbox{C-RJF} \implies \mbox{0-RJF}.$$ Furthermore, Theorem~\ref{th:main} has established that G-RDS is equivalent to PC-RJF.

\vspace{5pt}
\noindent
{{\bf PC-RJF $\implies$ UC-RJF.}}\,\,
An arbitrary (continuous-time) function $\lamvec(\cdot)$ in the class UC can be approximated by a piecewise constant function with finitely many pieces, uniformly over a compact set. Furthermore, it can be shown that if we perturb by $\epsilon$ the vector $\lamvec(\cdot)$ that drives a JF-trajectory, the resulting trajectory is perturbed by at most $\epsilon$ over a time interval of length 1.  It follows that if the UC-RJF condition fails, we can construct piecewise-constant approximations of $\lamvec(\cdot)$ that demonstrate that the PC-RJF condition also fails. Therefore,  PC-RJF$\implies$UC-RJF. 

Taking Theorem~\ref{th:main} also into account, we see that all three conditions, G-RDS, PC-RJF, and UC-RJF, are equivalent. An alternative path to the same conclusion consists of modifying the proof in Section~\ref{sec:proof RDS=>JF}, and showing  that G-RDS$\implies$UC-RJF. This is possible, but quite tedious.

\vspace{5pt}
\noindent
{{\bf (C-RDS $\implies$ C-RJF) and (0-RDS $\implies $ 0-RJF).}}\,\,
These two implications are true because the proof in Section~\ref{sec:proof RDS=>JF} applies verbatim. Indeed, if we assume that C-RJF fails to hold, we start with a  trajectory $\qvec(\cdot)$ that is driven by a constant rate $\lamvec$ and drives queue $m$ to a positive 
value. The construction of the arrival process in Section~\ref{s:def-stoch-arr} yields a process with a constant rate $\bar\lamvec$. Thus, the same proof  establishes that failure of the C-RJF condition leads to failure of C-RDS; equivalently, C-RDS implies C-RJF. The argument that 0-RDS$\implies$0-RJF is the same.

\vspace{5pt}
\noindent
{{\bf 0-RJF $\notimply$ 0-RDS.}}\,\,
This fact exemplifies the difficulty of obtaining necessary and sufficient conditions in the absence of robustness considerations with respect to the arrival rates.

The argument is simple. Consider a single queue that is served at unit rate, and let $\lambda^*=1$. Suppose that the tail exponent is larger than 1, so that no jumps are allowed. In that case, there is only one possible JF trajectory, which obeys $\dot q = 1-1=0$. When initialized at zero, the JF trajectory stays at zero. Thus, the 0-RJF condition holds for the single queue of interest. On the other hand, as long as the arrivals are not deterministic, the stochastic system is marginally unstable, the expected queue length grows to infinity, and the 0-RDS condition does not hold.

The above example involves a system operating at the boundary of its capacity region (marginally unstable). We can also construct simple examples (involving two queues) in which the system operates in the interior of the stability region,  is stable, 
satisfies the 0-RJF condition, 
but is not 0-RDS. Such an example (which we omit) involves a system that operates  at the threshold between robust delay stability and robust delay instability.

\vspace{5pt}
\noindent
{{\bf 0-RJF $ \notimply $ C-RJF.}} \,\, 
 This is again a simple observation. Consider the same single-queue system as in the previous paragraph, with $\lambda^*=1$. As long as the rate is fixed at 1, the JF trajectory stays at zero, and the 0-RJF condition holds. On the other hand, a small constant perturbation that results in $\lambda>1$ yields a divergent JF trajectory, and therefore the C-RJF condition does not hold.
 
\subsection{Conjectures and open problems}

We list here a number of questions and conjectures. 

\vspace{5pt}
\noindent
{\bf 0-RDS $\overset{?}{\implies}$ C-RDS}
We conjecture that when $\lamvec^*\succ {\bf 0}$,\footnote{
The reason for the condition $\lamvec^*\succ {\bf 0}$ is that if $\lamvec^*={\bf 0}$, then a system is trivially 0-RDS, but a small perturbation that leads to positive arrival rates can result in a delay unstable system.} 0-RDS implies C-RDS. Ultimately, this amounts to showing that the set of positive arrival rate 
vectors $\lambda^*$ for which the system is delay stable (robustly, over all distributions with given tail exponents) is open. The rationale behind this conjecture is that in more standard settings (ordinary stability) the 
set of positive vectors $\lamvec^*$ that lead to a stable system is open.

\vspace{5pt}
\noindent
{\bf C-RJF $\overset{?}{\implies}$  PC-RJF} We conjecture that this implication is true, although we do not see how to establish it. If it is true, it would follow from the diagram in Figure~\ref{f:diagram} that C-RJF and C-RDS are equivalent to G-RDS, UC-RJF, and PC-RJF. 

An indirect approach to establishing the conjecture would be to show that (i) C-RJF $\implies$ C-RDS, and (ii) C-RDS $\implies$ G-RDS. However, this appears to be difficult. Our proof that PC-RJF implies G-RDS involves the set $W$ of points reachable by $\epsilon$-JF trajectories; see Definition~\ref{def:W}. However, when we restrict $\lamvec(\cdot)$ to be constant, this set is no longer $\epsilon$-invariant, and Lemma~\ref{lem:absorb}(ii) fails to go through. 

\vspace{5pt}
\noindent
{\bf Generic considerations.} A fundamental reason behind the mismatch between 0-RJF and G-RDS is that, at least for simple examples, the set of nonnegative nominal rates $\lamvec^*$ for which 0-RJF holds is closed 
whereas the set of positive nominal rates $\lamvec^*$ for which G-RDS holds is open. It is conceivable, however, that one set is the closure of the other, and that the difference between the two sets is just a lower-dimensional boundary. This leads us to the  conjecture that 
0-RJF and G-RDS are generically equivalent.

\begin{conjecture}
	Let us fix a network and some $\gamvec$. The set of nonnegative nominal arrival vectors $\lamvec^*$ for which the 0-RJF condition holds  but G-RDS does not hold has zero Lebesgue measure.
\end{conjecture}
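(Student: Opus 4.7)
The plan is to reduce the conjecture to showing that the topological boundary of
\[ S_0 \equals \{\lamvec^* \succeq \mathbf{0}: \text{0-RJF holds at } \lamvec^*\} \]
has Lebesgue measure zero. Let $S_G$ and $S_C$ denote the analogous sets for G-RDS and C-RJF, respectively. From Figure~\ref{f:diagram} we have $S_C \subseteq S_G \subseteq S_0$, and unwinding definitions shows that $\lamvec^* \in S_C$ iff there is an $\epsilon$-ball of $\lamvec^*$ contained in $S_0$; that is, $S_C$ is exactly the interior $\mathrm{int}(S_0)$ relative to $\R_+^\nn$. Therefore $S_0 \setminus S_G \subseteq S_0 \setminus S_C = \partial S_0$, so it suffices to prove $|\partial S_0| = 0$.

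First I would record the standard openness/closedness facts. The set $S_G$ is open directly from the definition of G-RDS: if $\delta$ witnesses RDS at $\lamvec^*$, then $\Aclass_{\delta/2}(\gamvec;\lamvec^{**}) \subseteq \Aclass_\delta(\gamvec;\lamvec^*)$ for every $\lamvec^{**}$ within $\delta/2$ of $\lamvec^*$, so G-RDS holds in that ball. The set $S_0$ is closed via a continuity argument: if 0-RJF fails at $\lamvec^*$, there is a witness JF trajectory with constant rate $\lamvec^*$, finite jump structure $\nvec$ satisfying $\gamvec^T\nvec\le 1$, and some time $T^*$ with $q_m(T^*)>0$. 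Because the minimum-norm drift $\xivec_\lamvec(\qvec)$ is Lipschitz in $\lamvec$ uniformly in $\qvec$ (see~\eqref{eq:drift diff bound}), a Gronwall-type argument shows that fluid trajectories with a fixed jump structure depend continuously on $\lamvec$ on every compact time interval. Hence the same jump structure produces $q_m(T^*)>0$ for all $\lamvec$ in an open neighborhood of $\lamvec^*$, so $\R_+^\nn \setminus S_0$ is open.

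The main obstacle --- and the reason the statement remains a conjecture --- is showing $|\partial S_0|=0$. The approach I would take is to leverage the piecewise-polynomial structure of MW fluid dynamics. For each of the finitely many $\nvec \succeq \mathbf{0}$ with $\gamvec^T\nvec\le 1$, and for each ``combinatorial type'' consisting of a finite sequence of arg-max faces $\S(\xvec)$ visited between jumps together with an assignment $j_k$ of jump coordinates, the map from the finite-dimensional parameter tuple $(\lamvec, \Theta_1,\ldots,\Theta_n, a_1,\ldots,a_n, T^*)$ to $q_m(T^*)$ is piecewise polynomial on a semialgebraic cell. Projecting the stratum $\{q_m(T^*)>0\}$ onto the $\lamvec$-coordinate via Tarski--Seidenberg yields a semialgebraic set, and $\R_+^\nn \setminus S_0$ is the union of all such projections. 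The technical heart of the proof is then to bound the number of combinatorial types that need to be considered: one must show that any JF trajectory either enters a terminal mode in which $q_m\equiv 0$ forever (so $\lamvec^*\in S_0$) or reaches $q_m>0$ within a uniformly bounded number of mode switches. Granting this finiteness --- which is plausible in view of the piecewise-linear absorption phenomena that underlie Theorem~\ref{th:sensitivity-old} and the sensitivity analysis in \cite{AlTG19ssc} --- the set $S_0$ is semialgebraic, and hence $\partial S_0$ is a semialgebraic subset of $\R_+^\nn$ of strictly smaller dimension, so its Lebesgue measure is zero.
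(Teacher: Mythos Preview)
This statement is a \emph{conjecture} in the paper; the paper offers no proof and explicitly presents it as open. So there is no ``paper's proof'' to compare against --- your proposal is an attempted attack on an open problem.

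Your reduction step contains a genuine error. You write ``From Figure~\ref{f:diagram} we have $S_C \subseteq S_G \subseteq S_0$,'' but the diagram gives the \emph{opposite} inclusion: since G-RDS $\Leftrightarrow$ PC-RJF $\Rightarrow$ C-RJF, we get $S_G \subseteq S_C$, not $S_C \subseteq S_G$. The implication you need, C-RJF $\Rightarrow$ G-RDS (equivalently C-RJF $\Rightarrow$ PC-RJF), is precisely the separate open conjecture the paper flags in Appendix~\ref{app:relations}. What is actually known is that $S_G$ is open and $S_G \subseteq S_0$, hence $S_G \subseteq \mathrm{int}(S_0) = S_C$; this yields $S_0 \setminus S_G \supseteq \partial S_0$, the reverse of the containment you need. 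So even a complete proof that $|\partial S_0|=0$ would not by itself give the conjecture --- you would still need to show $\mathrm{int}(S_0) \subseteq S_G$, which is exactly the unresolved C-RJF $\Rightarrow$ PC-RJF question.

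Your second step (semialgebraicity of $S_0$) is a natural line of attack, and you are honest that the crux is the uniform bound on the number of mode switches before either $q_m>0$ or the trajectory settles into a $q_m\equiv 0$ regime. This finiteness is not obvious: MW fluid trajectories can in principle traverse many faces, and nothing in the cited sensitivity results directly bounds the number of combinatorial transitions. So this remains a real gap as well, but note that it is downstream of the first one: fixing the reduction requires resolving another open conjecture from the same paper.
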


\subsection{The details of the arrival distribution may matter}\label{app:details}
Our discussion so far has been about distributionally robust results, dealing with delay stability for all arrival distributions with the given tail exponents $\gamvec$. 
The reason for this was that 
JF models cannot take into account any further properties of these distributions.

Once we start inquiring about delay stability for a fixed, fully-specified system, the situation is  more complex: necessary and sufficient conditions for delay stability appear to be impossible. We illustrate the situation by stating a positive result  
and disciussing the obstacles in establishing a converse.

\vspace{5pt}
\noindent
\subsubsection{\bf Delay stability implies  the 0-RJF condition, under a regularity assumption} Suppose that a particular system (with a constant arrival rate $\lamvec^*$ and given, i.i.d.~arrival distributions)  is delay stable. Suppose furthermore, that the distribution of each $A_j(t)$ satisfies \eqref{eq:def gamma}, with $\gamma$ replaced by the appropriate $\gamma_j$. Then, it can be shown that the 0-RJF condition holds. 
The argument involves similar ideas as the 
proof in Section~\ref{sec:proof RDS=>JF}. That is, we can show that the stochastic system can track an $\epsilon$-JF trajectory with significant probability.

Note, however, that the 0-RJF condition does not imply delay stability, even under such a regularity assumption. The argument is the same as in our earlier example that showed that the 0-RJF condition does not imply the 0-RDS condition.

\subsubsection{Without a regularity assumption, delay stability need not  imply  the 0-RJF condition} 
In contrast to the above mentioned result, we have strong reasons to conjecture that there  exist systems that are  delay stable and yet, the 0-RJF condition fails to hold. The intuition behind this conjectur is as follows.

Consider a system with two heavy-tailed arrival streams, together with some light-tailed ones.
Suppose that the tail exponents of the heavy-tailed arrivals are larger than 1/3. We can arrange the system so that a JF trajectory drives the light-tailed queue of interest to a positive value if and only if we have one jump at each heavy-tailed queue,  the two jump times are approximately equal, and the jump sizes are comparable (within a constant factor of each other). Such a system will not satisfy the 0-RJF condition.

As in the proof in Section~\ref{sec:proof RDS=>JF}, we might expect that the stochastic system can track this JF trajectory. However, we can arrange the 
arrival process distributions for the two heavy-tailed queues to be such that their supports are wide apart. For example, one distribution may be supported on integers of the form $10^{2i}$ and the other on integers of the form $10^{2i+1}$. 
In that case, equal-size jumps are essentially impossible. As a consequence, the stochastic system should be unable to emulate the JF trajectory,  the instability mechanism suggested by the JF trajectory need not be present, and the queue of interest may turn out to be delay stable.


\subsection{The timing of the jumps}\label{app:timing}
The definition of $\epsilon$-JF trajectories allows for jumps at different times. On the other hand, our examples so far rely on jumps that happen simultaneously. This raises the question whether the RJF condition is equivalent to an analogous condition in which we only consider trajectories with simultaneous jumps.  
It turns out that this is not possible. We give an example with four queues in which an $\epsilon$-JF trajectory drives a certain queue to a positive value, but this is only possible if we allow jumps to occur at different times. 

Consider the system in Fig.~\ref{fig:timing}.  The first queue receives heavy-tailed arrivals, with $\gamma_1=1/2$, while the  three other queues receive light-tailed arrivals. There are three possible service vectors as shown in the figure, and the  arrival rate vector is $\lamvec^*=(1,2,1,1)$. 
Note that the condition $\gamvec^T\nvec\leq 1$ allows up to two jumps at queue 1. If we  restrict to simultaneous jumps, this  essentially limits us to a single jump at queue 1.

Suppose that $q_1$ has a jump of size $27$ at time $0$. Then, the $0$-JF trajectory is piecewise linear,  with breakpoints $\qvec(0)=(27,0,0,0)$, $\qvec(3)=(6,6,0,0)$, $\qvec(5)=(0,2,2,0)$, and $\qvec(9)=\mathbf{0}$. This can be easily verified by noticing that the MW policy chooses service vector $\muvec^1$ for $t\in[0,3)$,  service vector $\muvec^2$ for $t\in(3,5)$; and the service capacity is split between $\muvec^2$ and $\muvec^3$ with ratios $5/8$ and $3/8$, for $t\in(5,9)$. 
It then follows from the form of the above  piecewise linear fluid trajectory that given a single jump at time $0$, $q_4$ will stay at zero for all subsequent times. 
We now argue that this will not be the case if  $q_1$ undergoes two jumps at different times. 

Suppose that $q_1$ has a jump of size $27$ at time $0$ and a jump of size $2$ at time $5$. 
Let $\pvec(\cdot)$ be the associated jumping fluid trajectory. 
Then, right before time $5$ we have $\pvec(5^-)=\qvec(5^-)=(0,2,2,0)$. 
Therefore, after the second jump we have $\pvec(5)=(2,2,2,0)$. 
In this case, $\muvec^3$ will be the dominant service vector for some positive time interval starting from time $5$. 
Since $q_4$ receives no service under $\muvec^3$, it will start to build up and become positive.
 
In this example, the 0-RJF condition fails to hold and the system is not 0-RDS.
On the other hand, if we were to restrict to simultaneous jumps, we would not be able to tell that this is the case. Finally, using the same example, we see  that we should also consider non-simultaneous jumps when examining the C-RJF or PC-RJF conditions.

\begin{figure} 
	\begin{center}
		{\includegraphics[width = .8\textwidth]{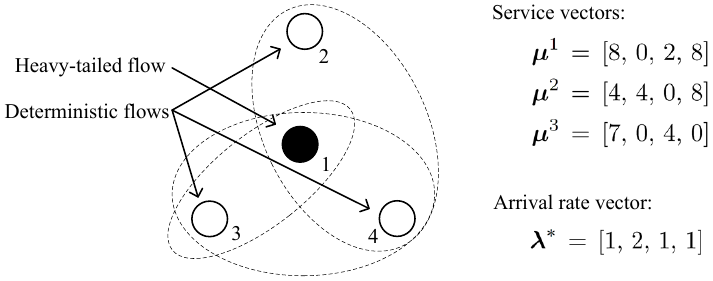}}
		\vspace{0cm}
	\end{center}
	\caption{A network with three light-tailed queues and one heavy-tailed queue, which demonstrates the importance of the timing of multiple jumps. 
	In this example, if $q_1$ undergoes a single jump at time $0$, $q_4$  stays at $0$ . 
	However, two jumps in $q_1$ at suitably arranged times can result in a positive \ $q_4$. 
	}
	\label{fig:timing}
\end{figure}


\medskip
\medskip

\section{Proofs of lemmas for the first direction of Theorem~\ref{th:main} ($\epsilon$-JF $\implies$RDS)} \label{app:proof JF=>RDS}

\subsection{Proof of Theorem~\ref{th:sensitivity}} \label{app:sens-pr}

 We compare the fluid trajectory $\qvec(\cdot)$, which is initialized with $\qvec(0)\neq \Qvec(0)$,  with another fluid trajectory $\tilde\qvec(\cdot)$, initialized  with $\tilde\qvec(0)=\Qvec(0)$.
From the triangle inequality,
$$\big\| \Qvec(t)-\qvec(t)  \big\| \leq \big\| \Qvec(t)-\tilde\qvec(t)  \big\| +
\big\| \tilde\qvec(t)-\qvec(t)  \big\| .
$$
We 
 apply Theorem~\ref{th:sensitivity-old} to bound the first term on the right-hand side. 
 Because of the nonexpansive property of the MW dynamics, we also have
$$\big\| \tilde\qvec(t)-\qvec(t)\big| \leq
\big\| \tilde\qvec(0)-\qvec(0)\big\|=
\big\| \Qvec(0) - \qvec(0)\big\|,$$
and the result follows.

\subsection{Proof of Lemma~\ref{lem:prob event jump if}} \label{appsec:proof lem jump if} 
Let us fix $T$ throughout this proof. Recall the constant 
$\beta\in(1,2)$ introduced in the context of \eqref{def:beta}. 
For $j=1,\ldots,\nn$, we define
$$ 
	\gamma_j' \,=\, \begin{cases} {\gamma_j}/{\beta} \qquad& 
						\mbox{if } \gamma_j< \beta^3, \\ \beta^2 \qquad& 
						\mbox{if }\gamma_j \ge \beta^3.
		\end{cases}.
$$
We then let $\gamvec'=(\gamma'_1,\ldots,\gamma_{\nn}')$.
 and 
$\gamma' \equals \min_{j} \gamma'_j$.
Note that in all cases, we have $\gamma_j'\leq \gamma_j/\beta$, so that $\gamma'\leq \gamma/\beta$. As argued in Claim~\ref{cl:gamma-small}, we can and will (without loss of generality) assume that $\gamma=\min_j \gamma_j\leq 1$ and thus $\gamma'<1$. 
Finally, 
in view of \eqref{def:beta}, it can be seen that for any nonnegative integer vector $\nvec$,
\begin{equation}\label{eq:gamm iff beta}
\mbox{if}\quad  \gamvec^T \nvec >1, \quad \mbox{then}\quad  (\gamvec')^T \nvec \ge\beta^2 >1.
\end{equation}

For every $j$, and according to our definition  \eqref{eq:def gamma process} of the tail exponent of an arrival process, 
there is a random variable  $\Au_j$  that dominates $A_j(t)$, for all $t\ge0$,
and for which all moments of order less than $1+\gamma_j$ are finite; see \eqref{eq:def gamma}.
We  define
\begin{equation}\label{eq:def Gam}
	\Gamma_j \equals \E{{\Au_j} ^{1+\gamma_j'}}, \qquad j=1,\ldots,\nn,
\end{equation}
which is finite because 
$\gamma_j'<\gamma_j$.

For $t=0,\ldots,T-1$, and $j=1,\ldots,\nn$, 
let
$$p_{j,t}=\Pr\big( A_j(t)>\theta_t\big).$$
For  any $j$ and $t\le T-1$, the Markov inequality yields
\begin{equation} \label{eq:pjt le ..}
	\begin{split}
		p_{j,t} &\,=\, \Pr\left(A_j(t)>\theta_t \right) \\
		&\,\le\, \Pr\left(\Au_j>\theta_t\right)\\
		&\,=\, \Pr\left({\Au_j} ^{1+\gamma_j'} >\theta_t^{1+\gamma_j'} \right)\\
		&\,\le\, \frac{\E{{\Au_j} ^{1+\gamma_j'}}}{\theta_t^{1+\gamma_j'} }\\
		&\,=\, \frac{\Gamma_j\,\eta^{1+\gamma_j'}\log^{1+\gamma_j'} (M+T-t)}{(M+T-t)^{1+\gamma_j'} },
	\end{split}
\end{equation}
where the last equality is due to the definitions of $\theta_t$ and $\Gamma_j$ in \eqref{eq:def theta t} and \eqref{eq:def Gam}, respectively.

Let $\phi=1-2^{-1/\nn}$, and note that $0<\phi<1$. 
	Since 
	$0<(1-1/\beta)\gamma' <\gamma_j'$, for every $j$, 
	 there exists some
$M_1\ge 1$ such that if $M\ge M_1$, then 
\begin{equation}\label{eq:def M1 1}
	\Gamma_j \,\eta^{1+\gamma_j'}\, \log^{1+\gamma_j'}M \, \le\, 
	\phi\,\cdot\,\big(\gamma_j' - (1-1/\beta)\gamma'\big)\,\cdot\,  M^{(1-1/\beta)\gamma'}, \quad j=1,\ldots,\nn.
\end{equation}
We fix such an $M_1$. Then, for  $M\ge M_1$, we have
\begin{equation}\label{eq:upper sum p 2}
	\begin{split}
		\sum_{t=0}^{T-1} p_{j,t}\,&\le\, \Gamma_j\,\eta^{1+\gamma_j'} \sum_{t=0}^{T-1} \frac{\log^{1+\gamma_j'} (M+T-t)}{(M+T-t)^{1+\gamma_j'} }\\
		&\leq \, \Gamma_j\,\eta^{1+\gamma_j'} \sum_{\tau=1}^{\infty} \frac{\log^{1+\gamma_j'} (M+\tau)}{(M+\tau)^{1+\gamma_j'} }\\
		&\le\, \phi\big(\gamma_j' - (1-1/\beta)\gamma'\big) \sum_{\tau=1}^{\infty} \frac{ (M+\tau)^{(1-1/\beta)\gamma'}}{(M+\tau)^{1+\gamma_j'} }\\
		&\le\,\phi\big(\gamma_j' - (1-1/\beta)\gamma'\big)\int_{M}^{\infty} \frac{ x^{(1-1/\beta)\gamma'}}{x^{1+\gamma_j'} } \,dx\\
		&=\, \phi\,M^{(1-1/\beta)\gamma' -\gamma_j'},
	\end{split}
\end{equation}
where the first inequality is due to \eqref{eq:pjt le ..} and the third inequality follows from \eqref{eq:def M1 1}.

Let $X_{j,t}$ be the event  $\big\{A_j(t)>\theta_t\big\}$. 
Recall that $N_j$ stands for the number of jumps at queue $j$ during the interval $[0,T)$. 
For any $j$ and any nonnegative integer $n$, we have
\begin{equation}\label{eq:Pr(N_j=n) 2}
	\begin{split}
		\Pr(N_j=n) \,&\le\, \sum_{0\le\tau_1<\cdots<\tau_n\le T-1} \Pr\big(X_{j,\tau_1}\cap\cdots \cap  X_{j,\tau_n}\big)\\
		& =\, \sum_{0\le\tau_1<\cdots<\tau_n\le T-1}  p_{j,\tau_1}\cdots p_{j,\tau_n} \\
		&\le\, \left(\sum_{t=0}^{T-1} p_{j,t}\right)^n\\
		&\le\, \phi^n\,M^{(1-1/\beta)\gamma' n-\gamma_j'n}.
	\end{split}
\end{equation}
where the first inequality follows from the union bound, the  equality is from the independence of the events $X_{j,\tau_1},\ldots, X_{j,\tau_n}$, and the last inequality is due to \eqref{eq:upper sum p 2}. 
Therefore, for any $\nvec\in\Z_+^\nn$, 
\begin{equation}\label{eq:Nvec nvec 2}
	\begin{split}
	\Pr(\Nvec=\nvec) \,&=\, \prod_{j=1}^{\nn} \Pr(N_j=n_j)\\
	\, &\le\, \prod_{j=1}^{\nn} \phi^{n_j}\, M^{(1-1/\beta)\gamma'
	 n_j} \,M^{-\gamma_j'n_j} \\
	\, &=\, \phi^{|\nvec|}\, M^{(1-1/\beta)\gamma'|\nvec|-\nvec^T \gamvec'},
	\end{split}
\end{equation}
where $|\nvec|=n_1+\cdots+n_{\nn}.$
If  $ \gamvec^T \nvec>1$, then \eqref{eq:gamm iff beta} asserts that $\nvec^T\gamvec' \ge\beta^2$. As a result, and since $\gamma'$ is the smallest component of $\gamvec'$, 
\begin{equation}\label{eq:prime not prime ineq 2}
	\left(1-\frac{1}\beta\right)\, \gamma'|\nvec| - \nvec^T\gamvec' \,\le\, \left(1-\frac{1}\beta\right)\, \nvec^T\gamvec'  - \nvec^T\gamvec' 
	\,=\, -\frac{\nvec^T\gamvec' }\beta \, \le\, -\beta.
\end{equation}
Hence,
\begin{equation}
	\begin{split}
		\Pr\left( \gamvec^T \Nvec> 1  \right) \,&=\, \sum_{\nvec:\,\gamvec^T \nvec > 1 }\Pr\left( \Nvec=\nvec  \right)\\
		&\le\, \sum_{\nvec:\, \gamvec^T \nvec > 1 } \phi^{|\nvec|}\, M^{(1-1/\beta)\gamma'|\nvec|-\nvec^T \gamvec'}\\
		&\le\, M^{-\beta} \sum_{\nvec:\, \gamvec^T \nvec > 1 } \phi^{|\nvec|}\\
		&\le \, M^{-\beta} \left(\left(\sum_{n_1,\ldots,n_\nn\ge 0 }   \phi^{|\nvec|}\right)-1\right)\\
		&= \, M^{-\beta} \left(\left(\prod_{j=1}^\nn\sum_{n_j=0}^\infty   \phi^{ n_j}\right)-1\right)\\
		&= \, M^{-\beta} \left(\frac{1}{\left(1-\phi\right)^{\nn}}-1\right)\\
		&= \, M^{-\beta},
	\end{split}
\end{equation}
where the first inequality is from \eqref{eq:Nvec nvec 2} and the second inequality follows from \eqref{eq:prime not prime ineq 2}.
The last equality is because $1-\phi=2^{-1/\nn}$.
This completes the proof of Lemma~\ref{lem:prob event jump if}.



\medskip

\subsection{Proof of Lemma~\ref{lem:Bernstein if}} \label{appsec:proof Bernstein}

The Bernstein inequality (see, e.g.,  (1.21) in Appendix 1 of \citep{AnthB09}) asserts that for any $z\geq 0$, we have
\begin{equation}\label{eq:bern0}
	\Pr\big( |Y-\E{Y}|  \, >\, z \big)\, \le\, 2\exp\left(-\frac{z^2/2}{\sum_{i=1}^n \E{(X_i-\E{X_i})^2} + bz/3 } \right).
\end{equation}
We note that $\E{(X_i-\E{X_i})^2} \le \E{X_i^2} \le \E{X_i b} \le b\lambarmax$. Plugging this into \eqref{eq:bern0},  we obtain
\begin{equation}
	\Pr\big( |Y-\E{Y}|  \, >\, z \big) 
	\,\le\, 2\exp\left(-\frac{z^2/2}{n b\lambarmax + bz/3  }\right). 
\end{equation}
This implies \eqref{eq:lem bern} and completes the proof of Lemma~\ref{lem:Bernstein if}.


\medskip

\subsection{Proof of Lemma~\ref{lem:prob event fluc if}} \label{appsec:proof lem fluc if}

For every $j$, and according to our definition  \eqref{eq:def gamma process} of the tail exponent of an arrival process, 
there is a random variable $\bar{A}_j$  that  dominates $A_j(t)$, for all $t\ge0$,
and for which all moments of order less than $1+\gamma_j$ are finite; see \eqref{eq:def gamma}.
Since $\gamma_j>0$, we have 
$\int_{0}^{\infty}  \Pr\big(\bar{A}_j> x\big)\,dx =\E{\bar{A}_j}<\infty$. 
Therefore, there exists a constant $M_2>0$ such that for any $M> M_2$ and every $j$,
\begin{equation}\label{eq:M2 large}
	\int_{M/\eta\log M}^{\infty} \Pr\big(\bar{A}_j> x\big)\,dx \, \le\, \frac{\gamma\epsilon}{60C\sqrt{\nn}}.
\end{equation}
Note that the choice of $M_2$ is independent of $T$. For the rest of the proof, we fix $M_2$ and assume that $M> M_2$.

Recall that $\Avec^*(\tau)=\min\big\{ \Avec(\tau), \theta_\tau\big\}$, so that 
\begin{equation} \label{eq:tr-eta}
A_j(\tau)-A_j^*(\tau)=\max\big\{0,A_j(\tau)-\theta_\tau\big\}.
\end{equation}
Therefore, 
\begin{equation}\label{eq:truncate-tails}
	\begin{split}
		\E{A_j(\tau)}-\E{A_j^*(\tau)} \,&=\,  \Exp\big[\max\big\{0,A_j(\tau)-\theta_\tau\big\}   \big]\\
		&=\, \int_0^{\infty} \Pr\big( \max\big\{0,A_j(\tau)-\theta_\tau\big\} >x\big)\, dx\\
			&= \, \int_0^{\infty} \Pr\big( A_j(\tau) - \theta_{\tau} >x\big)\, dx\\
		&=\, \int_{\theta_\tau}^{\infty} \Pr\big(A_j(\tau)> x\big) \,dx\\
		&\le\, \int_{M/\eta\log M}^{\infty} \Pr\big(A_j(\tau)> x\big) \,dx\\
		&\le\, \int_{M/\eta\log M}^{\infty} \Pr\big(\bar{A}_j> x\big) \,dx\\
		&\le\, \frac{\gamma\epsilon}{60C\sqrt{\nn}},
	\end{split}
\end{equation} 
where the fourth equality uses a change of variables from $x+\theta_\tau$
to $x$, 
the first inequality uses the fact $\theta_\tau\le M/\eta\log M$ for all $\tau\ge0$ (see the definition \eqref{eq:def theta t} of $\theta_t$), 
the second inequality is because $\bar{A}_j$ dominates  $A_j(\tau)$,
and the last inequality follows from \eqref{eq:M2 large}.
Therefore, for any  $\tau\ge0$,	
\begin{equation}\label{eq:AA*}
	\big\|\E{\Avec(\tau)}-\E{\Avec^*(\tau)}\big\| \,\le\, \sqrt{\nn}\, \max_{j=1,\ldots,\nn} 
	\Big\{ \E{A_j(\tau)}-\E{A_j^*(\tau)} \Big\} \,\le\, \frac{\gamma\epsilon}{60C}.
\end{equation}

We now introduce some ``simpler'' events whose occurrence will be shown to imply the event $\event^{\textrm{fluc}}(T,M)$ that was introduced in~\eqref{eq:def event fluc if}:
\begin{equation}
	\event^* =\Big\{ \Big\| \sum_{\tau=t_0}^{t} \big(\Avec^*(\tau)-\E{\Avec^*(\tau)}  \big) \Big\|  \le \frac{\gamma\epsilon}{30C} \big(M+T-t_0\big),\quad \mbox{for } 0\leq t_0\leq t<T \Big\},	
\end{equation}
and for every $j$,
\begin{equation}
	\event^*_j=\Big\{ \Big| \sum_{\tau=t_0}^{t} \Big(A^*_j(\tau)-\E{A^*_j(\tau)}  \big) \Big|\le \frac{\gamma\epsilon}{30C\nn} \big(M+T-t_0\big),\quad 
	\mbox{for } 0\leq t_0\leq t<T \Big\},
\end{equation}
We observe that the events $\event^*_1,\ldots,\event^*_\nn$ imply the event $\event^*$.
Then,  the union bound, applied to the complements of these events, implies that
\begin{equation}\label{eq:event event*}
	\Pr\big(\event^*\big) \,\ge\, 1- \sum_{j=1}^\nn \big(1-\Pr(\event^*_j)  \big).
\end{equation}

We now argue that the event $\event^*$ implies the event $\event^{\textrm{fluc}}(T,M)$, defined in \eqref{eq:def event fluc if}. Indeed, 
suppose  that event 
$\event^*$ occurs. Then, as long as  $0\le t_0 \leq t < T$, we obtain
\begin{equation}
	\begin{split}
		 &\Big\| \sum_{\tau=t_0}^{t} \big(\Avec^*(\tau)-\lamvec^*  \big) \Big\|  \\
		 &\,=\,  \Big\| \sum_{\tau=t_0}^{t} \big(\Avec^*(\tau)-\E{\Avec^*(\tau)}  \big)\, + \, \sum_{\tau=t_0}^{t} \big(\E{\Avec^*(\tau)} - \E{\Avec(\tau)} \big) \,+\, \sum_{\tau=t_0}^{t} \big(\E{\Avec(\tau)}-\lamvec^*  \big) \Big\|\\
		 &\,\le\, \Big\| \sum_{\tau=t_0}^{t} \big(\Avec^*(\tau)-\E{\Avec^*(\tau)}  \big)\Big\| \, + \, \sum_{\tau=t_0}^{t} \big\|\E{\Avec^*(\tau)} - \E{\Avec(\tau)} \big\| \,+\, \sum_{\tau=t_0}^{t} \big\|\E{\Avec(\tau)}-\lamvec^*   \big\|\\
		 &\,\le\, \frac{\gamma\epsilon}{30C }\, \big(M+T-t_0\big) \, + \, \frac{\gamma\epsilon}{60C }\,(t-t_0) \,+\, \frac{\gamma\epsilon}{20C}\,(t-t_0)\\
		 &\, \le\, \frac{\gamma\epsilon}{10C}\, \big(M+T-t_0\big),
	\end{split}
\end{equation}
where in the second inequality, we used the definition of the event $\event^*$ to bound the first term, 
\eqref{eq:AA*} to bound the second term, and
\eqref{eq:A rate eps2 lamstar} to bound the third term.
The last inequality is because  $t<T$.
Thus, the event $\event^*$ implies the event $\event^{\textrm{fluc}}(T,M)$,  and
\begin{equation}\label{eq:p fluc * *j}
	 \Pr\left( \event^{\textrm{fluc}}(T,M) \right)  \,\ge\,   \Pr\left( \event^* \right) \,\ge\, 1- \sum_{j=1}^\nn \big(1-\Pr(\event^*_j)  \big),
\end{equation}
where the second inequality follows from \eqref{eq:event event*}.

To complete the proof, we derive an upper bound for $1-\Pr(\event^*_j)$. As a first step we obtain a relation between various constants,  which reflects the fact that $\eta$ has been chosen ``large enough''. 
We have 
\begin{equation} \label{eq:eta fractional ineq}
	\begin{split}
		\frac{\gamma^2\epsilon^2\eta }{2(30C\nn)^2 \mumax + 20\gamma\epsilon  C \nn}\,&\ge \,\frac{\gamma^2\epsilon^2 \eta}{1800C^2\nn^2 \mumax + 20\mumax C\nn }\\
		&\ge\, \frac{\gamma^2\epsilon^2}{1820C^2\nn^2 \mumax} \,\cdot\, \eta  \\
		&=\, \frac{\gamma^2\epsilon^2}{1820C^2\nn^2 \mumax} \,\cdot \, \frac{8000 C^2 \nn^2\mumax}{\gamma^2\epsilon^2}\\
		&>\,4,
	\end{split}
\end{equation}
where the first inequality is due to the assumptions $\gamma\le 1$  
and the fact $\epsilon<\mumax$, which is evident 
from the definition 
\eqref{eq:mumax} of $\mumax$; the second inequality is because $C\ge 1$, 
and the equality follows from the definition of $\eta$ in \eqref{eq:def eta}.

Finally, we note that
for any $\tau\ge 0$ and every $j$,
\begin{equation}\label{eq:bound max A*j}
	\E{A^*_j(\tau)} \, \le\, \E{A_j(\tau)} \,\le\, \big\|\E{\Avec(\tau)}\big\| \,\le\, \|\lamvec^*\|+\epsilon \,\le\, \mumax.
\end{equation}

For any $ t_0$ and $t$ with $0\leq t_0\leq t < T$,  
using the fact that $A_j^*(\tau)$ is bounded above by $\theta_{\tau}$,
we have
\begin{equation}\label{eq:prob event*j single}
	\begin{split}
		\Pr&\left(\Big| \sum_{\tau=t_0}^{t} \big(A^*_j(\tau)-\E{A^*_j(\tau)}  \big) \Big|> \frac{\gamma\epsilon}{30C\nn} \big(M+T-t_0\big)\right)\\
		\,&\qquad\le\, 2\exp\left( - \frac{\left(\frac{\gamma\epsilon}{30C\nn} (M+T-t_0)\right)^2}{ 2 \frac{M+T-t_0}{\eta \log(M+T-t_0)} \cdot \left( \mumax(t-t_0+1) + \frac{\gamma\epsilon}{90C\nn} (M+T-t_0)\right)}  \right)\\
		\,&\qquad\le\, 2\exp\left( - \frac{\gamma^2\epsilon^2 \eta \log(M+T-t_0)}{ 2 (30C\nn) ^2 \mumax +  20\gamma\epsilon C\nn }  \right)\\
		\,&\qquad\le\, 2\exp\big(  -4 \log(M+t-t_0)  \big)\\
		&\qquad=\, \frac{2}{(M+T-t_0)^4 },
	\end{split}
\end{equation}
where the first inequality follows from Lemma~\ref{lem:Bernstein if} with  $z=({\gamma\epsilon}/{30C\nn})(M+T-t_0)$, $b=(M+T-t_0)/\eta \log(M+T-t_0)=\theta_{t_0}\ge \theta_\tau$, $\lambarmax = \mumax$ (see \eqref{eq:bound max A*j}), and $n=t-t_0+1$; the second inequality is because $t-t_0+1< M+ T-t_0$, and the third inequality is due to \eqref{eq:eta fractional ineq}.
Therefore, for every $j$,
\begin{equation}\label{eq:1-pe*j}
	\begin{split}
		1-\Pr(\event^*_j) \,&\le\, \sum_{t_0=0}^{T-1}\sum_{t=t_0}^{T-1}  \Pr\left(\Big| \sum_{\tau=t_0}^{t} \big(A^*_j(\tau)-\E{A^*_j(\tau)}  \big) \Big|> \frac{\gamma\epsilon}{30C\nn} \big(M+T-t_0\big)\right)\\ 
		&\le\, \sum_{t_0=0}^{T-1}\sum_{t=t_0}^{T-1}  \frac{2}{(M+T-t_0)^4 }\\ 
		&=\, \sum_{t_0=0}^{T-1}  \frac{2(T-t_0)}{(M+T-t_0)^4 }\\ 
		&\le\, \sum_{t_0=0}^{T-1}  \frac{2}{(M+T-t_0)^3 }\\ 
		&\le\, \sum_{\tau=1}^{\infty}  \frac{2}{(M+\tau)^3 }\\ 
		&\le\, \int_{0}^{\infty}  \frac{2}{(M+x)^3 } \,dx\\ 
		&=\,M^{-2 } ,\\ 
	\end{split}
\end{equation}
where the first inequality is from the union bound and the second inequality is due to \eqref{eq:prob event*j single}. 
Plugging \eqref{eq:1-pe*j} into \eqref{eq:p fluc * *j}, we obtain $\Pr\left( \event^{\textrm{fluc}}(T,M) \right) \geq 1-\nn \, M^{-2} $. This completes the proof of Lemma~\ref{lem:prob event fluc if}.


\medskip

\subsection{Proof of Lemma~\ref{lem:absorb}} \label{appsec:proof lem absorb}
We start with the proof of the first part,  and assume that $W$ is $\epsilon$-invariant. We will prove the result under the additional assumption that the set $W$ is closed. This 
is without  loss of generality, for the following reason. Given an $\epsilon$-invariant set $W$, let $\overline{W}$ be its closure.
Because fluid trajectories under the MW policy are continuous functions of their initial conditions, it is not hard to see that 	$\overline{W}$ is 
also $\epsilon$-invariant. Once we show the result for closed sets, we  will have established that 	$\overline{W}$ is $\epsilon$-attracting. Finally, since $d(\xvec,W)=d(\xvec,\overline{W})$, for all $\xvec$, we can  conclude that ${W}$ is also $\epsilon$-attracting.

Having assumed that $W$ is closed, 
we now consider 
 a fluid trajectory $\qvec(\cdot)$ initialized  with $\qvec(0)=\qvec_0\succeq {\bf 0}$, with $\qvec_0\not\in W$. Then,
there exists some $\xvec_0\in W$ which is closest to $\qvec_0$. 
Let $\xvec(\cdot)$ be a fluid trajectory  initialized at $\xvec(0)=\xvec_0$ and corresponding to the rate vector 
\begin{equation}\label{eq:def lam e lam*}
	\lamvec \equals \lamvec^* + \epsilon\,\frac{\qvec_0-\xvec_0}{\|\qvec_0-\xvec_0 \|}.
\end{equation} 
Since $W$ is $\epsilon$-invariant, 
and since $\|\lamvec - \lamvec^*\|=\epsilon$, 
we have $\xvec(t)\in W$ for all $t\ge 0$. 
In particular, $d\big (\qvec(t),W\big) \leq \big\|\qvec(t) - \xvec(t)\big\|$, for all $t\geq 0$. Furthermore, equality holds at time $t=0$. This implies that
\begin{equation}\label{eq:dqw in terms of dqx}
	\frac{d}{dt} d\big(\qvec(t),\, W\big)\Big|_{t=0} \,\le\, \frac{d}{dt} \big\|\qvec(t) - \xvec(t)\big\| \Big|_{t=0}.
\end{equation}

From the fluid equations (see Definition~\ref{def:fluid model}), we have $\dot{\xvec}(0)\in \Drift_{\lamvec}(\xvec_0) =  \lamvec -\Sbar(\xvec_0) $.
Equivalently, there exist coefficients $\alpha_\muvec \ge0$, for $\muvec\in \S (\xvec_0)$, such that $\sum_{\muvec\in\S (\xvec_0)} \alpha_\muvec=1$ and  

\begin{equation}\label{eq:dot x}
	\dot{\xvec}(0) = \lamvec - \sum_{\muvec\in\S(\xvec_0)} \alpha_\muvec \muvec.
\end{equation} 
Similarly, let $\beta_\nuvec \ge0$ for 
$\nuvec\in \S(\qvec_0)$
 be a set of coefficients, such that $\sum_{\nuvec\in\S(\qvec_0)} \beta_\nuvec=1$ and 
\begin{equation} \label{eq:dot q}
	\dot{\qvec}(0) = \lamvec^* - \sum_{\nuvec\in\S(\qvec_0)} \beta_\nuvec \nuvec.
\end{equation}

For any $\muvec\in \S(\xvec_0)$ and any $\nuvec\in \S(\qvec_0)$, since by definition $\muvec$ is a maximizer of $\uvec^T\xvec_0$ over all $\uvec\in\S$, we have $(\muvec-\nuvec)^T \xvec_0\ge 0$. Similarly, $(\nuvec-\muvec)^T\qvec_0\ge 0$. Combining these two inequalities, we obtain
\begin{equation}
	( \muvec-\nuvec )^T \big(\xvec_0-\qvec_0\big) \,=\, ( \muvec-\nuvec )^T \xvec_0 \,+\, ( \nuvec-\muvec )^T \qvec_0 \, \ge\,0,\qquad \forall\ \muvec\in\S(\xvec_0),\ 
	\nuvec\in\S(\qvec_0).
\end{equation}
Since $\sum_{\muvec\in\S(\xvec_0)} \alpha_\muvec=\sum_{\nuvec\in\S(\qvec_0)} \beta_\nuvec=1$, it follows that
\begin{equation}\label{eq:MW dir ineq}
	\Big( \sum_{\muvec\in\S(\xvec_0)} \alpha_\muvec\muvec\,-\,  \sum_{\nuvec\in\S(\qvec_0)} \beta_\nuvec \nuvec \Big)^T \big(\xvec_0-\qvec_0\big) \, 
	=\, \sum_{\muvec\in\S(\xvec_0)}\sum_{\nuvec\in\S(\qvec_0)} \alpha_\muvec  \beta_\nuvec
	(\muvec-\nuvec)^T(\xvec_0-\qvec_0)
	\ge\,0.
\end{equation}

Using \eqref{eq:dqw in terms of dqx},  we have
\begin{equation}
	\begin{split}
		\frac{d}{dt} d\big(\qvec(t),\, W\big)\Big|_{t=0} \,&\le\, \frac{d}{dt} \big\|\qvec(t) - \xvec(t)\big\| \Big|_{t=0}\\
		&=\, \frac{\big(\dot{\xvec}(0) - \dot{\qvec}(0)\big)^T \big(\xvec_0 - \qvec_0\big)}{\big\|\xvec_0 - \qvec_0\big\| }\\
		&=\, \frac{1}{\big\|\xvec_0 - \qvec_0\big\| } \Big( \lamvec -  \sum_{\muvec\in\S(\xvec_0)} \alpha_\muvec \muvec   \,-\, \lamvec^* + \sum_{\nuvec\in\S(\qvec_0)} \beta_\nuvec \nuvec  \Big)^T \big(\xvec_0 - \qvec_0\big)  \\
		&=\, \frac{(\lamvec-\lamvec^*)^T \big(\xvec_0 - \qvec_0\big)}{\big\|\xvec_0 - \qvec_0\big\| } \, -\,  \Big(\sum_{\muvec\in\S(\xvec_0)} \alpha_\muvec \muvec   \,-\, \sum_{\nuvec\in\S(\qvec_0)} \beta_\nuvec \nuvec  \Big)^T \frac{\xvec_0 - \qvec_0}{\big\|\xvec_0 - \qvec_0\big\| }  \\
		&\le\, \frac{(\lamvec-\lamvec^*)^T \big(\xvec_0 - \qvec_0\big)}{\big\|\xvec_0 - \qvec_0\big\| } \\
		&=\, \left( \frac{\epsilon\big(\qvec_0-\xvec_0\big)}{\big\|\xvec_0 - \qvec_0\big\|} \right)^T \,\frac{ \big(\xvec_0 - \qvec_0\big) }{\big\|\xvec_0 - \qvec_0\big\| } \\
		&=\, -\epsilon.
	\end{split}
\end{equation}
where the second equality follows from \eqref{eq:dot x} and \eqref{eq:dot q}, the second inequality is due to \eqref{eq:MW dir ineq}, and the fourth equality uses  the definition of $\lamvec$ in \eqref{eq:def lam e lam*}. 
Thus, $W$ is $\epsilon$-attracting. 

For the proof of the second part, 
we fix some $\nvec$ and consider  the set $W(\nvec)$, as in Definition~\ref{def:W}. Let $\xvec$ be an element of $W(\nvec)$. Then, there exists an $\epsilon$-JF($\nvec$) trajectory $\xvec(\cdot)$ that reaches $\xvec$. Consider now a fluid trajectory $\qvec(\cdot)$, corresponding to $\lamvec$, for some $\lamvec$ with $\|\lamvec-\lamvec^*\|\leq \epsilon$, and initialized at $\xvec$. The concatenation of the trajectories $\xvec(\cdot)$ and $\qvec(\cdot)$ is  an $\epsilon$-JF($\nvec$) trajectory, and therefore any point that it can reach also belongs to $W(\nvec)$. Thus, $W(\nvec)$ is $\epsilon$-invariant, as claimed.


\medskip

\subsection{Proof of Lemma~\ref{lem:jump free}} \label{appsec:proof lem jump free}
As in the statement of the lemma, we fix some $M\ge \Mthree$,  and times that satisfy $0\leq t_0<t_1\le T$. We also fix a sample path under which the event 
$\event^{\textrm{fluc}}(T,M)$ occurs,  and the interval $(t_0,t_1)$ is jump-free.

Let $\qvec(\cdot)$ be a fluid trajectory corresponding to the arrival rate $\lamvec^*$ and initialized with $\qvec(t_0+1) = \Qvec(t_0+1)$. 
Then,
\begin{equation} \label{eq:Qq bound 1}
	\begin{split}
		\big\| \Qvec(t_1)-\qvec(t_1)  \big\| 
		&\,  \le  \, C\left(1+\|\lamvec^*\| +\max_{t\in(t_0,t_1)} \big\| \sum_{\tau=t_0+1}^{t} \big(\Avec(\tau) - \lamvec^*\big)  \big\| \right) \\
		&\,  =  \, C\left(1+\|\lamvec^*\| +\max_{t\in(t_0,t_1)} \big\| \sum_{\tau=t_0+1}^{t} \big(\Avec^*(\tau) - \lamvec^*\big)  \big\|  \right) \\
		&\,  \le  \, C\left(1+\|\lamvec^*\| +  \frac{\gammast\epsilon }{10C}\, \big(M+T-t_0\big)   \right) \\
		&\,= \, \frac{\gammast\epsilon  }{10}\,\big(M+T-t_0\big) + (\|\lamvec^*\|+1)C,
	\end{split}
\end{equation}
where the first inequality follows from Theorem~\ref{th:sensitivity}, the first equality is because $(t_0,t_1)$ is jump-free and as a result $\Avec^*(\tau) = \Avec(\tau)$ for all $\tau\in(t_0,t_1)$, and the second inequality is because of the occurrence of the event $\event^{\textrm{fluc}}(T,M)$.\footnote{In case $t_1=t_0+1$, the interval $(t_0,t_1)$ is empty, and the term involving a maximum over $t\in(t_0,t_1)$ is interpreted as zero.}

Moreover, from Lemma~\ref{lem:absorb}, $W(t_0+1)$ is $\epsilon$-attracting. 
Furthermore, because $(t_0,t_1)$ is a jump-free interval, we have $W(t_1) = W(t_0+1)$. 
Therefore,  
\begin{equation}\label{eq:dqW bound 1}
	\begin{split}
	d\big(\qvec(t_1),\,W(t_1)\big) &\, =\,  d\big(\qvec(t_1),\,W(t_0+1)\big)\\
	&\,\le\,  \max\Big\{0,\, d\big(\qvec(t_0+1),\,W(t_0+1)\big)-(t_1-t_0-1)\epsilon\Big\},
	\end{split}
\end{equation}
where  the inequality rests on \eqref{eq:absorb to W}.

We now consider the effect of possible jumps at time $t_0$. 
Let $k\ge 0$ be the number of jumps that occur at time $t_0$, in different entries of $\Avec(t_0)$. Without loss of generality, suppose that $A_1,\ldots,A_k$ undergo a jump at time $t_0$. Let $\Jvec$ be an $\nn$-dimensional vector with the first $k$ entries equal to $A_1(t_0),\ldots,A_k(t_0)$ and all other entries equal to zero.
Then, $\Avec(t_0)-\Jvec$ is an $\nn$-dimensional vector whose first $k$ entries are zero;  all other entries are jump-free and are therefore bounded by $\theta_{t_0}=(M+T-t_0)/\big(\eta\log (M+T-t_0)\big)$.
Thus,
\begin{equation}\label{eq:A J}
	\big\| \Avec(t_0)-\Jvec\big\| \, \le\, \frac{(M+T-t_0)\nn}{\eta\log (M+T-t_0)}.
\end{equation}

A key consequence of our definition of the sets $W(t)$ is that 
if $t_0$ is a jump time, then some components of the vector $\Nvec(t_0)$ 
are larger than those of $\Nvec(t_0-1)$, so that the set $W(t_0+1)$ is larger than the set $W(t_0)$. 
In particular, if 
$\xvec\in W(t_0)$, then $\xvec+\Jvec\in W(t_0+1)$.
Now, let $\xvec$ be a point in the closure of $W(t_0)$ which is 
closest to $\Qvec(t_0)$, i.e.,
$$\xvec \in \argmin{\yvec\in \mbox{\small closure}(W(t_0))} \| \yvec - \Qvec(t_0)\|.$$ 
Since $\xvec\in \mbox{closure}\big(W(t_0)\big)$, it follows 
that $\xvec+\Jvec\in \mbox{closure}\big(W(t_0+1)\big)$. 
Therefore, 
\begin{equation}\label{eq:d qW t0+1}
	\begin{split}
		d\big(\qvec(t_0+1),\,W(t_0+1)\big)\, &\le\, d\big(\qvec(t_0+1),\,\xvec+\Jvec\big)\\
		&=\, d\Big( \big[\Qvec(t_0)-\muvec(t_0)\big]^+ + \Avec(t_0),\, \xvec+\Jvec\Big)\\
		&\le\,  d\Big( \big[\Qvec(t_0)-\muvec(t_0)\big]^+ + \Avec(t_0),\, \Qvec(t_0) + \Avec(t_0)\Big)\\
		&\quad+\, d\Big(\Qvec(t_0) + \Avec(t_0),\, \Qvec(t_0) + \Jvec\Big)\,+\,  d\Big(\Qvec(t_0) + \Jvec, \, \xvec+\Jvec\Big)\\
		&=\,  d\Big( \big[\Qvec(t_0)-\muvec(t_0)\big]^+ ,\, \Qvec(t_0) \Big) \,+\, d\big( \Avec(t_0),\,  \Jvec\big)\,+\,  d\big(\Qvec(t_0) , \, \xvec\big)\\
		&\le\,  \|\muvec(t_0)\| \,+\, d\big( \Avec(t_0),\,  \Jvec\big)\,+\,  d\big(\Qvec(t_0) , \, \xvec\big)\\
		&\le\,   \mumax \,+\, \frac{(M+T-t_0)\nn}{\eta\log (M+T-t_0)} \,+\,  d\big(\Qvec(t_0) , \, \xvec\big)\\
		&=\,   \mumax \,+\, \frac{(M+T-t_0)\nn}{\eta\log (M+T-t_0)} \,+\,  d\big(\Qvec(t_0) , \, W(t_0)\big),
	\end{split}
\end{equation}
where the first inequality is because $\xvec+\Jvec\in \mbox{closure}\big(W(t_0+1)\big)$, the first equality is from the evolution formula for the MW dynamics in \eqref{eq:evolution MW} and the initialization $\qvec(t_0+1)=\Qvec(t_0+1)$ for $\qvec(\cdot)$, the last inequality follows from \eqref{eq:A J}, and the last equality is from the definition of $\xvec$.

Combining \eqref{eq:Qq bound 1}, \eqref{eq:dqW bound 1}, and \eqref{eq:d qW t0+1}, we obtain for $M\ge \Mthree$,
\begin{equation}
	\begin{split}
		d\big(\Qvec(t_1),\, W(t_1) \big)\,
		&\le\, d\big(\Qvec(t_1),\, \qvec(t_1) \big)\,+\,  d\big(\qvec(t_1),\, W(t_1) \big)\\
		&\le\, \frac{\gammast\epsilon \,(M+T-t_0) }{10} + (\|\lamvec^*\|+1)C\,+\,  d\big(\qvec(t_1),\, W(t_1) \big)\\
		&\le\, \frac{\gammast\epsilon \,(M+T-t_0) }{10} + (\|\lamvec^*\|+1)C \,+\,  \max\Big\{0,\, d\big(\qvec(t_0+1),\,W(t_0+1)\big)-(t_1-t_0-1)\epsilon\Big\}\\
		&\le\, \frac{\gammast\epsilon \,(M+T-t_0) }{10} + (\|\lamvec^*\|+1)C \\
		&\quad+\,  \max\left\{0,\, \left[d\big(\Qvec(t_0) , \, W(t_0)\big)+ \mumax + \frac{(M+T-t_0)\nn}{\eta\log (M+T-t_0)} \right] -(t_1-t_0-1)\epsilon\right\}\\
		&\le \, \max\Big\{0,\, d\big(\Qvec(t_0) , \, W(t_0)\big) - (t_1-t_0-1)\epsilon \Big\}\\
		&\quad+ \,   \mumax + \frac{(M+T-t_0)\nn}{\eta\log (M+T-t_0)} + \frac{\gammast\epsilon \,(M+T-t_0)}{10} + (\|\lamvec^*\|+1)C \\
		&\le \, \max\Big\{0,\, d\big(\Qvec(t_0) , \, W(t_0)\big) - (t_1-t_0)\epsilon \Big\}\, + \, \epsilon\\
		&\quad +  \frac{(M+T-t_0)\nn}{\eta\log (M+T-t_0)} + \frac{\gammast\epsilon \,(M+T-t_0) }{10} + (\|\lamvec^*\|+1)C + \mumax \\
		&\le \, \max\Big\{0,\, d\big(\Qvec(t_0) , \, W(t_0)\big) - (t_1-t_0)\epsilon \Big\}\,+ \,  \frac{\gammast\epsilon \,(M+T-t_0) }{6}
	\end{split}
\end{equation}
where the second inequality is from \eqref{eq:Qq bound 1}, the third inequality is due to \eqref{eq:dqW bound 1}, the fourth inequality follows from \eqref{eq:d qW t0+1}, and the last inequality is because $M$ has been chosen large enough, as in \eqref{eq:q:def Mthree}.
This completes the proof of Lemma~\ref{lem:jump free}.

\medskip

\subsection{Proof of Lemma~\ref{lem:deterministic dQW}} \label{appsec:proof lem deterministic dQW}
Let $N(t)$ be the cardinality of $\Nvec(t)$, that is,  the number of jumps up to time $t$. We also use the convention $N(-1)=0$.
  When the sample path is such that the event $\event^{\textrm{jump}}(T,M)$ occurs, then $N(t) \le 1/\gamma$, for all $t< T$. Thus, for any $t\in[0,T)$,
\begin{equation}\label{eq:bet<eps}
	\frac{\big(N(t)+2\big) \,\gamma\epsilon}{3} \,\le\, \frac{\big(1/\gamma+2\big) \,\gamma\epsilon}{3} \, 
	= \, \frac{(1+2\gamma)\epsilon}{3}\, \leq\, \epsilon,
\end{equation}
where we have used our assumption that $\gamma\leq 1$.

We fix a sample path of the arrival process $\Avec(\cdot)$ under which both events  $\event^{\textrm{jump}}(T,M)$ and $\event^{\textrm{fluc}}(T,M)$ occur. We will use  strong induction to prove that for any $t\in [0,T]$,
\begin{equation}\label{eq:strong induction}
	d\big(\Qvec(t),W(t)\big)\, \le\, \frac{\big(N(t-1)+2\big) \,\gamma\epsilon}{6}\, (M+T-t).
\end{equation}

To establish the base case of the induction, we use the assumption $\Qvec(0)={\bf 0}$ and the fact that ${\bf 0}\in W(0)$ (because $\epsilon$-JF trajectories are zero for negative times). Thus, $d\big(\Qvec(0),W(0)\big)=0$, and therefore~\eqref{eq:strong induction} holds for $t=0$.

For the induction step, we consider a time $t_1\in(0,T]$ and assume that \eqref{eq:strong induction} holds for all $t<t_1$. We will show that \eqref{eq:strong induction} also holds for time $t_1$. Let 
\begin{equation}\label{eq:def t0}
	{t}_0 \,\equals\, \max\big\{0, T - (2(T-t_1)+M)\big\} \,=\, \max\big\{0, 2t_1-T-M\big\}.
\end{equation}
Note that either $t_0=0<t_1$ or $t_0=2t_1-T-M\leq t_1-M<t_1$, so that we always have $t_0<t_1$.
We consider two cases:

\vspace{5pt} \noindent
{\bf Case 1.} For the first case, we assume that the interval $({t}_0,t_1)$ is jump-free.\footnote{This includes the case where $t_1=t_0+1$, so that the interval $(t_0,t_1)$ is empty.}
We consider two subcases. 
If $t_0=0$, then
\begin{equation}\label{eq:case1 1}
	\begin{split}
		d\big(\Qvec(t_1),\, W(t_1)  \big) \,&\le\, \max \Big\{ 0,\,  d\big(\Qvec(t_0),\, W(t_0)  \big) - (t_1-t_0) \epsilon    \Big\} \,+\,  \frac{\epsilon\gamma}6\,  (M+T-t_0)\\
		&=\, \frac{\epsilon\gamma}6\,  (M+T-t_0)\\
		&\le\, \frac{\epsilon\gamma}6\,  \big(M+T-(2t_1-T-M)\big)\\
		&=\, \frac{2\epsilon\gamma}6\,  (M+T-t_1),\\
	\end{split}
\end{equation}
where the first inequality is due to Lemma~\ref{lem:jump free}, the first equality is because of the assumptions $t_0=0$ and $\Qvec(0)={\bf 0}$, together with the observation that ${\bf 0}\in W(0)$. The second inequality follows from \eqref{eq:def t0}, which implies that $0=t_0\ge 2t_1-T-M$. In particular, \eqref{eq:strong induction} holds for $t=t_1$.

For the second subcase, we assume that
 $t_0>0$, in which case, $t_0=2t_1-T-M$. Then, 
\begin{equation}\label{eq:case1 2}
	\begin{split}
		& d\big(\Qvec(t_1),\, W(t_1)  \big) \, \\
		&\le\, \max \Big\{ 0,\,  d\big(\Qvec(t_0),\, W(t_0)  \big) - (t_1-t_0) \epsilon    \Big\} \,+\,  \frac{\epsilon\gamma}6\,  (M+T-t_0)\\
		&\le\, \max \left\{ 0,\,  \frac{\big(N(t_0-1)+2\big) \,\gamma\epsilon}{6}\, (M+T-t_0)   - (t_1-t_0) \epsilon  \right\} \,+\,  \frac{\epsilon\gamma}6\,  (M+T-t_0)\\
		&=\, \max \left\{ 0,\,  \frac{\big(N(t_0-1)+2\big) \,\gamma\epsilon}{6}\, \big(2M+2T-2t_1\big)   - (M+T-t_1) \epsilon  \right\} \,+\,  \frac{\epsilon\gamma}6\,  (2M+2T-2t_1)\\
		&=\, \max \left\{ 0,\,  \left(\frac{\big(N(t_0-1)+2\big) \,\gamma\epsilon}{3}-\epsilon\right)\,(M+T-t_1)   \right\} \,+\,  \frac{2\epsilon\gamma}6\,  (M+T-t_1)\\
		&=\, \frac{2\epsilon\gamma}6\,  (M+T-t_1),
	\end{split}
\end{equation}
where the first inequality follows from  Lemma~\ref{lem:jump free}, the second inequality is due to the induction hypothesis \eqref{eq:strong induction},  the first equality uses the substitution $t_0 = 2t_1-T-M$, and the last equality is due to \eqref{eq:bet<eps}. 
Thus, \eqref{eq:strong induction} again holds for $t=t_1$.
This completes the induction step for Case~1.

\vspace{5pt}\noindent
{\bf Case 2.} For the second case, we assume that there is a jump in the interval $({t}_0,t_1)$. Let $\hat{t}_0$ be the last jump time in the interval $({t}_0,t_1)$, so that $(\hat{t}_0,t_1)$ is jump free. Since $\hat{t}_0$ is the last jump time, we have 
\begin{equation}\label{eq:Nt1 Nt0}
N(t_1-1) \,=\, N(\hat{t}_0)  \,\ge\, N(\hat{t}_0-1)+1,
\end{equation}
where the inequality is because there is at least one jump at $\hat{t}_0$ (we say ``at least'' because 
at time $\hat{t}_0$, we could have jumps at multiple components of $\Avec(\cdot)$.
Consequently,
\begin{equation}\label{eq:case2}
	\begin{split}
		& \hspace{-30pt} d\big(\Qvec(t_1),\, W(t_1)  \big) \\
		&\le\, \max \Big\{ 0,\,  d\big(\Qvec(\hat{t}_0),\, W(\hat{t}_0)  \big) - (t_1-\hat{t}_0) \epsilon    \Big\} \,+\,  \frac{\epsilon\gamma}6\,  (M+T-\hat{t}_0)\\
		&\le\, \max \left\{ 0,\,  \frac{\big(N(\hat{t}_0-1)+2\big) \,\gamma\epsilon}{6}\, (M+T-\hat{t}_0)   - (t_1-\hat{t}_0) \epsilon  \right\} \,+\,  \frac{\epsilon\gamma}6\,  (M+T-\hat{t}_0)\\
		&=\, \max \left\{ \frac{\epsilon\gamma}6\,  (M+T-\hat{t}_0),\,  \frac{\big(N(\hat{t}_0-1)+3\big) \,\gamma\epsilon}{6}\, (M+T-\hat{t}_0)   - (t_1-\hat{t}_0) \epsilon  \right\} \\
		&\le\, \max \left( \frac{\epsilon\gamma}6\,  (M+T-\hat{t}_0),\,  \frac{\big(N(t_1-1)+2\big) \,\gamma\epsilon}{6}\, (M+T-\hat{t}_0)   - (t_1-\hat{t}_0) \epsilon  \right) \\
		&\le\, \max \left\{ \frac{\epsilon\gamma}6\,  (2M+2T-2t_1),\,  \frac{\big(N(t_1-1)+2\big) \,\gamma\epsilon}{6}\, (M+T-t_1)  \right. \\
		&\left. \hspace{45pt}+ \left(\frac{\big(N(t_1-1)+2\big) \,\gamma\epsilon}{6} -\epsilon\right) (t_1-\hat{t}_0)  \right\} \\
		&\le\, \max \left\{ \frac{2\epsilon\gamma}6\,  (M+T-t_1),\,  \frac{\big(N(t_1-1)+2\big) \,\gamma\epsilon}{6}\, (M+T-t_1)   \right\} \\
		&=\,\frac{\big(N(t_1-1)+2\big) \,\gamma\epsilon}{6}\, (M+T-t_1) ,\\
	\end{split}
\end{equation}
where the first inequality follows from  Lemma~\ref{lem:jump free} and the assumption that $(\hat{t}_0,t_1)$ is jump-free, the second inequality is due to the induction hypothesis \eqref{eq:strong induction}, the third inequality is from \eqref{eq:Nt1 Nt0}, the fourth inequality is because $\hat{t}_0\ge t_0\ge 2t_1-T-M$, and  the last inequality  is due  to \eqref{eq:bet<eps}.  
Therefore, the induction step goes through for Case~2 as well. This completes the proof of the induction, and implies \eqref{eq:strong induction}, for all $t\in [0,T]$.

Finally, letting $t=T$,  \eqref{eq:strong induction} becomes
\begin{equation}
	d\big(\Qvec(T),W(T)\big)\, \le\, \frac{\big(N(T-1)+2\big) \,\gamma\epsilon}{6}\, M \,\le\, \frac{M\epsilon}2,
\end{equation}
where the last inequality is due to \eqref{eq:bet<eps}. This completes the proof of Lemma~\ref{lem:deterministic dQW}.

\medskip

\section{Proof of  lemmas for  the second direction of Theorem~\ref{th:main}: RDS$\implies\epsilon$-JF condition}\label{app:proof RDS=>JF}

\subsection{Proof of Lemma~\ref{l:convenient}} \label{ap:convenient} 
Suppose that the $\epsilon$-JF($\gamvec$) condition fails to hold. Then, there exists 
a nonnegative integer  vector $\nvec'$, with $\gamvec^T\nvec'\leq 1$, an 
$\epsilon$-JF($\nvec'$) trajectory $\qvec'(\cdot)$, and some time $T$ such that
$q'_m(T)>0$. If $T=0$, we can use right-continuity of trajectories to see that, without loss of generality, we can take $T$ to be positive. 
We then define a new trajectory $\qvec(\cdot)$ by letting
$\qvec(t)=\qvec'(tT)/T$, for all $t\ge0$.
It is not hard to verify that $\qvec(\cdot)$ is also an $\epsilon$-JF($\nvec'$) trajectory, and satisfies
$q_m(1)=q'_m(T)/T>0$, so that the first property is satisfied.

Suppose now that some of the jumps of $\qvec(\cdot)$ happen after time 1. Let
$\nvec$ be the vector that counts the number of jumps that take place until time 1.  Starting with $\qvec(\cdot)$, we eliminate the jumps that happen after time 1, to obtain an $\epsilon$-JF($\nvec$) trajectory, with 
$\gamvec^T \nvec  \leq \gamvec^T \nvec' \leq 1$,  all jumps in $[0,1]$, and $q_m(1)>0$.
By slightly perturbing the jump times, and using  a continuity argument, we 
can ensure  that no two components have simultaneous jumps, and also  that
all jump times  belong to $(0,1)$, so that properties (b) and (c) in Lemma~\ref{l:convenient} are satisfied.

Finally, we can replace the arrival rates $\lambda_j(t)$ that drive the $\epsilon$-JF trajectory by $\max\{\lambda_j(t),\epsilon'\}$, where $\epsilon'<\epsilon$ is a small positive constant. This ensures that $\inf_t \lambda_j(t)>0$. Furthermore, using a continuity argument, and as long as $\epsilon'$ is small enough, the property $q_m(1)>0$ is preserved. This proves condition (d) in Lemma~\ref{l:convenient}, and concludes the proof of the lemma.

\medskip

\subsection{Proof of Lemma~\ref{lem:bound episode jump}} \label{appsec:proof lem bound episode jump}
For simplicity, and without loss of generality, we present the proof for the case where  $t_0=0$. 
We let $\bar{T}_1$ be a large enough constant such that for all $T\geq \bar{T}_1$ and all $k$, we have
\begin{equation} \label{eq:adT mumax}
(a_\kappa-\dd )T\,\ge\, \max\{\mumax, \sqrt{T}\}.
\end{equation}
 This is possible because according to the definition of $\dd $ in \eqref{eq:delta'}, we have $\min_\kappa a_\kappa>\dd $.

We fix some $T\ge\bar{T}_1$, as well as  some $\kappa\in\{1,\ldots,n\}$. We
 aim to show that the process has a substantial probability of a jump of size approximately $a_kT$ during the interval $\big[\Theta_k T, (\Theta_k+d)T\big)$. 
Within the proof of this lemma, we use the symbol $j$ (instead of $j_k$) to denote 
 the index of the queue at which the $k$th jump took place.

From \eqref{eq:adT mumax}, we have 
$\log \big((a_k-d) T\big) \geq (1/2)\log T$.
 We then obtain,
for any $t$, any $j$, and any $x\in \big[(a_\kappa-\dd )T, a_\kappa T\big]$, 
\begin{equation}\label{eq:fA decreasing}
	\begin{split}
	f_{A_{j}(t)}(x) \,&=\, \frac{\bar\lambda_j(t)}{\sigma(\gamma_j)}\cdot x^{-(2+\gamma_j)}\log(x+1)\,\one\left(x\ge\mumax\right) \\
	&=\, \frac{\bar\lambda_j(t)}{\sigma(\gamma_j)}\cdot x^{-(2+\gamma_j)}\log(x+1) \\
	&\ge\, \frac{\bar\lambda_j(t)}{\sigma(\gamma_j)}\cdot (a_\kappa T)^{-(2+\gamma_j)}\log\big((a_\kappa-d) T\big)\\
	&\ge\, \zeta T^{-(2+\gamma_j)}\log T,
	\end{split}
\end{equation}
where the second equality is due to \eqref{eq:adT mumax}, and where 
$\zeta$ is a positive constant chosen so that $\zeta \leq  \bar\lambda_i(t)/2 \sigma(\gamma_i)a_\kappa ^{(2+\gamma_{j})}$, for every $i$,  $k$, and $t$. Note that $\zeta$ can be taken positive because, according to Lemma~\ref{l:convenient}, we can assume that 
$\inf_t \bar\lambda_j(t)>0$. 

    We define $\phi=\zeta d$. Then,
\begin{equation}\label{eq:prob A {s alpha}(t)}
	\begin{split}
		\Pr\Big(A_{j}(t)\in \big[(a_\kappa-\dd )T, a_\kappa T\big]\Big) & \,=\, \int_{(a_\kappa-\dd )T}^{a_\kappa T} f_{A_{j}(t)}(x)\,dx\\
		&\ge\, (dT)\cdot \zeta T^{-(2+\gamma_j)}\log T\\
		&=\, \phi\, T^{-(1+\gamma_j)}\log T.
	\end{split}
\end{equation}

As in \eqref{eq:B-cumul}, but with $t_0=0$, we define
\begin{equation}\label{eq:B-cumul-new}
	\Bvec_\kappa \,=\, \sum_{t =  \Tet_\kappa T}^{\Tet_\kappa T +\dd T -1} \Avec(t),
\end{equation}
and, for $t\in \big[\Tet_\kappa T, (\Tet_\kappa+\dd )T\big)$,
\begin{equation}\label{eq:def Ut}
	\Uvec_t \equals 
	\Bvec_\kappa -  A_{j}(t)\evec_{j},
\end{equation}
and note that $\|\Uvec_t\| \leq \|\Bvec_k\|$, for every $k$ and $t$. 
We have
\begin{equation}\label{eq:bound prob sum A 1/4}
	\begin{split}
		\Pr\Big(\big\|\Uvec_t\big\|  \,
		\ge\, 2\dd T\mumax \Big)\,
		&\le\, \frac{\Exp\big[\|\Uvec_t\|\big]}{2\dd T\mumax}\\
		&\le\, \frac{\sum_{\tau=\Tet_\kappa T}^{(\Tet_\kappa+\dd )T-1} \E{\|\Avec(\tau)\|}}{2\dd T\mumax}\\
		&\le\, \frac{\dd T \big(\lamax\big)} {2\dd T\mumax}\\
		&\leq\,\frac{1}2.
	\end{split}
\end{equation}
where the first step made use of  the Markov inequality, and the last step used the fact $\lamax\leq\mumax$.

For $t\in \big[\Tet_\kappa T, (\Tet_\kappa+\dd )T\big)$, let $\event_t$ be the event that $A_{j}(t)\in \big[(a_\kappa-\dd )T, a_\kappa T\big]$ and  $\|\Uvec_t\|  \,\le\, 2\dd T \mumax$. 
Note that the term $A_j(t)$ is omitted from $U_t$ and, therefore, $A_j(t)$ and $U_t$ are independent. Thus,
using \eqref{eq:prob A {s alpha}(t)} and \eqref{eq:bound prob sum A 1/4}, we obtain 
\begin{equation}\label{eq:prob event t}
	\Pr\big(\event_t\big)\,\ge\, \frac{\phi}{2 }\, T^{-({1+\gamma_{j}})}\log T, \quad \forall \ t \in  \big[\Tet_\kappa T, (\Tet_\kappa+\dd )T\big).
\end{equation}
In light of the definition of $\dd $ in \eqref{eq:delta'}, we have
$\dd (1+2\mumax)< a_\kappa$. Therefore,
\begin{equation}\label{eq:aux ineq}
	2\dd T\mumax < (a_\kappa-\dd )T.
\end{equation}
Thus, for any $t,\tau\in \big[\Tet_\kappa T, (\Tet_\kappa+\dd )T\big)$ with $\tau\ne t$, if $\|\Uvec_t\|  \,\le\, 2\dd T\mumax$, then
\begin{equation}\label{eq:akappa Askappa ineq}
	A_{j}(\tau) \,\le\, \|\Uvec_t\|\,\le\,  2\dd T\mumax \,<\, (a_\kappa-\dd )T.
\end{equation}
In the above, 
 the first inequality follows because if $\tau\neq t$, then $A_j(\tau)$ is one of the summands in the
definition \eqref{eq:def Ut} of $\Uvec_t$, and the last inequality is due to \eqref{eq:aux ineq}.
Consequently, if $\|\Uvec_t\|  \,\le\, 2\dd T\mumax$, then $\event_\tau$ holds for no $\tau\in \big[\Tet_\kappa T, (\Tet_\kappa+\dd )T\big)$ with $\tau\ne t$, i.e., for $t\ne\tau$, the events $\event_t$ and $\event_\tau$ are disjoint. 

We now claim  that  for any $t\in \big[\Tet_\kappa T, \Tet_\kappa T+\dd T\big)$, the event $\event_t$ implies the event $\event_\kappa^{\textrm{jump}}$, defined in~\eqref{eq:E-jump}. Indeed, 
when event $\event_t$ occurs, we obtain
\begin{equation}
	\begin{split}
	\big\| \Bvec_\kappa - a_\kappa T \evec_{j} \big\|  
	&\,=\,	\big\| \Uvec_t + A_{j}(t) \evec_{j}  -a_\kappa T \evec_{j}    \big\|\\
	&\,\le\,	\| \Uvec_t\| + \big\|A_{j}(t) \evec_{j}  -a_\kappa T \evec_{j}    \big\|\\
	&\,=\,\| \Uvec_t\| + |A_{j}(t)  -a_\kappa T\big|\\
	&\,\le\, 2\dd T \mumax + \dd T,
	\end{split}
\end{equation}
where the last inequality follows from the definition of $\event_t$. This shows that the event $\event_\kappa^{\textrm{jump}}$ occurs, as claimed. Let $\psi=\min\{1,\dd\phi/2\}$. We then have 
\begin{equation}
	\begin{split}
	\Pr\big(\event_\kappa^{\textrm{jump}}\big) 
	&\,\ge\, \Pr\left(\bigcup_{t\in [\Tet_\kappa T,\Tet_\kappa T +\dd  T)} \event_t\right)\\
	&\,=\,\sum_{t=\Tet_\kappa T}^{(\Tet_\kappa +\dd )T-1} \Pr(\event_t)\\
	&\,\ge\, \dd T\cdot  \frac{\phi T^{-({1+\gamma_{j}})} \log T}{2 } \\
	&\, \geq\, \psi T^{-\gamma_{j}}\log T,
	\end{split}
\end{equation}
where the first equality is because, for $t\ne\tau$, the events $\event_t$ and $\event_\tau$ are disjoint, the second inequality is due to \eqref{eq:prob event t}, and the last one uses the definition of $\psi$.
This completes the proof of Lemma~\ref{lem:bound episode jump}.

\medskip

\subsection{Proof of Lemma~\ref{lem:bound episode fluc}} \label{appsec:proof lem bound episode fluc}

This proof is similar to the proof of Lemma~\ref{lem:prob event fluc if} in Appendix~\ref{appsec:proof lem fluc if}, although some of the details are different.

Recall that $r$ is the number of piecewise constant pieces in the trajectory $\qvec^{\epsilon}(\cdot)$. 
We fix some $\kappa\in\{0,1,\ldots,n\}$  and let 
\begin{equation}\label{eq:def alpha  gam delta}
	\alpha\,\equals\, \frac{\gamma c}{32C \rr},
\end{equation}
and
\begin{equation}\label{eq:eta eta alpha}
	\eta\,\equals \, \frac{{8}\nn \big(6\nn \mumax+\alpha\big)}{3\alpha^2}.
\end{equation}

\begin{claim}\label{cl:t-two}
	There exists a  $\bar{T}_2\ge  8\nn$ such that for any $T\ge\bar{T}_2$, any $j$, and any $t \in \big[\Tet_\kappa T+\dd T,\Tet_{\kappa+1}T\big)$, we have 
	\begin{equation}\label{eq:M2 large reverse}
	 	\int_{T/\eta\log T}^{\infty} \Pr\big(A_j(t)\ge x\big)\,dx \, \le\, \frac{\alpha}{2 \sqrt{\nn}},
	\end{equation}
	and 
	\begin{equation}\label{eq:p1lt}
		\Pr\left(A_j(t)> \frac{T}{\eta\log T} \right)\le \frac{1}{4\nn T}.
	\end{equation}
\end{claim}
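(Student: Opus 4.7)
The plan is to compute both bounds directly from the explicit density given in Definition~\ref{def:episode}, exploiting only the tail decay $x^{-(2+\gamma_j)}\log(x+1)$ and uniform boundedness of $\bar\lambda_j(t)$.

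First I would derive a uniform tail estimate for $A_j(t)$: since $\bar\lambda_j(t)/\sigma(\gamma_j)$ is bounded above by an absolute constant $K$ (because $\|\bar\lamvec(t)-\lamvec^*\|\le\epsilon$ from \eqref{eq:lldiff} implies $\bar\lambda_j(t)\le \mumax$), integrating the density gives, for every $x\ge \mumax$,
\begin{equation*}
\Pr\big(A_j(t)\ge x\big)\,=\,\int_x^{\infty}\frac{\bar\lambda_j(t)}{\sigma(\gamma_j)}\,y^{-(2+\gamma_j)}\log(y+1)\,dy\,\le\, K'\,x^{-(1+\gamma_j)}\log(x+1),
\end{equation*}
for some constant $K'$ depending only on $\gamvec$, $\mumax$, and (via $\sigma$) the network parameters. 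This is the only probabilistic ingredient; everything else reduces to elementary calculus on logarithms and polynomials.

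For inequality~\eqref{eq:M2 large reverse} I would substitute the above tail bound and bound the integral by a further integration:
\begin{equation*}
\int_{T/\eta\log T}^{\infty}\Pr\big(A_j(t)\ge x\big)\,dx\,\le\,K'\int_{T/\eta\log T}^{\infty}x^{-(1+\gamma_j)}\log(x+1)\,dx\,\le\,K''\,\Big(\tfrac{T}{\eta\log T}\Big)^{-\gamma_j}\log T,
\end{equation*}
where the last step uses standard asymptotics for $\int_u^\infty x^{-(1+\gamma_j)}\log(x+1)\,dx$ as $u\to\infty$. Since $\gamma_j>0$, the right-hand side is $O\!\big(T^{-\gamma_j}(\log T)^{1+\gamma_j}\big)=o(1)$ as $T\to\infty$, so the bound $\alpha/(2\sqrt{\nn})$ is eventually achieved. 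Applying this for each of the finitely many $j$ and taking the maximum of the required thresholds yields a $T$-value beyond which \eqref{eq:M2 large reverse} holds for every $j$.

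For inequality~\eqref{eq:p1lt} I would evaluate the same tail bound at the single point $x=T/(\eta\log T)$:
\begin{equation*}
\Pr\Big(A_j(t)>\tfrac{T}{\eta\log T}\Big)\,\le\,K'\Big(\tfrac{T}{\eta\log T}\Big)^{-(1+\gamma_j)}\log\!\Big(\tfrac{T}{\eta\log T}+1\Big)\,=\,O\!\big(T^{-(1+\gamma_j)}(\log T)^{2+\gamma_j}\big).
\end{equation*}
Multiplying by $4\nn T$ gives $O\!\big(T^{-\gamma_j}(\log T)^{2+\gamma_j}\big)$, which again tends to $0$ because $\gamma_j>0$, so the inequality holds for all sufficiently large $T$. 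Taking $\bar T_2$ large enough to handle both bounds for every $j\in\{1,\dots,\nn\}$ and also to satisfy $\bar T_2\ge 8\nn$ completes the argument. I expect no serious obstacle: the only subtlety is checking that the constants $K,K',K''$ can be chosen independently of $t$ and $\kappa$, which follows from the uniform upper bound $\bar\lambda_j(t)\le \mumax$ valid on the whole episode.
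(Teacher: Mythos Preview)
Your proposal is correct and follows essentially the same route as the paper: both derive a uniform polynomial tail bound on $\Pr(A_j(t)\ge x)$ directly from the explicit density in Definition~\ref{def:episode} (using $\bar\lambda_j(t)\le\mumax$ for uniformity in $t$), and then observe that the two quantities in question vanish as $T\to\infty$. The only cosmetic difference is that the paper fixes a single $\delta\in(0,\min_j\gamma_j)$ and bounds the density by $\chi x^{-(2+\delta)}$, thereby absorbing the $\log(x+1)$ factor into a slightly weaker exponent, whereas you keep the exact exponent $\gamma_j$ and carry the logarithmic factor explicitly; both work since $\gamma_j>0$.
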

\begin{proof}
Since all $\gamma_j$ are positive, we can fix some $\delta$ such that $0<\delta<\gamma_j$, for every $j$. 
Then, the density $f_{A_j(t)} (\cdot)$ in Definition~\ref{def:episode} decays at least as fast as $x^{-(2+\delta)}$. More concretely, there exists a constant $\chi$ such that for all $j$ and $t$, we have 
$$f_{A_j(t)} (x) \leq \chi x^{-(2+\delta)}, \qquad \forall\ x\geq\mumax.$$
We then have,
	for any time $t\in \big[\Tet_\kappa T+\dd T,\Tet_{\kappa+1}T\big)$, any $y\ge 1$, and any $j$, 
	\begin{equation}\label{eq:cdf bound A}
			\Pr\big(A_j(t)\ge y\big)\, \leq\,\, \chi \int_{y}^\infty x^{-(2+\delta)}\, dx
			=\frac{\chi}{1+\delta}\cdot x^{-(1+\delta)}.
	\end{equation}
	 It then follows that as $T$ goes to infinity, both  of $T\, \Pr\left(A_j(\tau)> {T}/{\eta\log T} \right)$ \, and  $\int_{T/\eta\log T}^{\infty} \Pr\big(A_j(\tau)\ge x\big)\,dx$ go to zero, uniformly over all $j$ and $\tau$.  
	  Therefore, there exists a $\bar{T}_2$ such that for any $T\ge\bar{T}_2$, \eqref{eq:p1lt} and \eqref{eq:M2 large reverse} hold, and the claim follows.
\end{proof}

For the rest of the proof, we fix such a $\bar{T}_2$ and assume that $T\ge\bar{T}_2$.
For any $\tau\in[\Tet_\kappa T+\dd T,\Tet_{\kappa+1}T)$ and every $j$, let
\begin{equation}
	A_j^*(\tau) \,\equals\, \min\left\{A_j(\tau), \frac{T}{\eta \log T}\right\}.
\end{equation}
We define the ``no jumps'' event  $\event^=$ as follows:
\begin{equation}
	\event^= =\Big\{ \Avec(\tau) = \Avec^*(\tau), \quad\textrm{for all } \tau\in \big[\Tet_\kappa T+\dd T,\Tet_{\kappa+1}T\big)\Big\}.
\end{equation}
Then,
\begin{equation} \label{eq:pr event alpha}
	\begin{split}
		1-\Pr(\event^=)\,
		&\le\, \sum_{j=1}^\nn \sum_{\tau=\Tet_\kappa T+\dd T}^{\Tet_{\kappa+1}T}\Pr\left(A_j(\tau)> \frac{T}{\eta\log T} \right)\\
		&\le \sum_{j=1}^\nn \sum_{\tau=\Tet_\kappa T+\dd T}^{\Tet_{\kappa+1}T} \frac{1}{4 \nn T}\\
		&\le\, \frac14,
	\end{split}
\end{equation}
where the first inequality uses the union bound,  the second inequality follows from \eqref{eq:p1lt}, and the last inequality uses the fact that $\dd$, as defined in~\eqref{eq:delta'} is no larger than 1.
Moreover, for any $j$ and  any $\tau\in[\Tet_\kappa T+\dd T,\Tet_{\kappa+1}T)$, and using the same steps as in~\eqref{eq:truncate-tails},
$$	\E{A_j(\tau)}-\E{A_j^*(\tau)} \,
		\leq\, \int_{T/\eta\log T}^{\infty} \Pr\big(A_j(\tau)> x\big) \,dx\\
		\le\, \frac{\alpha}{2 \sqrt{\nn}},$$
where  the last inequality follows from \eqref{eq:M2 large reverse}.
Therefore, for any  $\tau\ge0$,
\begin{equation}\label{eq:AA* reverse}
	\big\|\E{\Avec(\tau)}-\E{\Avec^*(\tau)}\big\| \,\le\, \sqrt{\nn}\, \max_{j} \Big(\E{A_j(\tau)}-\E{A_j^*(\tau)}\Big) \,\le\, \frac{\alpha}{2}.
\end{equation}

Consider now the following  events: 
\begin{equation}
	\event^* =\Big\{ \ \Big\| \sum_{\tau=\Tet_\kappa T+\dd T}^{t} \big(\Avec^*(\tau)-\E{\Avec^*(\tau)}  \big) \Big\|  \le \frac{\alpha T}{2}, \quad \forall\ t\in \big[\Tet_\kappa T+\dd T,\Tet_{\kappa+1}T\big)\Big\},
\end{equation}
and for $j=1,\ldots,\nn$,
\begin{equation}
	\event^*_j =\Big\{ \ \Big| \sum_{\tau=\Tet_\kappa T+\dd T}^{t} \big(A^*_j(\tau)-\E{A^*_j(\tau)}  \big) \Big|\le \frac{\alpha T}{2\nn}, \quad\forall\  t \in \big[\Tet_\kappa T+\dd T,\Tet_{\kappa+1}T\big)\Big\}.
\end{equation}
Note that if all of the events $\event^*_1,\ldots,\event^*_\nn$ occur, then $\event^*$ also occurs.  
By applying the union bound to the complement of these events, we have 
\begin{equation}\label{eq:event event* reverse}
	1-\Pr\big(\event^*\big) \,\le\,\sum_{j=1}^\nn \big(1-\Pr(\event^*_j)  \big).
\end{equation}

Consider a sample path under which
 the events $\event^=$  and $\event^*$ occur. Then, for any $t\in \big[\Tet_\kappa T+\dd T,\Tet_{\kappa+1}T\big)$, 
\begin{equation}
	\begin{split}
		\Big\| \sum_{\tau=\Tet_\kappa T+\dd T}^{t}& \big(\Avec(\tau)-\E{\Avec(\tau)}  \big) \Big\|  
		\,=\,	\Big\| \sum_{\tau=\Tet_\kappa T+\dd T}^{t} \big(\Avec^*(\tau)-\E{\Avec(\tau)}  \big) 
		\Big\| \\
		&\,=\,  \Big\| \sum_{\tau=\Tet_\kappa T+\dd T}^{t} \big(\Avec^*(\tau)-\E{\Avec^*(\tau)}  \big)\, + \, \sum_{\tau=\Tet_\kappa T+\dd T}^{t} \big(\E{\Avec^*(\tau)} - \E{\Avec(\tau)} \big) \Big\|\\
		&\,\le\, \Big\| \sum_{\tau=\Tet_\kappa T+\dd T}^{t} \big(\Avec^*(\tau)-\E{\Avec^*(\tau)}  \big)\Big\| \, + \, \sum_{\tau=\Tet_\kappa T+\dd T}^{t} \big\|\E{\Avec^*(\tau)} - \E{\Avec(\tau)} \big\|\\
		&\,\le\, \Big\| \sum_{\tau=\Tet_\kappa T +\dd T}^{t} \big(\Avec^*(\tau)-\E{\Avec^*(\tau)}  \big)\Big\|  \,+\, \frac{\alpha T}{2}\\
		&\,\le\, \frac{\alpha T}{2} \,+\, \frac{\alpha T}{2}\\
		&\,=\, \frac{\gamma c T}{32C \rr},
	\end{split}
\end{equation}
where the first equality is due to $\event^=$,  the second inequality is due to \eqref{eq:AA* reverse},  the third inequality follows from $\event^*$, and the last equality is from the definition of $\alpha$ in \eqref{eq:def alpha  gam delta}.
Therefore,  the events $\event^=$ and $\event^*$ imply the event $\event^{\textrm{fluc}}_\kappa$. 
Using the union bound on the complements of these events,
\begin{equation}\label{eq:p fluc * *j reverse}
1-
\Pr\left( \event^{\textrm{fluc}}_\kappa \right)  \,\leq\, \big(1-\Pr\left( \event^= \right)\big) + \big(1-\Pr\left( \event^* \right)\big) \,
\leq\, \frac14+
\sum_{j=1}^\nn \big(1-\Pr(\event^*_j)  \big),
\end{equation}
where the last inequality follows from \eqref{eq:pr event alpha} and \eqref{eq:event event* reverse}. 
To complete the proof, we develop
an upper bound on $1-\Pr(\event^*_j)$. 

For any $\tau\ge 0$ and every $j$, and as in~\eqref{eq:bound max A*j}, we have 
\begin{equation}\label{eq:bound max A*j reverse}
	\E{A^*_j(\tau)} \, \le\, \E{A_j(\tau)} \,\le\, \big\|\E{\Avec(\tau)}\big\|  \,\le\, \mumax.
\end{equation}
For any $t\in \big[\Tet_\kappa T+\dd T,\Tet_{\kappa+1}T\big)$,  we have
\begin{equation}\label{eq:prob event*j single reverse}
	\begin{split}
		\Pr&\left(\Big| \sum_{\tau=\Tet_\kappa T+\dd T}^{t} \big(A^*_j(\tau)-\E{A^*_j(\tau)}  \big) \Big|> \frac{\alpha T}{2\nn} \right)\\
		\,&\qquad\le\, 2\exp\left(-  \frac{\left(\alpha T/2\nn\right)^2}{ 2 \frac{T}{\eta \log T} \cdot \big( \mumax(t-\Tet_\kappa T-\dd T+1) + \alpha T / 6  \nn\big)}  \right)\\
		\,&\qquad\le\, 2\exp\left(  - \frac{3\alpha^2 \eta \log T}{4\nn \big(6\nn \mumax+\alpha\big)} \right)\\
		\,&\qquad=\, 2\exp\big(  -2 \log T \big)\\
		&\qquad=\, 2 T^{-2},
	\end{split}
\end{equation}
where the first inequality follows from Lemma~\ref{lem:Bernstein if} with the identifications $z=\alpha T/2\nn$, $b=T/\eta \log T$, $\lambarmax = \mumax$ (see \eqref{eq:bound max A*j reverse}), and $n=t-\Tet_\kappa T-\dd T+1$; the second inequality is because $n\leq T$; the equality follows from  the definition of $\eta$ in~\eqref{eq:eta eta alpha}.
Therefore, for every $j$,
\begin{equation}\label{eq:1-pe*j reverse}
	\begin{split}
		1-\Pr(\event^*_j) \,&\le\, \sum_{t=\Tet_\kappa T+\dd T}^T  \Pr\left(\Big| \sum_{\tau=\Tet_\kappa T+\dd T}^{t} \big(A^*_j(\tau)-\E{A^*_j(\tau)}  \big) \Big|> \frac{\alpha T}{2\nn}\right)\\ 
		&\le\, \sum_{t=1 }^T  \frac{2}{T^2 }\\ 
		&=\,  2T^{-1}\\
		&\le\,  \frac{1}{4\nn},
	\end{split}
\end{equation}
where the first inequality is from the union bound, the second inequality is due to \eqref{eq:prob event*j single reverse}, and the last inequality is from the condition  $T\ge \bar{T}_2\ge 8\nn$ in Claim~\ref{cl:t-two}.
Plugging \eqref{eq:1-pe*j reverse} into \eqref{eq:p fluc * *j reverse}, we obtain $\Pr\left( \event^{\textrm{fluc}}_\kappa \right) \ge 1/2$. This completes the proof of Lemma~\ref{lem:bound episode fluc}.


\medskip

\subsection{Proof of Lemma~\ref{lem:bound episode deterministic}} \label{appsec:proof lem episode deterministic}
The proof of this lemma is essentially a continuity argument, together with an induction that pieces together  different time segments.

For simplicity, and without loss of generality, we assume that 
the constant $t_0$ in the statement of the lemma is equal to zero.
Note, however, that with this convention $\Qvec(0)$ will in generally be nonzero.

For any $\lamvec,\xvec\in\R_+^\nn$, let
\begin{equation}\label{eq:def xi lambda}
	\xivec_{\lamvec}(\xvec) \equals \dot{\qvec}(0 ),
\end{equation}
where $\qvec(\cdot)$ is the fluid trajectory corresponding to arrival rate $\lamvec$ and initialized with $\qvec(0)=\xvec$. In view of \eqref{eq:fluid evolution}, we have $\xivec_{\lamvec}(\xvec)\in\Drift_\lamvec(\xvec)$. From the definition~\eqref{eq:def S lambda} of   the set  $\Drift_\lamvec(\xvec)$ of possible drifts, 
and the standing assumption $\big\|\lamvec(\tau)-\lamvec^*\big\| \leq \epsilon$, for all $\tau$,
we have 
\begin{equation}\label{eq:ugly drift bound}
	\big\|\xivec_{\lamvec(\tau)}\big( \qvec^{\epsilon}(\tau)\big)\big\|\, \,\le\,  \mumax,\qquad \forall \ \tau\geq 0,
\end{equation}
where $\mumax$, was defined in~\eqref{eq:mumax2}.
By taking into account the jump $a_k\evec_{j_k}$ at time $\Theta_k$ and then integrating the drift $\xivec_{\lamvec(\tau)}$, over the jump-free interval  $[\Theta_k,\Theta_k+d]$, we have
$$
\big\|\qvec^{\epsilon}(\Tet_\kappa+\dd ) - \qvec^{\epsilon}(\Tet_\kappa)\big\|\,\le\, \dd\mumax.
$$
Noting also that $\qvec^{\epsilon}(\Theta_k)=\qvec^{\epsilon}(\Theta_k^-)+
a_k\evec_{j_k}$, we obtain	
\begin{equation}\label{eq:del q bound over del period}
\big\|\qvec^{\epsilon}(\Tet_\kappa+\dd ) - \qvec^{\epsilon}(\Tet_\kappa^-) - a_k\evec_{j_k}\big\|\,\le\, \dd\mumax.
\end{equation}

For $\tau\ge 0$, let 
$\bar{\muvec}^{a}(\tau)\equals \min\big\{\muvec(\tau),\Qvec(\tau)\big\}$, where
the minimum is taken componentwise; thus, $\muvec^{a}(\tau)$ is
the actual service received at time $\tau$.
It then follows from \eqref{eq:evolution MW} that 
\begin{equation}\label{eq:Q mubar A}
	\Qvec(\tau+1)=\Qvec(\tau)-\muvec^{a}(\tau)+\Avec(\tau).
\end{equation}

We start by considering the intervals $[\Theta_kT, (\Theta_k+d)T]$ associated with jumps, for $\kappa=1,\ldots,n$.  We are working with a sample path for which the event   $\event_\kappa^{\textrm{jump}}$ occurs,  Therefore,
\begin{equation} \label{eq:consequence of event t}
	\begin{split}
		\big\|\Qvec\big((&\Tet_\kappa+\dd )T\big) - \big(\Qvec(\Tet_\kappa T)+Ta_\kappa \evec_{j_\kappa}\big)\big\| \\
		&= \, \Big\|\Big(\Qvec(\Tet_\kappa T) \,+\,\sum_{\tau=\Tet_\kappa T}^{\Tet_\kappa T+\dd T-1} \big(\Avec(\tau)-\muvec^{a}(\tau)\big) \Big) - \big(\Qvec(\Tet_\kappa T)+Ta_\kappa \evec_{j_\kappa}\big)\Big\| \\
		&\le\, \Big\|\Big(\sum_{\tau=\Tet_\kappa T}^{\Tet_\kappa T+\dd T-1} \Avec(\tau) \Big) - Ta_\kappa \evec_{j_\kappa}\Big\| \,+\,
		\Big\|\sum_{\tau=\Tet_\kappa T}^{\Tet_\kappa T+\dd T-1}\muvec^{a}(\tau) \Big\|
		\\
		&\ \leq \dd T(1+2 \mumax) + \dd T \mumax\\
	&\ = \dd T(1+3 \mumax), 
	\end{split}
\end{equation}
where the first equality is due to \eqref{eq:Q mubar A}. The second inequality follows from 
the definition \eqref{eq:E-jump} of $\event_\kappa^{\textrm{jump}}$ and the fact $\|\muvec^{a}(\tau)\|\le\|\muvec(\tau)\|\le \mumax$, for all $\tau\ge0$.

Combining \eqref{eq:consequence of event t} with \eqref{eq:del q bound over del period}, it follows that for $\kappa=1,\ldots,n$,
\begin{equation*}
	\begin{split}
		\big\|\Qvec&\big((\Tet_\kappa+\dd )T\big) - T\qvec^{\epsilon}(\Tet_\kappa+\dd )\big\|\\
		& \,\le\, \, \big\|\Qvec\big((\Tet_\kappa+\dd )T\big) - \big(\Qvec(\Tet_\kappa T)+Ta_\kappa \evec_{j_\kappa}\big)\big\| \\
		&	\qquad\, +\,
		\big\|\big(\Qvec(\Tet_\kappa T)+Ta_\kappa \evec_{j_\kappa}\big) - T\big(\qvec^{\epsilon}(\Tet_\kappa^-)+ a_\kappa \evec_{j_\kappa}\big)\big\| \\
		&	\qquad	\, +\, \big\| T\big(\qvec^{\epsilon}(\Tet_\kappa^-)+ a_\kappa \evec_{j_\kappa}\big)-T\qvec^{\epsilon}(\Tet_\kappa+\dd ) \big\| \\
		&\le\,   \big\|\Qvec\big((\Tet_\kappa+\dd )T\big) - \big(\Qvec(\Tet_\kappa T)+Ta_\kappa \evec_{j_\kappa}\big)\big\| \,+\, \big\|\Qvec\big(\Tet_\kappa T\big) - T\qvec^{\epsilon}(\Tet_\kappa^-)\big\| 
		\,+\, \dd T \mumax\\
		&\le\, \dd T(1+3\mumax)\,+\,
		 \big\|\Qvec\big(\Tet_\kappa T\big) - T\qvec^{\epsilon}(\Tet_\kappa^-)\big\|  \,+\, \, \dd T \mumax\\
		&=\, \big\|\Qvec\big(\Tet_\kappa T\big) - T\qvec^{\epsilon}(\Tet_\kappa^-)\big\|  \,+\, \dd T(1+4\mumax),
	\end{split}
\end{equation*}
where the second and third inequalities are due to \eqref{eq:del q bound over del period} and \eqref{eq:consequence of event t}, respectively.
Using the definition \eqref{eq:delta'} of $\dd$, we conclude that
\begin{equation}\label{eq:big tri eq}
\big\|\Qvec\big((\Tet_\kappa+\dd )T\big) - T\qvec^{\epsilon}(\Tet_\kappa+\dd )\big\|
\leq 
\big\|\Qvec\big(\Tet_\kappa T\big) - T\qvec^{\epsilon}(\Tet_\kappa^-)\big\|  \,+\, \frac{\gamma c T}{8}.
\end{equation}

We have so far established that if the two trajectories 
$\Qvec(\cdot)$ and $\qvec^{\epsilon}(\cdot)$ are close at the beginning of an interval 
$[\Theta_k T,(\Theta_k+d)T]$, they are also close at the end. We now need to establish a similar conclusion over intervals of the form $[(\Theta_k+d)T, \Theta_{k+1}T]$. We wish to capitalize on Theorem~\ref{th:sensitivity}. That result however refers to fluid models with  constant arrival rates. In contrast, our stochastic process has a piecewise constant arrival rate and the same holds for the associated JF trajectory. We deal with this issue by applying Theorem~\ref{th:sensitivity} repeatedly, for each one of the piecewise constant segments. 

Let us fix some $k\in\{1,\ldots,n\}$ and recall that $\rr$ is an upper bound on the number of subintervals in $[\Tet_\kappa +\dd ,\Tet_{\kappa+1})$ during which $\lamvec(\cdot)$  stays constant. 
Let us then fix some times $\theta_i$, for $i=1,\ldots,\rr+1$, such that 
$$\Theta_k +\dd = \theta_1<\cdots< \theta_\rr< \theta_{\rr+1} =\Theta_{k+1},$$
and such that  $\lamvec(\cdot)$ is constant during each one of the intervals
$(\theta_i,\theta_{i+1})$, for $i=1,\ldots,\rr$.

Under our assumption that the sample path satisfies the event
$\event_\kappa^{\textrm{fluc}}$, we see that during each one of the intervals
$[\theta_iT,\theta_{i+1}T]$, and for $i=1\ldots, \rr$, we have 
\begin{equation}\label{eq:max to max}
	\begin{split}
		\max_{t\in[\tet_i  T,\tet_{i+1}T)}   \Big\| \sum_{\tau=\tet_i T}^{t}\big(\Avec(\tau) -\bar{\lamvec}(\tau)\big)\Big\|
		\,&\le\,2\max_{t\in[\Tet_\kappa T+\dd T,\Tet_{\kappa+1}T)}   \Big\| \sum_{\tau=\Tet_\kappa T+\dd T}^{t}\big(\Avec(\tau) -\bar{\lamvec}(\tau)\big)\Big\|\\
		\,&\le\, \frac{\gamma c T}{16C \rr }.
	\end{split}
\end{equation}

We now note  that the function $\tilde\qvec^{\epsilon}(\cdot)$ defined by $\tilde\qvec^{\epsilon}(t)=T\qvec^{\epsilon}(t/T)$ is also an $\epsilon$-JF trajectory and, in particular, is a fluid trajectory during each interval $[\theta_iT,\theta_{i+1}T)$.  
We apply Theorem~\ref{th:sensitivity} over this interval.
Using also the fact that $1+\big\| \lamvec(\tau)\big\| \leq \mumax$, we obtain
\begin{equation}\label{eq:small-int}
\big\| \Qvec(\theta_{i+1}T)-T\qvec^{\epsilon}(\theta_{i+1}^{-})\|
\leq \big\| \Qvec(\theta_{i}T)-T\qvec^{\epsilon}(\theta_{i}^{-})\big\| + C\mumax 
+C \cdot\frac{\gamma c T}{16C \rr}.
\end{equation}
By summing such inequalities, for $i=1,\ldots,\rr$, 
and using the facts $\theta_{1}=\Theta_{k}+d$, $\theta_{r+1}=\Theta_{k+1}$, and
$\qvec^{\epsilon}\big((\Theta_k+\dd)^-\big) = 
\qvec^{\epsilon}(\Theta_k+\dd)$,  
we obtain
\begin{equation}\label{eq:bigger-int}
\big\| \Qvec(\Theta_{k+1}T)-T\qvec^{\epsilon}(\Theta_{k+1}^-)\big\|
\leq \big\|\Qvec((\Theta_k+\dd) T)-T\qvec^{\epsilon}(\Theta_k+\dd)\big\| + 
rC\mumax +
\frac{\gamma c T}{16 }.
\end{equation}

We now add \eqref{eq:big tri eq} and \eqref{eq:bigger-int} to obtain 
$$\big\| \Qvec(\Theta_{k+1}T)-T\qvec^{\epsilon}(\Theta_{k+1}^-)\big\|
\leq \big\|\Qvec(\Theta_k T)-T\qvec^{\epsilon}(\Theta_k^-)\big\| + rC \mumax+
\frac{3\gamma c T}{16}.$$
We finally sum these bounds, for $k=1,\ldots,n$, and use  the fact 
$\qvec^{\epsilon}(\Theta_{n+1}^-) =\qvec^{\epsilon}(1^-)$, 
to conclude that
\begin{equation}\label{eq:all-times}
\begin{split}
\big\| \Qvec(T)-T\qvec^{\epsilon}(1^-)\big\|
&\leq\, 
\big\|\Qvec(\Theta_1 T)-T\qvec^{\epsilon}(\Theta_1^-)\big\| + nrC \mumax+
\frac{3n\gamma c T}{16}
\\
&\leq \, \big\|\Qvec(\Theta_1 T)-T\qvec^{\epsilon}(\Theta_1^-)\big\| +  nrC \mumax+
\frac{3c T}{16},
\end{split}
\end{equation}
where we also made use of the property $n\gamma\leq \gamvec^T\nvec\leq 1$.

The interval $[0,\Theta_{1})$ is to be treated a little different, as $\Theta_0=0$ is not a jump time. Even so, the argument leading to \eqref{eq:bigger-int} applies verbatim and shows that 
$$\big\| \Qvec(\Theta_{1}T)-T\qvec^{\epsilon}(\Theta_{1}^-)\big\|
\leq \big\|\Qvec(0)-T\qvec^{\epsilon}(0)\big\| + rC\mumax +
\frac{\gamma c T}{16 }\leq \big\|\Qvec(0)\big\|+rC\mumax+\frac{ c T}{16 },
$$
where we made use of the fact that the fluid trajectory is intialized at zero, and the inequality $\gamma\leq 1$.
Combining with~\eqref{eq:all-times}, 
 we finally conclude that
$$\big\| \Qvec(T)-T\qvec^{\epsilon}(1)\big\|
\leq \big\|\Qvec(0)\big\|+
(n+1)rC\mumax+
\frac{ c T}{4 }.$$
We now let $\bar{T}_3$ be large enough so that, for any $T\geq \bar{T}_3$, we have
$(n+1)rC\mumax+(cT/4)  +(cT/5)  \leq cT/2$. 
As long as $T\ge \bar{T}_3$, and using   the assumption $\| \Qvec(0) \|\le cT/5$, we obtain
\begin{equation*}
		\big\| \Qvec(T)-T\qvec^{\epsilon}(1)\big\| \leq    \frac{ c T}{5} + (n+1)rC\mumax+
		\frac{ c T}{4 }  \leq  \frac{ c T}{2 } .
\end{equation*}
Finally, using $c=\qvec^\epsilon_m(1)$, 
we have
\begin{equation}
	\begin{split}
		Q_m(T)\,&=\, T\qvec^\epsilon_m(1)\,-\, \big(T\qvec^\epsilon_m(1)-Q_m(T)\big)\\
		&\ge\,T\qvec^\epsilon_m(1)\,-\, \big\|T\qvec^{\epsilon}(1)-\Qvec(T)\big\|\\
		&\ge\,T\qvec^\epsilon_m(1)\,-\,  \frac{{c} T }{2}\\
		&=\,{c} T\,-\, \frac{{c} T }{2}\\
		&=\,\frac{{c} T }{2},
	\end{split}
\end{equation}
This completes the proof of Lemma~\ref{lem:bound episode deterministic}.

\ifSIAM
\else 
\end{APPENDICES}

\fi

\end{document}
